\documentclass[onecolumn]{autart}    

\PassOptionsToPackage{svgnames}{xcolor}

\usepackage{titlesec}

\usepackage[numbers,sort&compress]{natbib}
\usepackage{graphicx}
\graphicspath{ {Images/} }

\usepackage{amsmath}

\usepackage{float}           
\usepackage{caption}
\usepackage{subcaption}
\usepackage{setspace}
\usepackage{tocloft}

\usepackage{bigints}
\usepackage{amsfonts}
\usepackage{amssymb}
\usepackage{amsbsy}
\usepackage{mathtools, nccmath, textcomp}
\usepackage{multirow}

\usepackage{enumitem}
\usepackage{booktabs}

\usepackage{tabularx}
\usepackage{xcolor}
\usepackage{tikz}
\usetikzlibrary{positioning}
\usetikzlibrary{shapes.geometric,arrows.meta,chains,positioning,shapes.symbols}
\usetikzlibrary{shadings,shadows}

\usepackage{fixmath}
\usepackage{bm}
\usepackage{longtable}
\usepackage{soul}
\usepackage{lmodern}
\usepackage[T1]{fontenc}
\usepackage{multicol}

\usepackage{tcolorbox}
\tcbuselibrary{skins,breakable}

\usepackage{bbm}

\usepackage{hyperref}
\hypersetup{colorlinks=true,
	linkcolor=blue,
	citecolor=blue,
	urlcolor=blue,
}

\font\f=cmbsy7 at2.5pt
\def\ointp{\mathop
   {\oint \kern-5.55pt\raise1.8pt\hbox{%
    \rlap{\f\char"5E}\kern.1pt\rlap{\f\char"5E}} \hspace{1mm}  }\limits
    }
    
\font\f=cmbsy7 at2.5pt
\def\ointn{\mathop
   {\oint \kern-5.45pt\raise1.8pt\hbox{%
    \rlap{\f\char"5F}\kern.1pt\rlap{\f\char"5F}} \hspace{1mm}  }\limits
    }

    {\endtcolorbox}

    {\endtcolorbox}

    {\endtcolorbox}

\makeatletter
\newlength\xvec@height%
\newlength\xvec@depth%
\newlength\xvec@width%
\newcommand{\xvec}[2][]{%
  \ifmmode%
    \settoheight{\xvec@height}{$#2$}%
    \settodepth{\xvec@depth}{$#2$}%
    \settowidth{\xvec@width}{$#2$}%
  \else%
    \settoheight{\xvec@height}{#2}%
    \settodepth{\xvec@depth}{#2}%
    \settowidth{\xvec@width}{#2}%
  \fi%
  \def\xvec@arg{#1}%
  \def\xvec@dd{:}%
  \def\xvec@d{.}%
  \raisebox{.2ex}{\raisebox{\xvec@height}{\rlap{%
    \kern.05em
    \begin{tikzpicture}[scale=1]
    \pgfsetroundcap
    \draw (.05em,0)--(\xvec@width-.05em,0);
    \draw (\xvec@width-.05em,0)--(\xvec@width-.15em, .075em);
    \draw (\xvec@width-.05em,0)--(\xvec@width-.15em,-.075em);
    \ifx\xvec@arg\xvec@d%
      \fill(\xvec@width*.45,.5ex) circle (.5pt);%
    \else\ifx\xvec@arg\xvec@dd%
      \fill(\xvec@width*.30,.5ex) circle (.5pt);%
      \fill(\xvec@width*.65,.5ex) circle (.5pt);%
    \fi\fi%
    \end{tikzpicture}%
  }}}%
  #2%
}
\makeatother

\def\onedot{$\mathsurround0pt\ldotp$}
\def\cdddot#1{
  \mathbin{\vcenter{\baselineskip.67ex
    \hbox{\onedot}\hbox{\onedot}\hbox{\onedot}%
  }}%
}

\makeatletter
\def\underbracex#1#2{\mathop{\vtop{\m@th\ialign{##\crcr
   $\hfil\displaystyle{#2}\hfil$\crcr
   \noalign{\kern3\p@\nointerlineskip}%
   #1\crcr\noalign{\kern3\p@}}}}\limits}

\def\upbracefilla{$\m@th \setbox\z@\hbox{$\braceld$}%
  \bracelu\leaders\vrule \@height\ht\z@ \@depth\z@\hfill 
\kern\p@\vrule \@width\p@\kern\p@\vrule \@width\p@\kern\p@\vrule \@width\p@
$}

\def\upbracefillb{$\m@th \setbox\z@\hbox{$\braceld$}%
\vrule \@width\p@\kern\p@\vrule \@width\p@\kern\p@\vrule \@width\p@\kern\p@
 \leaders\vrule \@height\ht\z@ \@depth\z@\hfill\bracerd
  \braceld\leaders\vrule \@height\ht\z@ \@depth\z@\hfill
\kern\p@\vrule \@width\p@\kern\p@\vrule \@width\p@\kern\p@\vrule \@width\p@
$}

\def\upbracefillc{$\m@th \setbox\z@\hbox{$\braceld$}%
\vrule \@width\p@\kern\p@\vrule \@width\p@\kern\p@\vrule \@width\p@\kern\p@
\leaders\vrule \@height\ht\z@ \@depth\z@\hfill
\kern\p@\vrule \@width\p@\kern\p@\vrule \@width\p@\kern\p@\vrule \@width\p@
$}

\def\upbracefilld{$\m@th \setbox\z@\hbox{$\braceld$}%
\vrule \@width\p@\kern\p@\vrule \@width\p@\kern\p@\vrule \@width\p@\kern\p@
 \leaders\vrule \@height\ht\z@ \@depth\z@\hfill\braceru$}

\def\upbracefillbd{$\m@th \setbox\z@\hbox{$\braceld$}%
\vrule \@width\p@\kern\p@\vrule \@width\p@\kern\p@\vrule \@width\p@\kern\p@
\bracerd\braceld
 \leaders\vrule \@height\ht\z@ \@depth\z@\hfill\braceru$}

\makeatother

\renewcommand{\vec}[1]{\xvec[]{#1}}

\newcommand{\HP}{{\scalebox{.65}{\stx{\!H\!P}}}}
\newcommand{\HW}{{\scalebox{.65}{\stx{\!H\!W}}}}

\newcommand*{\B}[1]{\ifmmode\bm{#1}\else\textbf{#1}\fi}

\newcommand{\ub}[2]{\underbrace{#1}_{#2}}

\newcommand{\p}{\partial}
\newcommand{\var}{\delta}
\newcommand{\dint}{\displaystyle\int}

\newcommand{\stx}[1]{\mbox{\scriptsize$#1$}}

\newcommand{\R}[1]{\mathbb{R}^{#1}}

\newcommand{\F}{\mathcal{F}}

\newcommand{\X}{\mathtt{X}}

\newcommand{\SX}{\texttt{S}}

\newcommand{\q}{\epsilon}

\theoremstyle{definition} 

\theoremstyle{definition} 

\theoremstyle{definition} 
\newtheorem{definition}{\normalfont\textbf{Definition}}
\theoremstyle{definition} 
\newtheorem{theorem}{\normalfont\textbf{Theorem}}
\theoremstyle{definition} 
\newtheorem{lemma}{\normalfont\textbf{Lemma}}
\theoremstyle{definition} 
\newtheorem{proposition}{\normalfont\textbf{Proposition}}
\theoremstyle{definition} 
\newtheorem{corollary}{\normalfont\textbf{Corollary}}
\theoremstyle{definition} 
\newtheorem{remark}{\normalfont\textbf{Remark}}

\newenvironment{proof}[1][Proof]{\textbf{#1.} }{\ \rule{0.5em}{0.5em}}

\newcommand{\uu}{\textbf{u}}
\newcommand{\XX}{\textbf{X}}
\newcommand{\SXX}{\textbf{S}}
\newcommand{\xx}{\textbf{x}}
\newcommand{\ttt}{\textbf{t}}
\newcommand{\bb}{\textbf{b}}
\newcommand{\FF}{\underline{\textbf{F}}}
\newcommand{\SSS}{\underline{\textbf{T}}}
\newcommand{\EEE}{\underline{\textbf{E}}}

\begin{document}

\begin{frontmatter}

\title{Strong imposition of Dirichlet boundary velocities in structure-preserving discretizations of elastodynamics}
\thanks[footnoteinfo]{Corresponding author: C.Ponce. E-mail: cristobal.ponces@usm.cl}
\author[First]{Cristobal Ponce}$^{,\star}$, 
\author[First]{Hector Ramirez}, 
\author[Second]{Yongxin Wu}, 
\author[Second]{Ning Liu}, 
\author[Second]{Yann Le Gorrec} 

\address[First]{Universidad Tecnica Federico Santa Maria, Valparaiso, Chile. \\
Department of Mechanical Engineering (e-mail: cristobal.ponces@usm.cl) \\
Department of Electronic Engineering (e-mail: hector.ramireze@usm.cl)}  

\address[Second]{Universite Marie et Louis Pasteur, SUPMICROTECH, CNRS, Institut FEMTO-ST, Besancon, France. \\  (e-mails: yongxin.wu@femto-st.fr, legorrec@femto-st.fr, ning.liu@femto-st.fr)}            

\begin{abstract}                          
The imposition of boundary velocities in finite element models of port-Hamiltonian elastodynamics typically relies on Lagrange multipliers, yielding Differential-Algebraic Equations (DAEs). Alternatively, weak imposition methods that maintain an Ordinary Differential Equation (ODE) structure often exhibit poor accuracy at Dirichlet boundaries. To address these limitations, this paper introduces an additive kinematic decomposition at the continuous level, splitting the displacement and velocity fields into a relative dynamic component that vanishes on the boundary and a prescribed lifting function extending into the interior domain. This decomposition induces a distributed port that maps the effects of the boundary actuation inside the domain. By incorporating this mapping into suitable virtual power principles, we derive \textit{lifted} port-Hamiltonian system (PHS) models that, upon finite element discretization, reduce to ODE {systems in which} Dirichlet boundary velocities are strongly imposed. The framework is applied to derive 2-field and 4-field formulations suited to distinct PHS geometric representations. Furthermore, we show that under specific shape functions, standard FEM schemes are recovered, demonstrating that the lifting framework in the discrete models is equivalent to the classic algebraic matrix partitioning in computational mechanics practice.  The energy-balance properties and computational performance of the proposed methodology are verified through numerical simulations.
\end{abstract}

\begin{keyword}
Port-Hamiltonian systems \sep Structure-preserving discretization \sep Finite element method \sep Elastodynamics \sep Dirichlet boundary conditions \sep Kinematic lifting.
\end{keyword}

\end{frontmatter}

\endNoHyper

\section{Introduction}\label{sec:Introduction}

The simulation of elastodynamic phenomena is used in engineering applications such as wave propagation, vibroacoustics, and flexible robotics \cite{liu2026elastodynamic,banerjee2013large,omisore2020review}. Classical spatial discretization techniques based on the finite element method (FEM) rely on Lagrangian frameworks and second-order differential equations to model these continuous systems \cite{bathe2006finite,belytschko2014nonlinear}. These standard formulations are typically derived from virtual work principles where the system state is defined by work-conjugate variables; consequently, boundary conditions are enforced as prescribed boundary displacements and tractions \cite{reddy2017energy}. Alternatively, first-order Hamiltonian representations can be obtained by applying Legendre transformations to the Lagrangian system \cite{simo1988hamiltonian,safko2002classical,yao2009symplectic}. While classical Hamiltonian mechanics describes closed systems and still relies on boundary displacements to satisfy variational principles \cite{sanchez2021symplectic}, opening them to evaluate the energy exchange with the environment or other physical systems requires analyzing the power flow through the boundaries \cite{van2002hamiltonian}. This motivates a shift towards describing boundary interactions using power-conjugate variables, i.e., velocities and tractions.

Hamiltonian systems have been generalized to open physical systems through the addition of power-conjugate ports to enable the exchange of energy with other systems and the environment. This generalization has been established in both finite- and infinite-dimensional systems and defines the port-Hamiltonian system (PHS) framework \cite{maschke1993port,van2002hamiltonian}. PHS models explicitly separate the interconnection topology from the constitutive relations and the dissipative phenomena. This structural separation guarantees clear power balances and passivity. These physical properties make the PHS framework suitable for multiphysics modeling, energy-consistent simulation, and energy-based control \cite{duindam2009modeling}. The PHS formalism has been widely applied to formulate electrical \cite{bartel2022port,abdolmaleki2022distributed,gernandt2024port}, mechanical \cite{macchelli2009port,ferguson2024port,brugnoli2021portA,latussek2026port}, fluid \cite{cisneros2020port,mora2021port,califano2021geometric,cardoso2024port}, thermodynamic \cite{eberard2004port,jaschke2022two,ramirez2022overview}, and multiphysics systems \cite{cardoso2015modeling,rizzello2017thermodynamically,brugnoli2021port,cisneros2025dynamic,rashad2025port}, among others. In the domain of elastodynamics, infinite-dimensional PHS models have been defined on different geometric structures. These formulations were originally established on Stokes-Dirac structures \cite{van2002hamiltonian} and subsequently extended to jet-bundle structures \cite{nishida2005formal,schoberl2014jet,schoberl2015port}. Furthermore, the framework has been adapted for systems with algebraic constraints, leading to descriptor PHS formulations governed by differential-algebraic equations (DAEs), commonly referred to as PH-DAE systems \cite{van2013port,beattie2018linear}. Recently, PHS and PH-DAE have been defined on Stokes-Lagrange structures, which combine Stokes-Dirac structures for their conservation properties with Lagrangian subspaces to enable the definition of systems with implicitly defined energy \cite{van2018generalized,maschke2020linear,maschke2023linear}. These structures facilitate the formulation of generalized PHS models capable of accommodating differential and nonlocal constitutive relations \cite{bendimerad2026structure}, as well as diverse algebraic constraints defined in both the interconnection and energy structures \cite{van2020dirac,bendimerad2025stokes}. For a comprehensive overview of the PHS framework, including modeling, spatial discretization techniques (such as finite differences, spectral methods, finite elements), model order reduction, and control design, the reader is referred to \cite{van2014port,rashad2020twenty,cardoso2024port}.

To use infinite-dimensional models for simulation and energy-based control, finite-dimensional approximations that preserve the underlying geometric structure are preferred. Structure-preserving finite element methods have been developed to preserve the PHS structure and maintain passivity at the discrete level, addressing linear and nonlinear phenomena \cite{brugnoli2020partitioned, thoma2022explicit,thoma2024velocity, kinon2024generalized, ponce2024structure}. Despite these developments, simultaneously preserving the PHS structure at the semi-discrete level while ensuring accurate behavior at Dirichlet boundaries, remains a challenge. In classical Galerkin finite element methods, non-homogeneous Dirichlet boundary conditions are addressed at the continuous level through lifting operators \cite[Ch. 3.3.3]{quarteroni2009numerical}. This approach decomposes the field into an unknown relative component that satisfies homogeneous conditions, and a known lifting function distributed over the domain that satisfies the original non-homogeneous conditions. Supported by Sobolev trace theorems \cite[Ch. 2.4.3]{quarteroni2009numerical}, this decomposition extend the prescribed boundary values into the interior domain. This homogenizes the Dirichlet boundary conditions, allowing the variational problem to be solved over a subspace of test functions that vanish at the boundary. At the discrete level, this translates into the standard matrix partitioning schemes used in computational mechanics. However, to the best of the authors' knowledge, such continuous lifting has been avoided in structure-preserving methods for PHS. A plausible explanation for this is that PHS treat boundaries not as spatial constraints to be homogenized, but as active ports essential for energy exchange. In addition, other challenges are present. First, imposing a lifting function forces the state variables into an affine space $\mathcal{V}_D$, which breaks the linear topology required to preserve geometric structures, since $\mathcal{V}_D$ is not closed under linear combinations, as discussed in \cite{quarteroni2009numerical}. Second, from a physical point of view, evaluating the Hamiltonian over a relative state variable shifts the energy definition away from the physical value. Lastly, the known lifting function acts a distributed port, destroying the physical power balance at the boundaries.

Consequently, to preserve these geometric and physical properties without resorting to continuous lifting, the usual practice in structure-preserving FEM for PHS has shifted towards alternative imposition strategies. Discretizing the jet-bundle PHS representation via Hamilton's principle while imposing Dirichlet boundary velocities using Lagrange multipliers enforces kinematic compliance, but modifies the standard ODE topology, yielding a PH-DAE system \cite[Prop. III.1]{ponce2024portThesis}. The simulation and control of systems governed by DAEs require specific numerical and mathematical tools, which are often computationally more demanding than their ODE counterparts. To avoid DAE structures, penalty methods are frequently employed for the weak imposition of Dirichlet boundary conditions. However, this approach produces a finite-dimensional model driven by imposed boundary displacements, which violates the power conjugation and introduces numerical degeneration in the discrete elastic energy due to the penalty factor \cite[Prop. III.2]{ponce2024portThesis}. Alternative variational frameworks targeting the Stokes-Dirac structure have been proposed to bypass penalty parameters while preserving the geometric properties. These approaches use the stress field to weakly impose Dirichlet velocities and include methods based on the Hellinger-Reissner principle \cite{thoma2022explicit, thoma2024velocity}, modified linked Lagrange multiplier method \cite{ponce2024structure}, and the generalized Hamilton's principle \cite{ponce2025port}. Despite preserving the ODE and the PHS structure, the weak imposition in mixed formulations frequently exhibit poor accuracy at the Dirichlet boundaries, as discussed in \cite{brugnoli2022explicit}. To overcome this lack of accuracy in mixed formulations, Hu-Washizu-like methods \cite{kinon2023port,hille2026port,kinon2026mixed} impose Dirichlet boundary velocities by expanding the state space using Lagrange multipliers. This guarantees exact boundary behavior but again, this transforms the model into a PH-DAE system. To the best of the authors' knowledge, a structure-preserving finite element framework that enables the strong imposition of Dirichlet boundary velocities yielding an ODE system topology, has not yet been proposed in the literature.

To bridge this gap, the contribution of this manuscript is the development of a continuous-to-discrete lifting framework that strongly imposes Dirichlet boundary velocities in port-Hamiltonian elastodynamics while yielding a finite-dimensional model with an ODE topology. In a first step, by embedding the continuous kinematic decomposition of displacement and velocity fields into proposed rate-form variational principles, we yield continuous-level models, denoted as \textit{lifted} PHS. In this formulation, the Hamiltonian does not degenerate because the displacement lifting is included as a state variable, allowing its evaluation in the absolute fields. Furthermore, the original power balance is preserved since the output conjugate to the distributed port evaluates to zero, ensuring power neutrality. In a second step, the continuous lifting framework is used to derive structure-preserving finite element discretizations. The methodology is applied to a 2-field jet-bundle formulation inspired by the Hamilton-Pontryagin principle, and a 4-field Stokes-Dirac formulation inspired by the Hu-Washizu principle. The resulting discretizations yield finite-dimensional PHS models that strongly impose Dirichlet velocities while avoiding differential-algebraic structures to attain an ODE topology. Furthermore, we show that under certain compatibility conditions, these formulation are mathematically equivalent to applying classical finite element discretizations followed by an algebraic partitioning of the consistent mass matrix.

The article is structured as follows. Section \ref{sec:2} outlines the theoretical foundations of elastodynamics, port-Hamiltonian elastodynamics, and the lifting approach. Section \ref{sec:Continuos} introduces the continuous lifted PHS framework. Section \ref{sec:FEM} presents the structure-preserving finite element approaches with strong imposition of Dirichlet boundary velocities. Section \ref{sec:Simulations} shows numerical simulations, and Section \ref{sec:Conclusion} provides conclusions and discusses future work.


\section{Background} \label{sec:2}

To establish the theoretical baseline, this section first outlines the governing equations of nonlinear elastodynamics. Next, the fundamental properties of PHS are introduced through a finite-dimensional mechanical example. These concepts are then extended to the continuous level to define infinite-dimensional PHS representations for geometrically nonlinear and hyperelastic systems defined over multidimensional spatial domains $\Omega \subset \mathbb{R}^{\ell}$, where $\ell \in \lbrace 1,2,3 \rbrace$. The foundational formulations presented up to this point build upon previous work detailed in \cite{ponce2025port}. Lastly, the concept of kinematic lifting is introduced to provide the mathematical foundation for the strong imposition of Dirichlet boundary velocities developed in subsequent sections. For notational clarity, explicit spatial and temporal dependencies are omitted where appropriate.

\subsection{Nonlinear elastodynamics}\label{ssec:elastodynamics}

Let $\mathcal{B}_0 \subset \mathbb{R}^{3}$ be the volume of an elastic body in the reference configuration, with $\p\mathcal{B}_0^D$ and $\p\mathcal{B}_0^N$ denoting its disjoint Dirichlet and Neumann boundary surfaces. The motion of a hyperelastic solid is described by the displacement field $\uu(\XX,t) \in \mathbb{R}^{3}$, defined as:
\begin{equation}
	\uu(\XX,t)= \xx(\XX,t) -\XX,
\end{equation}
which assigns to each material point $\XX \in \mathcal{B}_0$ a displacement vector specifying its position $\xx(\XX,t)$ at time $t$ in the deformed configuration. The deformation gradient tensor $\FF \in \mathbb{R}^{3\times 3}$, which characterizes local deformations, is given by:
\begin{equation}
	\FF = \frac{\p \xx}{\p \XX} = \mathbf{I} + \nabla \uu,
\end{equation}
where $\mathbf{I} \in \mathbb{R}^{3 \times 3}$ is the second-order identity tensor, and $\nabla(\cdot)$ represents the gradient operator with respect to the material coordinates $\XX$. The Green-Lagrange strain tensor $\EEE \in \mathbb{R}^{3\times 3}$, which measures strain in the reference configuration, is defined as:\\[-2mm]
\begin{equation}
	\EEE = \frac{1}{2} \left( \FF^\top \FF - \mathbf{I} \right) = \frac{1}{2} \left( \nabla \uu + (\nabla \uu)^\top + (\nabla \uu)^\top \nabla \uu \right).
	\label{eq:def_Green_Lagrange}
\end{equation}

The second Piola-Kirchhoff stress tensor $\SSS \in \mathbb{R}^{3 \times 3}$, which represents stress power-conjugated to the strain rate $\dot{\EEE}$, is related to the strain energy density function $W(\EEE) \in \mathbb{R}$ by the hyperelastic constitutive relation:  
\begin{equation}
	\SSS = \frac{\p W}{\p \EEE},
\end{equation}
where $W(\EEE)$ represents the deformation energy per unit reference volume. Therefore, the elastic energy $\mathcal{U} \in \mathbb{R}$ is defined as:
\begin{equation}
	\mathcal{U} = \int_{\mathcal{B}_0} W(\EEE)\, d\XX.
	\label{eq:def_U_elastic}
\end{equation}
The governing equations of motion for total Lagrangian nonlinear elastodynamics, are formulated as:
\begin{align}
	\begin{array}{r} 
		\mbox{for all } \XX \in \mathcal{B}_0:\\
	\end{array} & \left\lbrace \begin{array}{rl}
		\rho_0 \ddot{\uu} = & \mbox{Div}(\FF \SSS) + \bb, \\
		\EEE = & \EEE(\uu), \\
		\SSS =  &\frac{\p W}{\p \EEE}, 
	\end{array} \right. \label{eq:GHP_elastodynamics}\\[1mm]
	\mbox{for all } \SXX \in \p\mathcal{B}_0^D: & \hspace{5.5mm} \uu_D = \uu, \label{eq:GHP_DirBC}\\[0mm]
	\mbox{for all } \SXX \in \p\mathcal{B}_0^N: & \hspace{5.5mm} \ttt_N = \FF \SSS \,\hat{\textbf{N}},  \label{eq:GHP_NeuBC}
\end{align}
where $\bb$ is the volumetric body force, $\rho_0$ is the material density, and $\hat{\textbf{N}} \in \mathbb{R}^{3}$ is the outward unit vector normal to the reference boundary $\p\mathcal{B}_0$. The terms $\uu_D$ and $\ttt_N$ denote the prescribed displacement and traction enforced on the Dirichlet and Neumann boundaries, respectively.

\subsection{Finite-dimensional PHS for linear mechanical systems}

To introduce the fundamental structure and properties of PHS, the formulation of a finite-dimensional linear mechanical system provides an illustrative example. Let $q(t) \in \R{n}$ represent the vector of generalized coordinates, $M = M^\top > 0 \in \R{n \times n}$ the mass matrix, $K = K^\top > 0 \in \R{n \times n}$ the stiffness matrix, $C = C^\top \geq 0 \in \R{n \times n}$ the damping matrix, and $F(t) = B u(t) \in \R{n} $ the applied loads, where $B \in \R{n\times m}$ is an input map matrix that distributes the arbitrary input signals $u(t) \in \R{m}$ to the corresponding degrees of freedom. This results in the standard second-order dynamic model:
$$
M\ddot{q}(t) + C\dot{q}(t) + Kq(t) = B u(t).
$$

By defining the generalized momentum vector as $p(t) = M\dot{q}(t)$ and assembling the state vector $x(t) = [p(t)^\top\; q(t)^\top]^\top$, this second-order differential equation can be rewritten as a PHS as follows:
\begin{align*}
	\underbrace{\begin{bmatrix} \dot{p}(t) \\ \dot{q}(t) \end{bmatrix}}_{\dot{x}} = & \, \left( \begin{matrix} \\ \\
	\end{matrix} \right. \underbrace{\begin{bmatrix} 0 & -I \\ I & 0 \end{bmatrix}}_{J} - \underbrace{\begin{bmatrix} C & 0 \\ 0 & 0 \end{bmatrix}}_{R} \left. \begin{matrix} \\ \\
	\end{matrix} \right) \underbrace{\begin{bmatrix} M^{-1} p(t) \\ K q(t) \end{bmatrix}}_{\nabla_x H(x)} + \underbrace{\begin{bmatrix} B \\ 0 \end{bmatrix}}_{G} u(t), \\[1mm]
	y(t) = & \; G^\top \nabla_x H(x).
\end{align*} 
Here, $J = -J^\top$ is the skew-symmetric interconnection matrix, $R = R^\top \geq 0$ is the dissipation matrix, and $G$ is the input map. The term $e_x = \nabla_x H(x)$ represents the gradient of the Hamiltonian function $H(x)$, where $H(x)$ represents the total stored energy of the system, given by:
$$
H(x) = \frac{1}{2} p(t)^\top M^{-1} p(t) + \frac{1}{2} q(t)^\top K q(t). 
$$
Within the PHS terminology, the components of the state vector $x(t)$ are defined as energy variables, while their time derivatives $\dot{x}(t)$ constitute the flows. The elements of the gradient vector $e_x = \nabla_x H(x)$ are defined as the efforts, also referred to as co-energy variables. The input and output variables $u$ and $y$ are defined as ports, which are the abstract points from which the system exchanges energy with other systems or the environment. By taking the time derivative of the Hamiltonian, the PHS structure reveals the  power exchange as:
$$
\dot{H} = \nabla_x H(x)^\top \dot{x} = y^\top u - \nabla_x H(x)^\top R \,\nabla_x H(x).
$$
The resultant balance equation highlights the passivity of PHS. This property dictates that energy is not generated internally. The total energy is  conserved in the absence of dissipative effects and external inputs. Furthermore, the energy of the system increases exclusively upon active interactions, specifically when the product $y^\top u > 0$.

\subsection{Infinite-dimensional PHS for nonlinear elastodynamics}

In the context of nonlinear elastodynamics, the fundamental differential operator arises from the kinematic equations. In the geometrically nonlinear setting, displacements and strains are related through a first-order differential operator devoid of cross derivatives, as defined in \eqref{eq:def_Green_Lagrange}. Consequently, the infinite-dimensional PHS formulations are structured upon a specific class of operators.

\begin{definition}[Differential operators \cite{ponce2024portBEAM}] \label{def:operadores_F} Let $\X = [X_1,\dots,X_\ell]^\top$ be a vector of orthogonal material coordinates, $\Omega \subset \mathbb{R}^{\ell}$ an open set, and let $v(\X) \in \mathbb{R}^{m} $ and $w(\X) \in \mathbb{R}^{n} $ be two sufficiently smooth vector functions. The first-order differential operator $\F_\X$ and its formal adjoint $\F_\X^*$ are given by:
	\begin{align}
		\F_\X \,w(\X)= &\, F_0(\X)\, w(\X) + \textstyle\sum_{k=1}^\ell  F_k(\X) \, \p_k  w(\X), \notag
		\\[3mm]
		\F_\X^* v(\X) =& \,  F_0(\X)^\top  v(\X) - \textstyle\sum_{k=1}^\ell  \p_k (F_k(\X)^\top \!v(\X)), \notag
	\end{align}
	with $\p_k = \p / \p X_k$, and matrices $F_0(\X),\,F_k(\X) \in \mathbb{R}^{m \times n}$.
\end{definition}

\begin{lemma}[Integration by parts \cite{ponce2024portBEAM}] \label{lemma:integration}
	For any pair of smooth functions $v(\X) \in \mathbb{R}^{m}$ and $w(\X) \in \mathbb{R}^{n}$ defined on the closure $\bar{\Omega} = \Omega \cup \p\Omega$, the following integral identity holds:
	\begin{equation}
		\displaystyle\int_\Omega  \left[ v(\X)^{\top} \F_\X\, w(\X) -  w(\X)^{\top} \F_\X^* \,v(\X) \right] d\X = \displaystyle\int_{\p \Omega} w(\SX)^{\top} {F}_\p(\SX) \, v(\SX) \,  d\SX,
		\notag
	\end{equation}
	where $ F_\p(\SX) \in \mathbb{R}^{n\times m}$ is a boundary-valued matrix induced by $\F_\X$ and is given by:
	\begin{equation}
		F_\p(\SX) =   \textstyle\sum_{k=1}^\ell  F_k(\SX)^\top\, \hat{n}_k(\SX),
		\notag
	\end{equation}
	with $\hat{n}_k(\SX)$ denoting the $k$-th component of the pointing outward unit vector $\hat{n}(\SX)\in \mathbb{R}^{\ell}$ normal to the boundary $\p\Omega$.
\end{lemma}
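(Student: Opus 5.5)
The plan is to expand both integrands using Definition \ref{def:operadores_F}, cancel the zeroth-order terms, collect the remaining first-order terms into a single divergence, and apply the divergence theorem on $\Omega$. Substituting the definitions gives
\begin{align*}
v^\top \F_\X w - w^\top \F_\X^* v
&= v^\top F_0 w + \textstyle\sum_{k=1}^\ell v^\top F_k\, \p_k w - w^\top F_0^\top v + \textstyle\sum_{k=1}^\ell w^\top \p_k\big(F_k^\top v\big).
\end{align*}
The zeroth-order terms cancel because $v^\top F_0 w$ is a scalar equal to its transpose $w^\top F_0^\top v$.

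For the first-order terms, I would note that $v^\top F_k\, \p_k w = (\p_k w)^\top F_k^\top v$, again since it is a scalar. The product rule then gives, for each $k$,
\begin{equation*}
(\p_k w)^\top F_k^\top v + w^\top \p_k\big(F_k^\top v\big) = \p_k\big( w^\top F_k^\top v\big).
\end{equation*}
Thus the integrand equals $\sum_{k} \p_k \phi_k$, where $\phi_k = w^\top F_k^\top v$. This is the divergence of the vector field $\phi = [\phi_1,\dots,\phi_\ell]^\top$.

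Next I would apply the Gauss divergence theorem on $\Omega$:
\begin{equation*}
\int_\Omega \textstyle\sum_k \p_k \phi_k\, d\X = \int_{\p\Omega} \textstyle\sum_k \phi_k\, \hat{n}_k\, d\SX .
\end{equation*}
Since $\hat n_k$ is a scalar, it can be moved inside the product, so that $\sum_k w^\top F_k^\top v\, \hat n_k = w^\top \big(\sum_k F_k^\top \hat n_k\big) v = w^\top F_\p v$. This is exactly the stated right-hand side.

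The computation itself is routine. The only real obstacle is justifying the divergence theorem. That requires $\Omega$ to be a bounded domain with sufficiently regular boundary (for instance Lipschitz, so that $\hat n$ exists almost everywhere), and it requires $F_k$, $v$ and $w$ to be $C^1$ up to $\bar\Omega$. I would state these as the implicit meaning of ``smooth functions on $\bar\Omega$'' in the lemma, also assuming the matrices $F_k(\X)$ are $C^1$, since they are differentiated. If needed, the identity extends by density to Sobolev functions ($H^1$) via the trace theorem.
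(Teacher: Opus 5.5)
Your argument is correct and is the standard derivation. The $F_0$ terms cancel as mutually transposed scalars, the first-order terms combine by the product rule into $\sum_k \partial_k\big(w^\top F_k^\top v\big)$, and the divergence theorem then yields $w^\top F_\partial v$ on $\partial\Omega$. The paper gives no proof of this lemma and simply cites \cite{ponce2024portBEAM}. Your regularity caveats are appropriate: $\Omega$ bounded with Lipschitz boundary, and $F_k$, $v$, $w$ of class $C^1$ up to $\bar\Omega$. They supply what the statement's ``open set'' and ``smooth'' leave implicit. This matters later, because the paper applies the lemma to state-modulated operators $\mathcal{F}_\X(r)$, whose coefficients depend on $r$ and so inherit its regularity.
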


Before defining the infinite-dimensional PHS representations, we introduce the fundamental physical quantities and energy density functions that constitute the models. For the scope of this work, the elastodynamic system is assumed to be conservative, therefore, dissipative phenomena are not considered.

\textit{Kinematics:} The system configuration is described by the generalized displacement field $r(\X,t) \in \mathbb{R}^n$, with its temporal derivative defining the velocity vector $\dot{r}(\X,t) \in \mathbb{R}^n$. The internal deformation is captured by the generalized Voigt strain vector $\q(\X,t) \in \mathbb{R}^m$, which contains the independent components of the Green-Lagrange strain tensor in Voigt notation. The strain and strain-rate are kinematically related to the displacement and velocity, respectively, through $\q(\X,t) = \mathcal{L}_\X(r)r$ and $\dot{\q}(\X,t) = \F_\X(r)\dot{r}$. Here, $\mathcal{L}_\X(r)(\cdot)$ and $\F_\X(r)(\cdot)$ are nonlinear first-order differential operators modulated by $r(\X,t)$, that belong to the class in Definition \ref{def:operadores_F}.

\textit{Mechanical energy:} The kinetic energy is characterized by the mass density matrix $\mathcal{M}(\X) = \mathcal{M}(\X) ^\top >0  \in \mathbb{R}^{n \times n}$, which defines the generalized momentum density $p(\X,t) = \mathcal{M}(\X)\dot{r}(\X,t) \in \mathbb{R}^n$. The elastic energy is governed by the generalized strain energy density function $\Psi(\q) \in \mathbb{R}$, which yields the generalized Voigt stress vector $e_\q(\X,t) = \partial \Psi(\q)/\partial \q \in \mathbb{R}^m$, representing the components of the second Piola-Kirchhoff stress tensor in Voigt notation. Finally, the gravitational potential energy is characterized by the generalized body force vector $b(\X) \in \mathbb{R}^n$.

The first PHS representation is stated in Definition \ref{def:jet_bundle_phs} and is denoted as a jet-bundle PHS, sometimes referred to as port-Lagrangian system \cite{nishida2006field,nishida2005formal,schoberl2014jet,schoberl2015port}. A key characteristic of this model is that the interconnection operator is purely algebraic, i.e., $J = -J^\top$, while all spatial differential operators are encapsulated within the Hamiltonian through the elastic energy density $\Psi(\q)$ and the kinematic relationship $\q(r) = \mathcal{L}_\X(r)r$.

\begin{definition}[Jet-bundle formulation] \label{def:jet_bundle_phs}
	Let $z(\X,t) = [p(\X,t)^\top \, r(\X,t)^\top]^\top \in \mathbb{R}^{2n}$ denote the energy variables, and let $\var_z H(z) = [e_p(p)^\top \, e_r(r)^\top]^\top \in \mathbb{R}^{2n}$ denote the corresponding co-energy variables. The infinite-dimensional PHS governing nonlinear elastodynamics is given by:
	\begin{equation}
		\underbrace{\begin{bmatrix}
				\dot{p}(\X,t) \\ \dot{r}(\X,t)
		\end{bmatrix}}_{\dot{z}(\X,t)} = 
		\underbrace{\begin{bmatrix}
				0 & -I_n \\ I_n & 0 
		\end{bmatrix}}_{J=-J^\top}
		\underbrace{\begin{bmatrix}
				e_p(p) \\ e_r(r)
		\end{bmatrix}}_{\var_z H(z)},
		\label{eq:def_PLS_inf}
	\end{equation}
	$ $\\[-7mm]
	\begin{equation}
		H(z) = \int_{\Omega} \left[ \frac{1}{2} p(\X,t)^\top \mathcal{M}(\X)^{-1} p(\X,t) + \Psi(\q(r)) - r(\X,t)^\top b(\X) \right] d\X,
		\label{eq:HAMILTONIAN_PLS_inf}
	\end{equation}
	
	where $e_p(p) = \mathcal{M}(\X)^{-1}p(\X,t) = \dot{r}(\X,t)$ is the velocity field, $e_r(r) = \F_\X(r)^*e_\q(\q(r)) - b(\X)$ represents the sum of  internal and gravitational loads, and $I_n \in \R{n \times n}$ is the identity matrix. In this formulation, the generalized stress vector $e_\q(\q)$ is dependent on $r(\X,t)$ through the strong imposition of the kinematic equation $\q(r) = \mathcal{L}_\X(r)r$.
\end{definition}

In the jet-bundle PHS in Definition \ref{def:jet_bundle_phs}, the upper block of \eqref{eq:def_PLS_inf}, given by $\dot{p} = - \F_\X(r)^*e_\q(\q(r)) + b$, recovers the local balance of linear momentum. Here, $\dot{p}$ represents the inertial forces, $-\F_\X(r)^*e_\q(\q(r))$ embodies the internal elastic forces, and $b$ accounts for the body forces due to gravity. The lower block, $\dot{r} = e_p(p) = \mathcal{M}^{-1}p$, simply enforces the kinematic identity mapping momentum to velocity.

\begin{remark}
	Note that the term $-\F_\X(r)^*e_\q$ is equivalent to the Divergence of the first Piola-Kirchhoff stress tensor $\underline{\mathbf{P}} = \underline{\mathbf{F}} \underline{\mathbf{T}} $. Therefore, the operators $-\F_\X(r)^*(\cdot)$ and $\text{Div}(\underline{\mathbf{F}}\,\cdot)$ are mathematically analogous.
\end{remark}

While the jet-bundle PHS model embeds the spatial differentiation within the Hamiltonian functional, an alternative geometric representation can be derived by transferring these differential operators directly into the interconnection structure. To achieve this, the generalized strain $\q(\X,t)$ is considered as an energy variable. This choice yields an infinite-dimensional PHS defined on a Stokes-Dirac structure, characterized by a nonlinear and state-modulated formal skew-adjoint differential operator $\mathcal{J}(x) = -\mathcal{J}(x)^*$, and a Hamiltonian functional that is algebraic with respect to its energy variables.

\begin{definition}[Stokes-Dirac formulation] \label{def:PHS_stokesDirac}
	Let  $x(\X,t) = [p(\X,t)^\top \; \q(\X,t)^\top \; r(\X,t)^\top]^\top \in \R{2n+m}$ denote the energy variables, and let $\var_x H(x) = [e_p(p)^\top \; e_\q(\q)^\top -b(\X)^\top]^\top \in \R{2n+m}$ denote the corresponding co-energy variables. The infinite-dimensional PHS governing nonlinear elastodynamics is given by:
	\begin{equation}
			\ub{\begin{bmatrix}
					\dot{p}(\X,t) \\ \dot{\q}(\X,t) \\ \dot{r}(\X,t) \end{bmatrix} }{\dot{x}(\X,t)} 
			= \ub{\begin{bmatrix}
					0 & -\F_\X(r)^* &  -I_n  \\ 
					\F_\X(r) & 0 & 0  \\
					I_n & 0 & 0 
			\end{bmatrix}}{\mathcal{J}(x)\,=-\mathcal{J}(x)^*}  \hspace{-1.0mm}
			\ub{\begin{bmatrix}
					e_p(p) \\ e_\q(\q) \\ -b(\X)
			\end{bmatrix}}{\var_x H(x)} , 
		\label{eq:theo2:PHS_explicito}
	\end{equation}
	$ $\\[-7mm]
	\begin{equation}
		H(x) =  \dint_\Omega \left[ \frac{1}{2}\, p(\X,t)^\top \mathcal{M}(\X)^{-1}p(\X,t) + \Psi(\q) - r(\X,t)^\top b(\X) \right] d\X,
	\end{equation}
	
	where the generalized stress vector $e_\q(\q)$ is dependent on $\q(\X,t)$ through the strong imposition of the hyperelastic constitutive equation $e_\q(\q) = \p\Psi(\q)/\p \q$.
\end{definition}

To complete the infinite-dimensional PHS formulations in Definitions \ref{def:jet_bundle_phs} and \ref{def:PHS_stokesDirac}, the energy exchange across the boundary and the corresponding boundary input and output ports must be defined.

Assume that the boundary is partitioned as $\p \Omega = \p\Omega_N \cup \p\Omega_D$, with $\p\Omega_N \cap \p\Omega_D = \emptyset$, where $\p\Omega_N$ and $\p\Omega_D$ denote the disjoint subsets where Neumann and Dirichlet boundary conditions are applied, respectively. While Neumann and Dirichlet boundary conditions are classically imposed as prescribed tractions and displacements in Lagrangian settings, respectively, the PHS framework requires boundary ports to be power-conjugated. 

\begin{definition}[Boundary ports] \label{def:boundaryPorts}
Based on the integral identity from Lemma \ref{lemma:integration}, the power exchange derived from the time derivative of the Hamiltonian is given by:
\begin{equation*}
	\dot{H} = \int_{\p \Omega} \!\!\! e_p(p)^\top  F_\p(r) \, e_\q(\q) \, d\SX = 
	\int_{\p \Omega_N} \!\!\!\!\!\! v_N(\SX,t)^\top  \tau_N(\SX,t) \, d\SX +
	\int_{\p \Omega_D} \!\!\!\!\!\! \tau_D(\SX,t)^\top  v_D(\SX,t) \, d\SX = \int_{\p \Omega}\!\!\! y_\p(\SX,t)^\top  u_\p(\SX,t) \, d\SX.
\end{equation*}
The boundary inputs $u_\p(\SX,t)\in \mathbb{R}^{2n}$ and outputs $y_\p(\SX,t)\in \mathbb{R}^{2n}$ are selected as:
\begin{equation}
u_\p(\SX,t) = [\tau_N(\SX,t)^\top \, v_D(\SX,t)^\top]^\top \quad , \quad y_\p(\SX,t) = [v_N(\SX,t)^\top \, \tau_D(\SX,t)^\top]^\top,
\end{equation}
where $\tau_D(\SX,t), \tau_N(\SX,t) \in \mathbb{R}^n$ represent the generalized boundary tractions, and $v_D(\SX,t), v_N(\SX,t) \in \mathbb{R}^n$ denote the generalized boundary velocities, defined respectively as:
\begin{align}
	\tau_N(\SX,t) &=  F_\p(r) e_\q(\q), \qquad  v_N(\SX,t) = e_p(p) \qquad\qquad (\text{on } \p\Omega_N), \\[1mm]
	\tau_D(\SX,t) &=  F_\p(r) e_\q(\q), \qquad  v_D(\SX,t) = e_p(p) \qquad\qquad (\text{on } \p\Omega_D).
\end{align}
Here, $F_\p(r) \in \R{n \times m}$ is the boundary-valued matrix induced by the differential operator $\F_\X(r)(\cdot)$. Consequently, $F_\p(r)$ is modulated by $r(\SX,t)$. 
\end{definition}

\begin{remark}
	Note that the generalized boundary traction $\tau_N(\SX,t) = F_\p(r) e_\q(\q)$ is equivalent to the classical boundary traction vector $\textbf{t}_N  =\underline{\mathbf{FT}}\hat{\textbf{N}}$. Since $e_\q(\q)$ represents the second Piola-Kirchhoff stress tensor $\underline{\mathbf{T}}$, the nonlinear boundary matrix $F_\p(r)$ is mathematically analogous to the geometric projection operator defined by $\underline{\mathbf{F}}(\cdot)\hat{\textbf{N}}$.
\end{remark}

\subsection{Kinematic lifting}

To handle non-homogeneous Dirichlet boundary conditions, we employ an additive decomposition of relevant fields. In the context of classical elastodynamics, this approach partitions the generalized displacement field into a known, quasi-static, sufficiently smooth lifting component $r_L(\X,t)$ defined on the closure $\bar{\Omega} = \Omega \cup \p\Omega$, and an unknown relative dynamic component $r_r(\X,t)$ defined on $\Omega$. This allows the total displacement field to be expressed as:
\begin{equation}
	r(\X,t) = r_r(\X,t) + r_L(\X,t). \label{eq:displacement_decomposition}
\end{equation}
Here, the lifting function $r_L(\X,t)$ is further partitioned. Specifically, one part matches the prescribed boundary displacement on the Dirichlet boundary $\p\Omega_D$, while the other acts as an arbitrary field distributed inside the domain $\Omega$. This allows for the transformation of the original non-homogeneous Dirichlet boundary-value problem into a homogeneous one, driven instead by distributed-domain known fields. To deal with prescribed boundary velocities, we extend this decomposition to the velocity field. 

\begin{definition}[Kinematic lifting] \label{def:kinematic_lifting}
	Let $H^1(\Omega; \mathbb{R}^n)$ denote the Sobolev space of square-integrable vector fields with square-integrable first weak spatial derivatives. The displacement field $r(\X,t)$ and the  velocity field $\dot{r}(\X,t)$ are additively decomposed into their unknown relative components and the prescribed, quasi-static, sufficiently smooth, known lifting contributions as: 
	\begin{align}
		r(\X,t) &= r_r(\X,t) + r_L(\X,t), \label{eq:total_r} \\
		\dot{r}(\X,t) &= \dot{r}_r(\X,t) + v_L(\X,t). \label{eq:total_rdot}
	\end{align}
	The relative dynamic fields $r_r(\X,t)$ and $\dot{r}_r(\X,t)$ are constrained to the linear space of kinematically admissible homogeneous fields, defined as:
	\begin{equation}
		\mathcal{V}_0 = \left\{ w \in H^1(\Omega; \mathbb{R}^n) \; \big| \; w|_{\p\Omega_D} = 0 \right\}.
	\end{equation}
	
	The known lifting fields $r_L(\X,t)$ and $v_L(\X,t)$ are defined on $\bar{\Omega}$ and satisfy the following spatial restrictions:
	\begin{itemize}
		\item On the Dirichlet boundary $\p\Omega_D$, they impose the non-homogeneous prescribed boundary conditions:
		\begin{align}
			r_L|_{\p\Omega_D} &= r_D(\SX,t), \\
			v_L|_{\p\Omega_D} &= \dot{r}_D(\SX,t) = v_D(\SX,t),
		\end{align}
		where $r_D(\SX,t)$ and $v_D(\SX,t)$ denote the prescribed boundary displacement and velocity, respectively.\\
		
		\item Inside the domain $\Omega$, they correspond to arbitrary interior extension fields denoted as:
		\begin{align}
			r_L|_\Omega &= r_\Omega(\X,t), \\
			v_L|_\Omega &= \dot{r}_\Omega(\X,t) = v_\Omega(\X,t).
		\end{align}
		Consequently, the total fields $r(\X,t)$ and $\dot{r}(\X,t)$ satisfy the non-homogeneous Dirichlet boundary conditions and are constrained to the affine spaces $\mathcal{V}_D^r$ and $\mathcal{V}_D^v$, respectively, defined as:
		\begin{align}
			\mathcal{V}_D^r &= \left\{ w \in H^1(\Omega; \mathbb{R}^n) \; \big| \; w|_{\p\Omega_D} = r_D(\SX,t) \right\} \\
			\mathcal{V}_D^v &= \left\{ w \in H^1(\Omega; \mathbb{R}^n) \; \big| \; w|_{\p\Omega_D} = v_D(\SX,t) \right\}.
		\end{align}
	\end{itemize}
\end{definition}

From a finite element perspective, the construction of the lifting fields $r_L(\X,t)$ and $v_L(\X,t)$ offers flexibility. A convenient choice is to define these fields such that they decay to zero within the elements adjacent to the Dirichlet boundary $\p\Omega_D$. 
Consequently, the interior extensions $r_\Omega(\X,t)$ and $v_\Omega(\X,t)$ vanish, and the relative dynamic fields $r_r(\X,t)$ and $\dot{r}_r(\X,t)$ match the physical fields at all interior nodes of the mesh, while the nodes at the Dirichlet boundaries are prescribed. This reduces the number of operations and simplifies the interpretation of numerical results, as the relative variables represent the absolute nodal displacements and velocities of the system. 
%



\section{Lifted port-Hamiltonian systems} \label{sec:Continuos}

This section establishes the mathematical foundations of the kinematic lifting framework for port-Hamiltonian elastodynamics. The continuous models are developed in two distinct parts, each addressing a specific geometric representation of the system. First, the lifted jet-bundle PHS formulation is presented, where the spatial differential operator is embedded within the energy functional. Second, the lifted Stokes-Dirac PHS formulation is derived, where the differential operator is instead  embedded within the interconnection structure. For each framework, specific virtual power-based variational schemes are proposed to consistently define and derive the corresponding continuous lifted PHS.

To derive the continuous lifted models, two virtual power-based variational principles are introduced. These principles are reminiscent to the Hamilton-Pontryagin principle \cite{yoshimura2006dirac} and the Hu-Washizu principle \cite[Ch. 8.5]{belytschko2014nonlinear}, but specifically formulated in terms of virtual power to accommodate the first-order rate topology characteristic of PHS. The first variational principle is developed to recover the jet-bundle structure, while the second principle is formulated to recover the Stokes-Dirac geometric structure. Before stating these variational principles, physically meaningful Lagrangian densities for elastodynamics are defined.

\begin{definition} (Lagrangian densities) \label{def:energies}
	Let $\mathfrak{T}_p(p)$ denote the kinetic energy density, and let $\mathfrak{U}_{rr}(r)$ and $\mathfrak{U}_{\q r}(\q,r)$ denote the total potential energy densities expressed in terms of the generalized displacement and strain fields, respectively:
	\begin{equation}
		\mathfrak{T}_p(p) = \frac{1}{2} p^\top \mathcal{M}^{-1} p, \qquad \mathfrak{U}_{rr}(r) = \Psi(\q(r)) - r^\top b, \qquad \mathfrak{U}_{\q r}(\q,r) = \Psi(\q) - r^\top b.
	\end{equation}
	From these scalar functionals, two distinct Lagrangian densities are constructed depending on the chosen set of state variables:
	\begin{equation}
		\mathfrak{L}_{pr}(p,r) = \mathfrak{T}_p(p) - \mathfrak{U}_{rr}(r), \qquad \mathfrak{L}_{p\q r}(p,\q,r) = \mathfrak{T}_p(p) - \mathfrak{U}_{\q r}(\q,r).
	\end{equation}
\end{definition}

With these definitions, the Lagrangian $\mathfrak{L}_{pr}$ naturally pairs with the jet-bundle PHS formulation, whereas $\mathfrak{L}_{p\q r}$ pairs with the Stokes-Dirac PHS representation. In order to derive the lifted jet-bundle PHS representations, a virtual power approach inspired by the Hamilton-Pontryagin principle is proposed. 

\begin{proposition}[Hamilton-Pontryagin-based virtual power principle] \label{prop:varPPLE1} The stationary condition of the functional: 
	\begin{equation} \label{eq:PHP_standard}
		\mathcal{P}_{\HP}(\dot{p},\dot{r}) = \int_\Omega \left[ \dot{r} \cdot \dot{p} - \dot{\mathfrak{L}}_{pr}(p,r,\dot{p},\dot{r}) \right] d\X - \int_{\p\Omega_N} \!\!\! \dot{r} \cdot \tau_N \, d\SX,
	\end{equation}
	for all independent variations $(\delta\dot{p}, \delta\dot{r})$ subject to the homogeneous Dirichlet boundary constraint:
	\begin{equation}
		\delta \dot{r}(\SX,t) = 0 \quad \forall \SX \in \p\Omega_D,
	\end{equation}
	defines a physically consistent variational principle for port-Hamiltonian elastodynamics in jet-bundles.
\end{proposition}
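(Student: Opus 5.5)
We interpret ``physically consistent'' to mean that the stationarity conditions of $\mathcal{P}_{\HP}$ reproduce exactly the jet-bundle PHS of Definition \ref{def:jet_bundle_phs}, together with the boundary ports of Definition \ref{def:boundaryPorts}. These comprise the kinematic identity $\dot r = e_p(p)$, the momentum balance $\dot p = -\F_\X(r)^* e_\q(\q(r)) + b$, and the Neumann condition $F_\p(r)e_\q = \tau_N$ on $\p\Omega_N$. We further require the resulting power balance $\dot H = \int_{\p\Omega} y_\p^\top u_\p\, d\SX$. The plan is to compute the first variation of \eqref{eq:PHP_standard} with respect to the independent rates $(\dot p,\dot r)$. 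The states $(p,r)$ are frozen at the current instant, as is standard in rate/virtual power principles, and the variations $(\var\dot p,\var\dot r)$ are arbitrary except for $\var\dot r|_{\p\Omega_D}=0$.

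The first step is to expand $\dot{\mathfrak L}_{pr}$ by the chain rule using Definition \ref{def:energies}:
\[
\dot{\mathfrak L}_{pr} = \dot p^\top \mathcal{M}^{-1}p - e_\q(\q(r))^\top \F_\X(r)\dot r + \dot r^\top b .
\]
This uses $\dot\q = \F_\X(r)\dot r$ and $e_\q=\p\Psi/\p\q$, so $\dot{\mathfrak L}_{pr}$ is linear in the rates. Taking the variation of the integrand gives
\[
\var\dot r\cdot\dot p + \dot r\cdot\var\dot p - \var\dot p^\top \mathcal{M}^{-1}p + e_\q^\top\F_\X(r)\var\dot r - \var\dot r^\top b .
\]
Collecting the $\var\dot p$ terms and invoking arbitrariness in $\Omega$ immediately yields $\dot r = \mathcal{M}^{-1}p = e_p(p)$, the lower block of \eqref{eq:def_PLS_inf}.

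For the $\var\dot r$ terms, we apply Lemma \ref{lemma:integration} with $v=e_\q$ and $w=\var\dot r$ to move $\F_\X(r)$ onto $e_\q$. This produces $\int_\Omega \var\dot r^\top(\dot p + \F_\X(r)^*e_\q - b)\,d\X$ plus the boundary term $\int_{\p\Omega}\var\dot r^\top F_\p(r)e_\q\,d\SX$. The boundary integral on $\p\Omega_D$ vanishes by the admissibility constraint. On $\p\Omega_N$ it combines with the variation of the Neumann work term, $-\int_{\p\Omega_N}\var\dot r\cdot\tau_N$. We then use the fundamental lemma of the calculus of variations, first with variations compactly supported in $\Omega$ and then with variations of arbitrary trace on $\p\Omega_N$. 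This gives the momentum balance $\dot p = -\F_\X(r)^*e_\q(\q(r)) + b$, which is the upper block of \eqref{eq:def_PLS_inf} with $e_r = \F_\X(r)^*e_\q - b$, and the natural boundary condition $F_\p(r)e_\q=\tau_N$ on $\p\Omega_N$.

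Finally, we verify the energetic consistency. The first integrand is evaluated along solutions using $\dot r=e_p$ and $\dot p = -e_r$, so that $\dot H = \int_\Omega(e_p^\top\dot p + e_r^\top\dot r)\,d\X$, and Lemma \ref{lemma:integration} is applied once more. This recovers $\dot H=\int_{\p\Omega_N} v_N^\top\tau_N\,d\SX+\int_{\p\Omega_D}\tau_D^\top v_D\,d\SX$, matching Definition \ref{def:boundaryPorts}. The Dirichlet velocity is not produced by the variation; it enters as the essential condition on the admissible $\dot r$, consistent with the constraint on $\var\dot r$. We expect the main obstacle to be interpretive rather than computational. One must state precisely that $(p,r)$ are held fixed while the rates are varied, so that $\dot{\mathfrak L}_{pr}$ is treated as a function of $(\dot p,\dot r)$. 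One must also justify that the term $\dot r\cdot\dot p$ supplies exactly the Pontryagin-type pairing producing the canonical skew-symmetric $J$. In addition, the smoothness assumed in Lemma \ref{lemma:integration} must be compatible with $H^1$ variations, which we would handle by density arguments.
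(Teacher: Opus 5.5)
Your variational computation matches the paper's Steps 1--3: the expansion of $\dot{\mathfrak{L}}_{pr}$, the $\var\dot p$ condition giving $\dot r=\mathcal{M}^{-1}p$, and the integration by parts in the $\var\dot r$ condition giving the momentum balance and $\tau_N=F_\p(r)e_\q$ on $\p\Omega_N$. You diverge only at the end. The paper identifies $e_p$ and $e_r$ by computing $\var H$ and matching its interior integral, then simply invokes Definition \ref{def:boundaryPorts} for the ports. You instead take $e_r=\F_\X(r)^*e_\q-b$ from Definition \ref{def:jet_bundle_phs} and add an explicit power-balance check.

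That check fails as written. With $\dot r=e_p$ and $\dot p=-e_r$, the integrand $e_p^\top\dot p+e_r^\top\dot r=-e_p^\top e_r+e_r^\top e_p$ vanishes pointwise. Your formula therefore gives $\dot H=0$, and Lemma \ref{lemma:integration} has nothing left to act on. The identity $\dot H=\int_\Omega(e_p^\top\dot p+e_r^\top\dot r)\,d\X$ does not hold in the jet-bundle setting, because $e_r$ is only the interior part of $\var_r H$. As the paper's Step 4 shows, $\var H$ also contains $\int_{\p\Omega}\var r^\top F_\p(r)e_\q\,d\SX$, and this boundary term is exactly what carries the port power.

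To fix it, differentiate $H$ by the chain rule before integrating by parts: $\dot H=\int_\Omega(e_p^\top\dot p+e_\q^\top\F_\X(r)\dot r-b^\top\dot r)\,d\X$. Then substitute $\dot r=e_p$ and $\dot p=-\F_\X(r)^*e_\q+b$; the gravity terms cancel. Lemma \ref{lemma:integration} with $v=e_\q$, $w=e_p$ turns $\int_\Omega(e_\q^\top\F_\X(r)e_p-e_p^\top\F_\X(r)^*e_\q)\,d\X$ into $\int_{\p\Omega}e_p^\top F_\p(r)e_\q\,d\SX$. This splits into the Neumann and Dirichlet pairings of Definition \ref{def:boundaryPorts}, and it is the same computation the paper carries out in Step 4 of the proof of Proposition \ref{prop:lifted_jet_bundlePHS}. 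Equivalently, keep the boundary term explicitly: $\dot H=\int_\Omega(e_p^\top\dot p+e_r^\top\dot r)\,d\X+\int_{\p\Omega}\dot r^\top F_\p(r)e_\q\,d\SX$. With this correction your argument is complete, and slightly more thorough than the paper's, which does not verify the power balance for this proposition.
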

\begin{proof}
	Demonstrating that the proposed scheme is variationally consistent for jet-bundle port-Hamiltonian elastodynamics is equivalent to proving that the stationarity condition $\delta \mathcal{P}_{\HP} = 0$ with respect to the independent variations $(\delta\dot{p}, \delta\dot{r})$ reproduces the system's governing dynamics. The details are given in Appendix \ref{app:proof_Prop1}.
\end{proof}

\begin{proposition}[Lifted jet-bundle PHS] \label{prop:lifted_jet_bundlePHS}
The kinematic lifting in Definition \ref{def:kinematic_lifting}, featuring a prescribed velocity field $v_L(\X,t)$ that satisfies the Dirichlet boundary condition $v_L|_{\p\Omega_D} = v_D(\SX,t)$, induces, via the variational principle in Proposition \ref{prop:varPPLE1}, the following lifted PHS:
	\begin{align}
		\underbrace{\begin{bmatrix}
				\dot{p}(\X,t) \\ \dot{r}_r(\X,t) \\ \dot{r}_L(\X,t)
		\end{bmatrix}}_{\dot{z}(\X,t)} = & \;
		\underbrace{\begin{bmatrix}
				0 & -I_n & \,0 \\ I_n & 0  & \,0 \\ 0 & 0 & \,0
		\end{bmatrix}}_{J=-J^\top}
		\underbrace{\begin{bmatrix}
				e_p(p) \\ e_{r}(r_r \!+\! r_L) \\ e_{r}(r_r \!+\! r_L)
		\end{bmatrix}}_{\var_z H(z)} + \underbrace{\begin{bmatrix}
		0 \\ -I_n \\ I_n
	\end{bmatrix}}_{\mathcal{G}} \underbrace{\begin{bmatrix} v_L(\X,t)\end{bmatrix} }_{u_\Omega(\X,t)}  \label{eq:lifted_PHS_continuo} \\[1mm]
	y_\Omega(\X,t) = & \; \mathcal{G}^\top \delta_z {H}(z) = 0, \notag 
	\end{align}
	$ $\\[-7mm]
	\begin{equation}
		H(z) = \int_{\Omega} \left[ \frac{1}{2} p(\X,t)^\top \mathcal{M}(\X)^{-1} p(\X,t) + \Psi(\q(r_r + r_L)) - [r_r(\X,t) + r_L(\X,t)]^\top b(\X) \right] d\X,
		\label{eq:HAMILTONIAN_PHS1_Lifted}
	\end{equation}
	\begin{equation}
		\dot{H} = \int_{\p\Omega} y_\p(\SX,t)^\top u_\p(\SX,t) \, d\SX,
		\label{eq:HAMILTONIANDOT_PHS1_Lifted}
	\end{equation}
	
	with $u_\Omega(\X,t)\in \mathbb{R}^{n}$ and $y_\Omega(\X,t)\in \mathbb{R}^{n}$ being the distributed input and output ports, respectively. The boundary inputs $u_\p(\SX,t)\in \mathbb{R}^{2n}$ and outputs $y_\p(\SX,t)\in \mathbb{R}^{2n}$ are defined as:
	\begin{equation}
		u_\p(\SX,t) = [\tau_N(\SX,t)^\top \; v_D(\SX,t)^\top]^\top \quad , \quad y_\p(\SX,t) = [v_N(\SX,t)^\top \; \tau_D(\SX,t)^\top]^\top, \notag
	\end{equation}
	where $\tau_D(\SX,t), \tau_N(\SX,t) \in \mathbb{R}^n$ represent the generalized boundary tractions, and $v_D(\SX,t), v_N(\SX,t) \in \mathbb{R}^n$ denote the generalized boundary velocities, defined as:
	\begin{align}
		\tau_N(\SX,t) &=  F_\p(r_r\!+\!r_L) e_\q(\q(r_r\!+\!r_L)), \qquad  v_N(\SX,t) = e_p(p) \qquad\qquad   (\text{on } \p\Omega_N), \\[1mm]
		\tau_D(\SX,t) &=  F_\p(r_r\!+\!r_L) e_\q(\q(r_r\!+\!r_L)), \qquad  v_D(\SX,t) =  e_p(p)   \qquad\qquad   (\text{on } \p\Omega_D),
	\end{align}
    where $v_D(\SX,t) = e_p(p)$ on $\p\Omega_D$ follows from $e_p(p)|_{\p\Omega_D} = (\dot{r}_r + v_L)|_{\p\Omega_D} $ with  $\dot{r}_r|_{\p\Omega_D} = 0$ and $v_L|_{\p\Omega_D} = v_D(\SX,t)$.
\end{proposition}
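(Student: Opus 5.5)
Substitute the kinematic lifting of Definition~\ref{def:kinematic_lifting} into the functional $\mathcal{P}_{\HP}$ of Proposition~\ref{prop:varPPLE1} and read off the resulting stationarity conditions. Write $r = r_r + r_L$ and $\dot r = \dot r_r + v_L$. Since $v_L$ is prescribed, the only admissible velocity variations are $\delta\dot r = \delta\dot r_r \in \mathcal{V}_0$, which vanish on $\p\Omega_D$ exactly as Proposition~\ref{prop:varPPLE1} requires. Along the trajectory we have $\dot{\mathfrak{L}}_{pr} = \dot p^\top \mathcal{M}^{-1}p - e_r(r_r+r_L)^\top(\dot r_r + v_L)$, where $e_r = \F_\X(r)^* e_\q - b$ arises from $\dot\q = \F_\X(r)\dot r$ after applying Lemma~\ref{lemma:integration}.

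\textbf{Variation in $\dot p$.} Varying $\dot p$ gives $\dot r_r + v_L = \mathcal{M}^{-1}p = e_p(p)$. This is the second row, $\dot r_r = e_p - v_L$, with input column $-I_n$ acting on $u_\Omega = v_L$.

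\textbf{Variation in $\dot r_r$.} Here $\delta\dot r_r$ is paired with $\dot p$ and with the internal power $\int_\Omega e_\q^\top \F_\X(r)\delta\dot r_r$. Integrating by parts with Lemma~\ref{lemma:integration} produces $-\F_\X^* e_\q + b$ in $\Omega$ and a boundary term $\delta\dot r_r^\top F_\p e_\q$. On $\p\Omega_D$ this boundary term vanishes because $\delta\dot r_r \in \mathcal{V}_0$; on $\p\Omega_N$ it cancels against the traction term and gives $\tau_N = F_\p e_\q$. The first row is therefore $\dot p = -e_r(r_r+r_L)$. The third row, $\dot r_L = v_L$, is simply the definition of the lifting velocity, and it contributes the $+I_n$ entry of $\mathcal{G}$ with a zero row in $J$.

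\textbf{Structural claims and the power balance.} The Hamiltonian \eqref{eq:HAMILTONIAN_PHS1_Lifted} is \eqref{eq:HAMILTONIAN_PLS_inf} evaluated at $r = r_r + r_L$. Because $H$ depends on $r_r$ and $r_L$ only through their sum, both variational derivatives equal $e_r(r_r+r_L)$, which justifies $\var_z H$ as written. It follows that $y_\Omega = \mathcal{G}^\top\var_z H = -e_r + e_r = 0$, which is the power neutrality of the distributed port. For the energy balance,
\[
\dot H = e_p^\top\dot p + e_r^\top(\dot r_r + \dot r_L) = \int_\Omega \bigl(-e_p^\top e_r + e_r^\top(e_p - v_L + v_L)\bigr)\,d\X + \text{boundary terms}.
\]
The interior contribution collapses to zero, which matches $\dot H = \var_z H^\top(J\var_z H + \mathcal{G}u_\Omega)$ with $J$ skew and $y_\Omega = 0$.

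\textbf{Main obstacle.} The delicate step is the boundary bookkeeping. The volume term $\int e_r^\top e_p$ must be evaluated with $e_r$ kept in weak form, $e_\q^\top\F_\X e_p - b^\top e_p$. Lemma~\ref{lemma:integration} then yields $\int_{\p\Omega} e_p^\top F_\p e_\q$, as in Definition~\ref{def:boundaryPorts}. On $\p\Omega_D$ we need $e_p = v_D$, and this holds because $\dot r_r|_{\p\Omega_D} = 0$ and $v_L|_{\p\Omega_D} = v_D$. That gives the Dirichlet contribution $\tau_D^\top v_D$ and the Neumann contribution $v_N^\top\tau_N$, so $\dot H = \int_{\p\Omega} y_\p^\top u_\p$.

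I expect the real difficulty to be ensuring that the strong-form $e_r$ in the interconnection does not secretly drop these boundary contributions. I would first treat every $e_r$ pairing through its weak form, relying on $r_r$, $r_L$ and $p$ being smooth enough on $\bar\Omega$ for Lemma~\ref{lemma:integration} to apply.
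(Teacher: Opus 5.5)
Your proposal is correct and follows essentially the same route as the paper's proof. The steps match one for one: substitute $\dot r=\dot r_r+v_L$ with $\delta v_L=0$; read off $\dot r_r=\mathcal{M}^{-1}p-v_L$ from the $\delta\dot p$ variation; obtain $\dot p=-e_r(r_r+r_L)$ and the Neumann condition from the $\delta\dot r_r$ variation via Lemma~\ref{lemma:integration}; identify $e_{r_r}=e_{r_L}=e_r(r_r+r_L)$ from $\delta H$; and get $\dot H=\int_{\p\Omega}e_p^\top F_\p e_\q\,d\SX$ by integrating $e_\q^\top\F_\X e_p-e_p^\top\F_\X^* e_\q$ by parts, using $e_p|_{\p\Omega_D}=v_D$.

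The only blemish is your opening expression for $\dot{\mathfrak{L}}_{pr}$ in terms of $e_r$. It is not a pointwise identity, since it drops the boundary term $\int_{\p\Omega}\dot r^\top F_\p e_\q\,d\SX$. You never actually rely on it, because you work with $e_\q^\top\F_\X\delta\dot r_r$ directly, exactly as the paper does.
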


\begin{proof}
The proof of this proposition is provided in Appendix \ref{app:proof_Prop2}.
\end{proof}

Moving us to derive the Stokes-Dirac representation, a virtual power principle inspired by the Hu-Washizu formulation is proposed to treat $\dot{p}$, $\dot{\q}$, $\dot{r}$, and $e_\q$ as independent variables.

\begin{proposition}[Hu-Washizu-based virtual power principle] \label{prop:varPPLE2} The stationary condition of the functional: 
	\begin{equation} \label{eq:PHP_standard}
		\mathcal{P}_{\HW}(\dot{p}, \dot{\q}, \dot{r}, e_\q) = \int_\Omega \left[ \dot{r} \cdot \dot{p} - \dot{\mathfrak{L}}_{p\q r}(p,\q,r, \dot{p},\dot{\q}, \dot{r}) -  e_\q \cdot (\dot{\q} -  \mathcal{F}_\X(r)\dot{r} ) \right] d\X - \int_{\p\Omega_N} \!\!\! \dot{r} \cdot \tau_N \, d\SX,
	\end{equation}
	for all independent variations $(\delta\dot{p}, \delta\dot{\q}, \delta\dot{r}, \delta{e_\q})$ subject to the homogeneous Dirichlet boundary constraint:
	\begin{equation}
		\delta \dot{r}(\SX,t) = 0 \quad \forall \SX \in \p\Omega_D,
	\end{equation}
	defines a physically consistent variational principle for port-Hamiltonian elastodynamics in Stokes-Dirac structure.
\end{proposition}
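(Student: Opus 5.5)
Following the strategy of Proposition \ref{prop:varPPLE1}, consistency of the scheme amounts to showing that $\delta\mathcal{P}_{\HW}=0$ for all admissible independent variations $(\delta\dot p,\delta\dot\q,\delta\dot r,\delta e_\q)$ reproduces exactly the Stokes-Dirac dynamics of Definition \ref{def:PHS_stokesDirac}, together with the Neumann port relation of Definition \ref{def:boundaryPorts}.

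The first step is to expand the rate of the Lagrangian density. From Definition \ref{def:energies},
\[
\dot{\mathfrak{L}}_{p\q r}=\dot p\cdot\mathcal{M}^{-1}p-\dot\q\cdot\frac{\p\Psi}{\p\q}+\dot r\cdot b .
\]
The dependence on $(\dot p,\dot\q,\dot r)$ is linear, with coefficients depending on the current state $(p,\q,r)$. This is the virtual power viewpoint: the state is frozen and only the rates are varied, which is the same convention used in Appendix \ref{app:proof_Prop1}.

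The second step takes the variations one at a time and collects the Euler-Lagrange conditions.
\begin{itemize}
\item Varying $\delta\dot p$ gives $\int_\Omega \delta\dot p\cdot(\dot r-\mathcal{M}^{-1}p)\,d\X=0$, so $\dot r=e_p(p)$. This is the third row of \eqref{eq:theo2:PHS_explicito}.
\item Varying $\delta e_\q$ gives $\dot\q=\F_\X(r)\dot r=\F_\X(r)e_p$. This is the second row.
\item Varying $\delta\dot\q$ gives $-\p\Psi/\p\q\cdot\delta\dot\q+e_\q\cdot\delta\dot\q=0$. This enforces the constitutive law $e_\q=\p\Psi/\p\q$ weakly; it is what identifies the multiplier $e_\q$ with the stress.
\end{itemize}
Varying $\delta\dot r$ yields
\[
\int_\Omega\left[\delta\dot r\cdot\dot p-\delta\dot r\cdot b+e_\q\cdot\F_\X(r)\delta\dot r\right]d\X-\int_{\p\Omega_N}\delta\dot r\cdot\tau_N\,d\SX=0 .
\]

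The third step is the main obstacle, namely the treatment of this last variation through Lemma \ref{lemma:integration}. Here one must note that the operator $\F_\X(r)$ is modulated by $r$ but not by $\dot r$. Varying the rate therefore does not differentiate the operator, and the term is exactly linear in $\delta\dot r$.

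Applying Lemma \ref{lemma:integration} with $v=e_\q$ and $w=\delta\dot r$ turns the stress term into
\[
\int_\Omega \delta\dot r\cdot\F_\X(r)^*e_\q\,d\X+\int_{\p\Omega}\delta\dot r\cdot F_\p(r)e_\q\,d\SX .
\]
The boundary integral over $\p\Omega_D$ vanishes by the constraint $\delta\dot r=0$ there. The remaining boundary integral is over $\p\Omega_N$ and reads $\int_{\p\Omega_N}\delta\dot r\cdot(F_\p(r)e_\q-\tau_N)\,d\SX$.

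Since $\delta\dot r$ is arbitrary in $\Omega$ and on $\p\Omega_N$, the fundamental lemma of the calculus of variations gives two conditions:
\begin{itemize}
\item in the domain, $\dot p=-\F_\X(r)^*e_\q+b$, which is the first row of \eqref{eq:theo2:PHS_explicito} with co-energy $-b$;
\item on $\p\Omega_N$, $\tau_N=F_\p(r)e_\q$.
\end{itemize}

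Finally, on $\p\Omega_D$ the velocity $\dot r=e_p$ is prescribed by the admissible space, and $\tau_D=F_\p(r)e_\q$ emerges as the conjugate output. Collecting all rows recovers \eqref{eq:theo2:PHS_explicito} together with the boundary ports of Definition \ref{def:boundaryPorts}. Skew-adjointness of $\mathcal{J}(x)$ then yields the power balance $\dot H=\int_{\p\Omega}y_\p^\top u_\p\,d\SX$, which completes the consistency argument.
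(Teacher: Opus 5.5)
Your proposal is correct and follows the paper's proof in Appendix \ref{app:proof_Prop3}: you expand $\dot{\mathfrak{L}}_{p\q r}$, read off $\dot r=\mathcal{M}^{-1}p$, $e_\q=\p\Psi/\p\q$ and $\dot\q=\F_\X(r)\dot r$ from the $\delta\dot p$, $\delta\dot\q$ and $\delta e_\q$ variations, and then obtain the momentum balance and the Neumann relation from the $\delta\dot r$ variation via Lemma \ref{lemma:integration}. The only cosmetic difference is that the paper derives the co-energies $(\mathcal{M}^{-1}p,\ \p\Psi/\p\q,\ -b)$ from the first variation of the algebraic Hamiltonian, whereas you take them directly from Definition \ref{def:PHS_stokesDirac} and add an explicit power-balance remark at the end.
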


\begin{proof}
	Demonstrating that the proposed scheme is variationally consistent for the Stokes-Dirac port-Hamiltonian representation is equivalent to proving that the stationarity condition $\delta \mathcal{P}_{\HW} = 0$ with respect to the independent variations $(\delta\dot{p}, \delta\dot{r}, \delta\dot{\q}, \delta e_\q)$  reproduces the system's governing dynamics. The details are given in Appendix \ref{app:proof_Prop3}.
\end{proof}

\begin{proposition}[Lifted Stokes-Dirac PHS] \label{prop:lifted_StokesPHS2}
	The kinematic lifting in Definition \ref{def:kinematic_lifting}, featuring a prescribed velocity field $v_L(\X,t)$ that satisfies the Dirichlet boundary condition $v_L|_{\p\Omega_D} = v_D(\SX,t)$, induces, via the variational principle in Proposition \ref{prop:varPPLE2}, the following lifted PHS:
	\begin{align}
		\underbrace{\begin{bmatrix}
				\dot{p}(\X,t) \\ \dot{\q}(\X,t) \\ \dot{r}_r(\X,t) \\ \dot{r}_L(\X,t)
		\end{bmatrix}}_{\dot{x}(\X,t)} = & \;
		\underbrace{\begin{bmatrix}
				0 & -\F_\X(r_r \!+\! r_L)^* & -I_n & \;\;0 \\ \F_\X(r_r \!+\! r_L) & 0 & 0  & \;\;0\\ I_n & 0 & 0  & \;\;0 \\ 0 & 0 & 0 & \;\;0
		\end{bmatrix}}_{\mathcal{J}(x)=-\mathcal{J}(x)^*}
		\underbrace{\begin{bmatrix}
				e_p(p) \\ e_\q(\q) \\ -b(\X) \\ -b(\X)
		\end{bmatrix}}_{\var_x H(x)} + \underbrace{\begin{bmatrix}
				0 \\ 0 \\  -I_n \\ I_n
		\end{bmatrix}}_{\mathcal{G}(x)} \underbrace{\begin{bmatrix} v_L(\X,t)\end{bmatrix} }_{u_\Omega(\X,t)}  \label{eq:lifted_PHS_continuo2} \\[1mm]
		y_\Omega(\X,t) = & \; \mathcal{G}(x)^\top \delta_x {H}(x) = 0, \notag
	\end{align}
	$ $\\[-7mm]
	\begin{equation}
		H(x) = \int_{\Omega} \left[ \frac{1}{2} p(\X,t)^\top \mathcal{M}(\X)^{-1} p(\X,t) + \Psi(\q) - [r_r(\X,t) + r_L(\X,t)]^\top b(\X) \right] d\X,
		\label{eq:HAMILTONIAN_PHS2_Lifted}
	\end{equation}
	\begin{equation}
		\dot{H} = \int_{\p\Omega} y_\p(\SX,t)^\top u_\p(\SX,t) \, d\SX,
		\label{eq:HAMILTONIANDOT_PHS2_Lifted}
	\end{equation}
	
	with $u_\Omega(\X,t)\in \mathbb{R}^{n}$ and $y_\Omega(\X,t)\in \mathbb{R}^{n}$ being the distributed input and output ports, respectively. The boundary inputs $u_\p(\SX,t)\in \mathbb{R}^{2n}$ and outputs $y_\p(\SX,t)\in \mathbb{R}^{2n}$ are defined as:
	\begin{equation}
		u_\p(\SX,t) = [\tau_N(\SX,t)^\top \; v_D(\SX,t)^\top]^\top \quad , \quad y_\p(\SX,t) = [v_N(\SX,t)^\top \; \tau_D(\SX,t)^\top]^\top, \notag
	\end{equation}
	where $\tau_D(\SX,t), \tau_N(\SX,t) \in \mathbb{R}^n$ represent the generalized boundary tractions, and $v_D(\SX,t), v_N(\SX,t) \in \mathbb{R}^n$ denote the generalized boundary velocities, defined as:
	\begin{align}
		\tau_N(\SX,t) &=  F_\p(r_r \!+\! r_L) e_\q(\q), \qquad  v_N(\SX,t) = e_p(p) \qquad \qquad  (\text{on } \p\Omega_N), \\[1mm]
		\tau_D(\SX,t) &=  F_\p(r_r \!+\! r_L) e_\q(\q), \qquad  v_D(\SX,t) = e_p(p)   \qquad\qquad   (\text{on } \p\Omega_D),
	\end{align}
	where $v_D(\SX,t) = e_p(p)$ on $\p\Omega_D$ follows from $e_p(p)|_{\p\Omega_D} = (\dot{r}_r + v_L)|_{\p\Omega_D} $ with  $\dot{r}_r|_{\p\Omega_D} = 0$ and $v_L|_{\p\Omega_D} = v_D(\SX,t)$.
\end{proposition}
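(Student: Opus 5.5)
We plan to argue in the same way as for Proposition \ref{prop:lifted_jet_bundlePHS}, this time starting from the Hu-Washizu-based functional $\mathcal{P}_{\HW}$ of Proposition \ref{prop:varPPLE2}.

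\textbf{Step 1: substitute the lifting.} Insert the kinematic lifting of Definition \ref{def:kinematic_lifting}, namely $r = r_r + r_L$ and $\dot r = \dot r_r + v_L$, into $\mathcal{P}_{\HW}$. The field $v_L$ is prescribed, so only $(\dot p, \dot\q, \dot r_r, e_\q)$ are varied. Since $\dot r_r \in \mathcal{V}_0$, the admissible variations satisfy $\delta\dot r_r|_{\p\Omega_D} = 0$, which is exactly the constraint of Proposition \ref{prop:varPPLE2}.

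\textbf{Step 2: expand $\dot{\mathfrak{L}}_{p\q r}$ and read off the stationarity conditions.} Write
\[
\dot{\mathfrak{L}}_{p\q r} = e_p\cdot\dot p - e_\q\cdot\dot\q + b\cdot(\dot r_r + v_L),
\]
with $e_p = \mathcal{M}^{-1}p$, and use the constitutive law $e_\q = \p\Psi/\p\q$ in the pairing. The variations then give:
\begin{itemize}
\item $\delta\dot p$: $\dot r_r + v_L = e_p$, which yields $\dot r_r = e_p - v_L$.
\item $\delta e_\q$: $\dot\q = \F_\X(r_r+r_L)(\dot r_r + v_L) = \F_\X(r_r+r_L)\, e_p$.
\item $\delta\dot\q$: consistency of the stress multiplier with $\p\Psi/\p\q$.
\item $\delta\dot r_r$: after Lemma \ref{lemma:integration} moves $\F_\X$ onto $e_\q$, the interior terms give $\dot p = -\F_\X^* e_\q + b$. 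The boundary term $\int_{\p\Omega}\delta\dot r_r\cdot F_\p e_\q\,d\SX$ vanishes on $\p\Omega_D$ and, on $\p\Omega_N$, cancels against the traction term, giving $\tau_N = F_\p e_\q$.
\end{itemize}
Completing the state with the trivial identity $\dot r_L = v_L$ and substituting $e_p$ and $\dot r_r = e_p - v_L$ reproduces \eqref{eq:lifted_PHS_continuo2}. The last two co-energy entries are $\var_{r_r}H = \var_{r_L}H = -b$, because $H$ depends on $r_r$ and $r_L$ only through their sum. The skew-adjointness of $\mathcal{J}(x)$ is inherited from Definition \ref{def:PHS_stokesDirac}, since the added row and column are zero.

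\textbf{Step 3: the distributed output and the power balance.} The distributed output is
\[
y_\Omega = \mathcal{G}^\top\var_x H = -(-b) + (-b) = 0.
\]
This cancellation is the power neutrality of the distributed port. For the energy balance, compute
\[
\dot H = \int_\Omega \bigl(e_p\cdot\dot p + e_\q\cdot\dot\q - b\cdot(\dot r_r + \dot r_L)\bigr)\,d\X .
\]
Substituting the dynamics, the $b$ terms cancel because $\dot r_r + \dot r_L = e_p$; this uses the input $v_L$ entering the $\dot r_r$ and $\dot r_L$ rows with opposite signs. What remains is
\[
\int_\Omega \bigl(e_\q\cdot\F_\X e_p - e_p\cdot\F_\X^* e_\q\bigr)\,d\X,
\]
which Lemma \ref{lemma:integration} turns into $\int_{\p\Omega} e_p^\top F_\p e_\q\,d\SX$. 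Splitting this boundary integral over $\p\Omega_N$ and $\p\Omega_D$ as in Definition \ref{def:boundaryPorts} gives \eqref{eq:HAMILTONIANDOT_PHS2_Lifted}. The identification $e_p|_{\p\Omega_D} = v_D$ follows from $\dot r_r|_{\p\Omega_D} = 0$ and $v_L|_{\p\Omega_D} = v_D$.

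\textbf{Main obstacle.} We expect the delicate part to be the boundary term in the $\delta\dot r_r$ variation. The state-dependent operator $\F_\X(r_r+r_L)$ must be handled carefully, so that the variation of the modulation through $r$ is not counted twice (the rate form treats $r$ as fixed at the current instant). We must also check that the Dirichlet traction $\tau_D$ appears only as an output and never as a constraint, since $\delta\dot r_r$ vanishes on $\p\Omega_D$. We would first check this at the level of the rate functional with $r$ frozen, exactly as in Appendix \ref{app:proof_Prop3}.
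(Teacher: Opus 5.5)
Your proposal is correct and follows essentially the same route as the paper's proof. You substitute $r = r_r + r_L$, $\dot r = \dot r_r + v_L$ into $\mathcal{P}_{\HW}$ with $\delta v_L = 0$, and read off the $\delta\dot p$, $\delta\dot\q$, $\delta e_\q$ and $\delta\dot r_r$ conditions, the last via Lemma~\ref{lemma:integration} with $r$ frozen and $\delta\dot r_r|_{\p\Omega_D}=0$. You then identify the co-energies from the algebraic Hamiltonian and close the power balance using $\dot r_r + \dot r_L = e_p$ and Lemma~\ref{lemma:integration}; your explicit check $y_\Omega = 0$ is a small addition the paper leaves implicit.

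One notational point to tidy: in the expansion of $\dot{\mathfrak{L}}_{p\q r}$, the term paired with $\dot\q$ must be the function $\p\Psi/\p\q$ evaluated at $\q$, not the independent multiplier $e_\q$. Taken literally, your formula makes the $\delta\dot\q$ condition vacuous and turns the $\delta e_\q$ condition into $\F_\X(r)\dot r = 0$. Your list of stationarity conditions clearly assumes the former reading.
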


\begin{proof}
	The proof of this proposition is provided in Appendix \ref{app:proof_Prop4}.
\end{proof}

Before proceeding with the finite element discretization, two remarks regarding the physical interpretation and general applicability of the proposed framework are discussed. 

First, the distributed conjugate pair $(u_\Omega, y_\Omega)$ demonstrates the energy neutrality of the lifting framework. As derived in the lifted PHS models, the distributed output vanishes ($y_\Omega = 0$), resulting in zero power exchange within the domain ($\int_\Omega y_\Omega^\top u_\Omega \, d\X = 0$). Physically, this implies that the arbitrary interior lifting velocity $v_L|_\Omega = v_\Omega$ does not inject or extract energy from the system. Thus, the power balance remains governed  by the physical boundary ports. 

Second, although this lifting methodology has been derived using virtual power principles inspired by the Hamilton-Pontryagin and Hu-Washizu formulations, its core mechanism is independent of these specific functionals. Consequently, this strategy can be integrated into alternative variational schemes. The theoretical guarantee of yielding a structure-preserving lifted PHS model relies entirely on the capacity of the base (unlifted) variational formulation to establish a PHS structure. Therefore, the proposed lifting framework is not restricted to the jet-bundle and Stokes-Dirac representations presented herein, but stands as a versatile theoretical tool capable of being coupled with other geometric formalisms in the literature.

\section{Structure-preserving discretization with strong imposition of Dirichlet boundary velocities} \label{sec:FEM}

Building upon the variational framework, the naturally induced mixed finite element schemes are used to construct structure-preserving spatial discretizations. These discrete models guarantee the strong imposition of non-homogeneous Dirichlet boundary velocities while avoiding DAEs, strictly retaining a pure ODE system topology.

\subsection{Finite element considerations} \label{sec:FEM_definitions}

To discretize the continuous lifted PHS models, the spatial domain $\Omega$ is partitioned into $n_e$ non-overlapping elements $\Omega^e$, such that $\Omega \approx \bigcup_{e=1}^{n_e} \Omega^e$. Let $N\!N$ denote the total number of nodes in the mesh. These nodes are topologically partitioned into three mutually exclusive sets: interior nodes ($N\!N_\Omega$), nodes strictly located on the Neumann boundary ($N\!N_{\p\Omega_N}$), and nodes strictly located on the Dirichlet boundary ($N\!N_{\p\Omega_D}$). 

The kinematic degrees of freedom (DOFs) are categorized into Dirichlet DOFs, $N_D = n N\!N_{\p\Omega_D}$, Neumann DOFs, $N_N = n N\!N_{\p\Omega_N}$, and \textit{free} DOFs, $N_F = n(N\!N_\Omega + N\!N_{\p\Omega_N})$. For the Stokes-Dirac  formulation, the generalized strain and stress fields require $N_S$ global DOFs. To maintain generality, this dimension supports both continuous nodal interpolations ($N_S = m N\!N$) and discontinuous element-wise interpolations ($N_S = m \sum_{e=1}^{n_e} n_s^e$, where $n_s^e$ is the number of local evaluation points per element).

At the element level, the continuous fields are approximated using local shape functions. Let $N\!N_F^e$, $N\!N_D^e$, and $N\!N_N^e$ denote the number of free nodes, Dirichlet nodes, and Neumann nodes per element, respectively. To accommodate the kinematic lifting within the discretized domain, the discrete lifting functions are formulated first by separating their interior extensions from their boundary values:
\begin{align}
	\tilde{r}_L^e(\X,t) &= N_{r_\Omega}^e(\X) \hat{r}_\Omega^e(t) + N_{r_D}^e(\X) \hat{r}_D^e(t), \notag \\[1mm]
	\tilde{v}_L^e(\X,t) &= N_{r_\Omega}^e(\X) \hat{v}_\Omega^e(t) + N_{r_D}^e(\X) \hat{v}_D^e(t), \notag
\end{align}
where $N_{r_\Omega}^e(\X) \in \mathbb{R}^{n \times n N\!N_F^e}$ and $N_{r_D}^e(\X) \in \mathbb{R}^{n \times n N\!N_D^e}$ are the shape function associated with the free nodes and Dirichlet nodes, respectively. The local coefficient vectors $\hat{r}_\Omega^e(t), \hat{v}_\Omega^e(t) \in \mathbb{R}^{n N\!N_F^e}$ represent the arbitrary extensions of the prescribed boundary kinematics into the interior domain, while $\hat{r}_D^e(t), \hat{v}_D^e(t) \in \mathbb{R}^{n N\!N_D^e}$ represent the prescribed boundary values. Note that at the Dirichlet boundary ($\SX \in \p\Omega_D$), the interior shape functions vanish ($N_{r_\Omega}^e(\SX) = 0$). Consequently, the lifting only enforces the prescribed Dirichlet boundary conditions $\tilde{r}_L^e(\SX,t) = N_{r_D}^e(\SX) \hat{r}_D^e(t) = \tilde{r}_D^e(\SX,t)$ and $\tilde{v}_L^e(\SX,t) = N_{r_D}^e(\SX) \hat{v}_D^e(t) = \tilde{v}_D^e(\SX,t)$.

Building upon this lifting, the total element displacement $\tilde{r}^e(\X,t)$, total velocity $\dot{\tilde{r}}^e(\X,t)$, and generalized momentum $\tilde{p}^e(\X,t)$ are decomposed into the unknown relative dynamics and the prescribed lifting fields:
\begin{align}
	\tilde{p}^e(\X,t) &= N_p^e(\X) \hat{p}^e(t), \notag \\[1mm]
	\tilde{r}^e(\X,t) &= N_r^e(\X) \hat{r}_r^e(t) + \tilde{r}_L^e(\X,t) = N_r^e(\X) \hat{r}_r^e(t) + N_{r_\Omega}^e(\X) \hat{r}_\Omega^e(t) + N_{r_D}^e(\X) \hat{r}_D^e(t), \notag \\[1mm]
	\dot{\tilde{r}}^e(\X,t) &= N_r^e(\X) \dot{\hat{r}}_r^e(t) + \tilde{v}_L^e(\X,t) = N_r^e(\X) \dot{\hat{r}}_r^e(t) + N_{r_\Omega}^e(\X) \hat{v}_\Omega^e(t) + N_{r_D}^e(\X) \hat{v}_D^e(t), \notag
\end{align}
where $N_p^e(\X), N_r^e(\X) \in \mathbb{R}^{n \times n N\!N_F^e}$ are the shape functions, and $\hat{p}^e(t), \hat{r}_r^e(t) \in \mathbb{R}^{n N\!N_F^e}$ are the local time-dependent coefficients of the respective Galerkin approximations. 
For the strain, stress, and traction fields, the local approximations are defined as:
\begin{equation}
	\tilde{\q}^e(\X,t) = N_\q^e(\X) \hat{\q}^e(t), \qquad \tilde{e}_\q^e(\X,t) = N_{e_\q}^e(\X) \hat{e}_\q^e(t), \qquad \tilde{\tau}_N^e(\SX,t) = N_{\tau_N}^e(\SX) \hat{\tau}_N^e(t), \notag
\end{equation}
where $N_\q^e(\X), N_{e_\q}^e(\X) \in \mathbb{R}^{m \times m n_s^e}$ are the local shape functions for the $n_s^e$ evaluation points in the element, with corresponding  local time-dependent coefficient vectors $\hat{\q}^e(t), \hat{e}_\q^e(t) \in \mathbb{R}^{m n_s^e}$. In addition,  $N_{\tau_N}^e(\SX) \in \mathbb{R}^{n \times n N\!N_N^e}$ denotes the shape functions for imposed traction on $\p\Omega_N$, with $\hat{\tau}_N^e(t) \in \mathbb{R}^{n N\!N_N^e}$ being the prescribed coefficients.

The global vectors of time-dependent coefficients $\hat{p}(t), \hat{r}_r(t), \hat{r}_\Omega(t), \hat{v}_\Omega(t) \in \mathbb{R}^{N_F}$, $\hat{r}_D(t), \hat{v}_D(t) \in \mathbb{R}^{N_D}$, $\hat{\q}(t), \hat{e}_\q(t) \in \mathbb{R}^{N_S}$, and $\hat{\tau}_N(t) \in \mathbb{R}^{N_N}$ are related to their local counterparts through the standard boolean assembly matrices $L_p^e, L_r^e, L_{r_\Omega}^e \in \mathbb{R}^{n N\!N_F^e \times N_F}$, $L_{r_D}^e \in \mathbb{R}^{n N\!N_D^e \times N_D}$, $L_\q^e, L_{e_\q}^e \in \mathbb{R}^{m n_s^e \times N_S}$, and $L_{\tau_N}^e \in \mathbb{R}^{n N\!N_N^e \times N_N}$.

\begin{remark} \label{rem:Galerkin_strains}
	To guarantee the PHS structure at the discrete level, the shape functions of power-conjugate variables cannot be chosen arbitrarily. They must be linearly related through symmetric positive-definite mapping matrices to ensure that the resulting discrete mass and constitutive operators preserve symmetry and positive definiteness. Specifically, the local shape functions must satisfy $N_p^e(\X) = A_p^e N_r^e(\X)$ and $N_\q^e(\X) = A_\q^e N_{e_\q}^e(\X)$, where $A_p^e = (A_p^e)^\top > 0 \in \mathbb{R}^{n \times n}$ and $A_\q^e = (A_\q^e)^\top > 0 \in \mathbb{R}^{m \times m}$. In standard Galerkin finite element schemes, these mappings are conventionally chosen as identity matrices ($A_p^e = I_n$, $A_\q^e = I_m$). Furthermore, to ensure mathematical consistency in the lifting definition, the shape functions for the relative dynamics and the interior lifting must be distributed over the same spatial subspace, making it mandatory that $N_r^e(\X) = N_{r_\Omega}^e(\X)$.
\end{remark}

\subsection{Finite-dimensional port-Hamiltonian models}

This subsection develops the finite element spatial discretization of the continuous models. The procedure is structured in two stages for each representation. First, the local stationary conditions are established over an individual element. Applying integration by parts (Lemma \ref{lemma:integration}) to these conditions yields the standard local weak forms required for continuous Galerkin discretizations. Second, the local approximations are substituted into the weak forms and assembled over the global domain, yielding the finite-dimensional PHS models. The formulation is presented first for the jet-bundle approach, followed by the Stokes-Dirac approach.

\begin{proposition}[Local stationary condition for the lifted jet-bundle PHS] \label{prop:local_weak_form_jet}
	The local stationary condition of the lifted jet-bundle PHS, induced by the Hamilton-Pontryagin-based virtual power principle, is defined over an element domain $\Omega^e$ and its Neumann boundary $\p\Omega_N^e$ as:
	\begin{align}
		\delta_{\dot{r}_r} \mathcal{P}^e_{\HP} &= \int_{\Omega^e} \delta \dot{r}_r^e\cdot \big[\dot{p}^e + \F_\X(r^e)^*e_\q(\q(r^e)) -b\big] d\X - \int_{\p\Omega^e_N} \!\!\! \delta{\dot{r}_r^e}\cdot \big[\tau^e_N - F_\p(r^e) e_\q(\q(r^e))\big] d\SX , \label{eq:weak_r_jet} \\[1mm]
		\delta_{\dot{p}} \mathcal{P}^e_{\HP} &= \int_{\Omega^e} \delta \dot{p}^e\cdot \big[\dot{r}_r^e + v^e_L - \mathcal{M}^{-1}p^e\big] d\X, \label{eq:weak_p_jet}
	\end{align}
	for all kinematically admissible local variations $\delta\dot{r}_r^e$ and arbitrary local variations $\delta\dot{p}^e$.
\end{proposition}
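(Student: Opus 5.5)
The plan is to restrict the Hamilton--Pontryagin-based functional of Proposition~\ref{prop:varPPLE1} to a single element $\Omega^e$, insert the kinematic lifting of Definition~\ref{def:kinematic_lifting}, and compute the first variation with respect to the two independent rate fields $\dot{r}_r^e$ and $\dot{p}^e$. First I will write the element functional
\begin{equation*}
\mathcal{P}^e_{\HP} = \int_{\Omega^e} \big[ (\dot{r}_r^e + v_L^e)\cdot \dot{p}^e - \dot{\mathfrak{L}}_{pr}(p^e,r^e,\dot{p}^e,\dot{r}^e) \big] d\X - \int_{\p\Omega^e_N} (\dot{r}_r^e + v_L^e)\cdot \tau_N^e \, d\SX,
\end{equation*}
with $r^e = r_r^e + r_L^e$ and $\dot{r}^e = \dot{r}_r^e + v_L^e$. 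Since $v_L^e$ is prescribed, its variation is zero, so $\delta\dot{r}^e = \delta\dot{r}_r^e$. Because $\delta\dot{r}_r^e \in \mathcal{V}_0$, the Dirichlet restriction of Proposition~\ref{prop:varPPLE1} holds automatically.

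Next, I will expand the rate of the Lagrangian density from Definition~\ref{def:energies} by the chain rule:
\begin{equation*}
\dot{\mathfrak{L}}_{pr} = \dot{p}\cdot\mathcal{M}^{-1}p - e_\q(\q(r))\cdot\F_\X(r)\dot{r} + \dot{r}\cdot b,
\end{equation*}
using $\dot{\q} = \F_\X(r)\dot{r}$ and $e_\q = \p\Psi/\p\q$. The variation with respect to $\dot{p}^e$ is then immediate. It collects the terms $\delta\dot{p}^e\cdot(\dot{r}_r^e + v_L^e)$ and $-\delta\dot{p}^e\cdot\mathcal{M}^{-1}p^e$, which gives \eqref{eq:weak_p_jet}.

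The variation with respect to $\dot{r}_r^e$ produces three contributions: the inertial term $\delta\dot{r}_r^e\cdot\dot{p}^e$, the elastic term $e_\q\cdot\F_\X(r^e)\delta\dot{r}_r^e$, the body-force term $-\delta\dot{r}_r^e\cdot b$, and the boundary term $-\delta\dot{r}_r^e\cdot\tau_N^e$ on $\p\Omega_N^e$. The key step is to apply Lemma~\ref{lemma:integration} on $\Omega^e$ with $v = e_\q(\q(r^e))$ and $w = \delta\dot{r}_r^e$. This converts the elastic term into $\delta\dot{r}_r^e\cdot\F_\X(r^e)^* e_\q$ plus the boundary integral $\int_{\p\Omega^e}\delta\dot{r}_r^e\cdot F_\p(r^e)e_\q\,d\SX$.

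The main obstacle is accounting for the element boundary $\p\Omega^e$, which consists of three parts. On the Dirichlet part the variation vanishes. On the Neumann part the boundary integral combines with the traction term to give $-\int_{\p\Omega_N^e}\delta\dot{r}_r^e\cdot[\tau_N^e - F_\p e_\q]$. On the interior inter-element faces the contributions must be argued away, since they cancel pairwise upon assembly by continuity of $\delta\dot{r}_r$ and opposite normals; this corresponds to traction continuity at the continuous level, as in Appendix~\ref{app:proof_Prop2}. The local statement is therefore to be understood modulo these interface terms, which I will drop with this justification. One further point needs checking: the state dependence of $\F_\X(r)$ introduces no extra terms, because the variation is taken in the rates with the configuration $r$ held fixed.
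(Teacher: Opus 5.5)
Your proposal is correct and follows the paper's route. The paper's proof simply restricts the derivation in Appendix~\ref{app:proof_Prop2} to a single element $\Omega^e$: it substitutes $\dot{r} = \dot{r}_r + v_L$ with $\delta v_L = 0$, varies separately in $\dot{p}$ and $\dot{r}_r$ with the configuration $r$ held fixed, and applies Lemma~\ref{lemma:integration}.

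You also treat the inter-element interface terms explicitly, which the paper leaves implicit. Their cancellation relies on continuity of the traction $F_\p(r)e_\q$ for the continuous fields. That continuity fails for the discontinuous discrete stresses, which is why the proof of Theorem~\ref{theo:discrete_lifted_jet} integrates back by parts to the un-integrated element weak form, which carries no interface terms.
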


\begin{proof}
	The derivation follows directly from the proof of Proposition \ref{prop:lifted_jet_bundlePHS} in Appendix \ref{app:proof_Prop2}.
\end{proof}

\begin{theorem}[Discrete lifted jet-bundle PHS] \label{theo:discrete_lifted_jet}
	Let the continuous fields be discretized via the local Galerkin approximations according with Remark \ref{rem:Galerkin_strains}: $\tilde{p}^e(\X,t) = N_p^e(\X) \hat{p}^e(t)$ and $\tilde{r}^e(\X,t) = N_r^e(\X) \hat{r}_r^e(t) + N_{r_\Omega}^e(\X) \hat{r}_\Omega^e(t) + N_{r_D}^e(\X) \hat{r}_D^e(t)$, together with $\tilde{v}_L^e(\X,t) = N_{r_\Omega}^e(\X) \hat{v}_\Omega^e(t) + N_{r_D}^e(\X) \hat{v}_D^e(t)$ and $\tilde{\tau}_N^e(\SX,t) = N_{\tau_N}^e(\SX) \hat{\tau}_N^e(t)$. The spatial discretization following Proposition \ref{prop:local_weak_form_jet} yields the finite-dimensional PHS with strong imposition of Dirichlet boundary velocities:
	\begin{align}
		\underbrace{\begin{bmatrix} \dot{\hat{p}}(t) \\ \dot{\hat{r}}_r(t) \\ \dot{\hat{r}}_D(t) \\ \dot{\hat{r}}_\Omega(t) \end{bmatrix}}_{\dot{\hat{z}}(t)} &= 
		\underbrace{\begin{bmatrix} 0 & -\hat{M}_{pr}^{-\top} & 0 & \;\;0 \\ \hat{M}_{pr}^{-1} & 0 &  0 & \;\;0 \\ 0 & 0 & 0 & \;\;0 \\ 0  & 0 & 0 & \;\;0 \end{bmatrix}}_{\hat{J} = -\hat{J}^\top} 
		\underbrace{\begin{bmatrix} \hat{e}_p(\hat{p}) \\ \hat{e}_{r_r}(\hat{r}) \\ \hat{e}_{r_D}(\hat{r}) \\ \hat{e}_{r_\Omega}(\hat{r}) \end{bmatrix}}_{\nabla_{\hat{z}} \hat{H}} + 
		\underbrace{\begin{bmatrix} \hat{M}_{pr}^{-\top}\hat{B}_N & 0 & 0\\ 0 & -\hat{M}_{pr}^{-1}\hat{M}_{pD} & -I_{N_F} \\ 0 & I_{N_D} & 0 \\ 0 & 0 & I_{N_F} \end{bmatrix}}_{\hat{G}} 
		\underbrace{\begin{bmatrix} \hat{\tau}_N(t) \\ \hat{v}_D(t) \\  \hat{v}_\Omega(t) \end{bmatrix}}_{\hat{u}(t)}, \label{eq:theo_discrete_jet_dyn} \\[1mm]
		\hat{y}(t) &= \hat{G}^\top \nabla_{\hat{z}} \hat{H}(\hat{z}) =
		\begin{bmatrix} \hat{v}_N(t)^\top & \hat{\tau}_D(t)^\top & 0 \end{bmatrix}^\top, \label{eq:theo_discrete_jet_out} \\[1mm]
		\hat{H}(\hat{z}) &= \frac{1}{2} \hat{p}(t)^\top \hat{M}_{pp} \hat{p}(t) + \hat{\Psi}(\hat{r}) - [\hat{r}_r(t) + \hat{r}_\Omega(t)]^\top \hat{b}_r - \hat{r}_D(t)^\top \hat{b}_D, \label{eq:discrete_hamiltonian}
	\end{align}
	with the discrete strain energy defined as $\hat{\Psi}(\hat{r})  = \sum_{e=1}^{n_e} \int_{\Omega^e} \Psi(\q(\tilde{r}^e)) \, d\X$, and the notation $\hat{r} = (\hat{r}_r,\hat{r}_D,\hat{r}_\Omega)$ introduced for clarity. The global matrices and vectors are assembled from their local contributions as:
	\begin{align}
		\hat{M}_{pr} &= \sum_{e=1}^{n_e} (L_p^e)^\top \!\! \int_{\Omega^e} (N_p^e)^\top N_r^e \, d\X \, L_r^e,  \quad & \hat{M}_{pp} &= \sum_{e=1}^{n_e} (L_p^e)^\top \!\! \int_{\Omega^e} (N_p^e)^\top \mathcal{M}^{-1} N_p^e \, d\X \, L_p^e,  \notag \\
		\hat{M}_{pD} &= \sum_{e=1}^{n_e} (L_p^e)^\top \!\! \int_{\Omega^e} (N_p^e)^\top N_{r_D}^e \, d\X \, L_{r_D}^e, \quad & \hat{B}_N &= \sum_{e=1}^{n_e} (L_r^e)^\top \!\! \int_{\p\Omega_N^e} \!\!\! (N_r^e)^\top N_{\tau_N}^e \, d\SX \, L_{\tau_N}^e, \notag \\
		\hat{F}_{e_\q r}(\hat{r}) &= \sum_{e=1}^{n_e} (L_r^e)^\top \int_{\Omega^e} ({\F}_\X(\tilde{r}^e)N_r^e)^\top e_\q(\q(\tilde{r}^e)) \, d\X,  \quad &  \hat{b}_r &= \sum_{e=1}^{n_e} (L_r^e)^\top \!\! \int_{\Omega^e} (N_r^e)^\top b \, d\X, \notag \\
		\hat{F}_{e_\q D}(\hat{r}) &= \sum_{e=1}^{n_e} (L_{r_D}^e)^\top \int_{\Omega^e} ({\F}_\X(\tilde{r}^e)N_{r_D}^e)^\top e_\q(\q(\tilde{r}^e)) \, d\X,  \quad &  \hat{b}_D &= \sum_{e=1}^{n_e} (L_{r_D}^e)^\top \!\! \int_{\Omega^e} (N_{r_D}^e)^\top b \, d\X, \notag 
	\end{align}
	together with $I_{N_F} \in \mathbb{R}^{N_F \times N_F}$ and $I_{N_D} \in \mathbb{R}^{N_D \times N_D}$ being identity matrices.  The discrete co-energy variables evaluate to $\hat{e}_p(\hat{p}) = \hat{M}_{pp} \hat{p}$, $\hat{e}_{r_D}(\hat{r}) = \hat{F}_{e_\q D}(\hat{r}) - \hat{b}_D$, and $\hat{e}_{r_r}(\hat{r}) = \hat{e}_{r_\Omega}(\hat{r}) = \hat{F}_{e_\q r}(\hat{r}) - \hat{b}_r$. 
\end{theorem}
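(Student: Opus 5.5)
We substitute the local Galerkin approximations into the two local stationary conditions of Proposition \ref{prop:local_weak_form_jet}, with discrete test functions in the same spaces as the trial functions. Concretely, $\delta\dot{\tilde r}_r^e = N_r^e\,\delta\hat{r}^e$ with $\delta\hat r^e = L_r^e\,\delta\hat r$, which vanishes on $\p\Omega_D$ because $N_r^e$ belongs to the free nodes. Likewise $\delta\dot{\tilde p}^e = N_p^e L_p^e\,\delta\hat p$. We then sum over elements and use the arbitrariness of $\delta\hat p$ and $\delta\hat r$ to obtain two global vector equations. The remaining equations for $\dot{\hat r}_D$ and $\dot{\hat r}_\Omega$ follow from the definitions $\dot{\hat r}_D = \hat v_D$ and $\dot{\hat r}_\Omega = \hat v_\Omega$ of the prescribed lifting kinematics. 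These give the third and fourth block rows of \eqref{eq:theo_discrete_jet_dyn}.

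\emph{Step 1 (momentum equation \eqref{eq:weak_p_jet}).} Inserting the approximations and assembling gives
\begin{equation*}
\hat M_{pr}\dot{\hat r}_r + \hat M_{pr}\hat v_\Omega + \hat M_{pD}\hat v_D - \hat M_{pp}\hat p = 0 .
\end{equation*}
For the $\hat v_\Omega$ term to assemble into $\hat M_{pr}$ we use $N_{r_\Omega}^e = N_r^e$ from Remark \ref{rem:Galerkin_strains}, together with $L_{r_\Omega}^e = L_r^e$. With $N_p^e = A_p^e N_r^e$, the matrix $\hat M_{pr}$ is symmetric positive definite, hence invertible. Solving then yields the second block row $\dot{\hat r}_r = \hat M_{pr}^{-1}\hat e_p - \hat M_{pr}^{-1}\hat M_{pD}\hat v_D - \hat v_\Omega$, with $\hat e_p = \hat M_{pp}\hat p$.

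\emph{Step 2 (equation \eqref{eq:weak_r_jet}).} Before discretizing, we undo the integration by parts with Lemma \ref{lemma:integration} on each element. The term $\delta\dot r_r\cdot\F^*_\X e_\q$ becomes $(\F_\X(r)\delta\dot r_r)\cdot e_\q$. Its boundary contribution vanishes on $\p\Omega_D$ because the variation vanishes there, and cancels against the $F_\p e_\q$ term on $\p\Omega_N$. Interelement boundary terms cancel by continuity of the test functions. This leaves
\begin{equation*}
\hat M_{pr}^\top\dot{\hat p} + \hat F_{e_\q r}(\hat r) - \hat b_r - \hat B_N\hat\tau_N = 0 ,
\end{equation*}
which is the first block row after multiplying by $\hat M_{pr}^{-\top}$. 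It remains to identify co-energies. Differentiating $\hat H$ with respect to $\hat r_r$ by the chain rule gives $\sum_e (L_r^e)^\top\int (\p\q/\p\tilde r\,N_r^e)^\top e_\q\,d\X - \hat b_r$. Since $\dot{\q} = \F_\X(r)\dot r$, the directional derivative of $\q(\tilde r)$ along $N_r^e$ is $\F_\X(\tilde r)N_r^e$, so this equals $\hat F_{e_\q r} - \hat b_r$. The same computation gives $\hat e_{r_\Omega}$, where again $N_{r_\Omega}^e = N_r^e$ makes it coincide with $\hat e_{r_r}$. Replacing $N_r^e$ by $N_{r_D}^e$ gives $\hat e_{r_D}$.

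\emph{Step 3 (outputs and power balance).} Skew-symmetry of $\hat J$ is immediate. For the output we compute $\hat G^\top\nabla\hat H$ block by block. The first block is $\hat B_N^\top\hat M_{pr}^{-1}\hat e_p$, which is the discrete Neumann velocity $\hat v_N$. The third block is $-\hat e_{r_r} + \hat e_{r_\Omega} = 0$, which reproduces the power neutrality $y_\Omega = 0$. The second block is $-\hat M_{pD}^\top\hat M_{pr}^{-\top}\hat e_{r_r} + \hat e_{r_D}$, which we identify as the discrete reaction traction $\hat\tau_D$.

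The main obstacle is this identification of $\hat\tau_D$: showing it is the consistent Dirichlet reaction force, i.e.\ the power conjugate of $\hat v_D$ that discretizes $\int_{\p\Omega_D}\tau_D\cdot v_D$. We would argue through the residual of the momentum balance tested with $N_{r_D}^e$. Using the first row, $\hat M_{pr}^{-\top}\hat e_{r_r} = -\dot{\hat p} + \hat M_{pr}^{-\top}\hat B_N\hat\tau_N$, so $\hat\tau_D$ equals inertial plus internal forces at the Dirichlet nodes. This mirrors Lemma \ref{lemma:integration} applied with the boundary test function, and it is the discrete analogue of the continuous $\tau_D$ in Proposition \ref{prop:lifted_jet_bundlePHS}.
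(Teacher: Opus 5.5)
Your Steps 1--3 follow the paper's proof essentially step for step:
\begin{itemize}
\item the co-energies come from the chain rule, and $N_{r_\Omega}^e=N_r^e$ gives $\hat e_{r_r}=\hat e_{r_\Omega}$;
\item the assembled momentum-rate condition is solved using $L_{r_\Omega}^e=L_r^e$ and the invertibility of $\hat M_{pr}$;
\item the balance equation is obtained after moving $\F_\X$ onto the test function;
\item the lifting rows $\dot{\hat r}_D=\hat v_D$, $\dot{\hat r}_\Omega=\hat v_\Omega$ are appended, and $\hat y_\Omega=-\hat e_{r_r}+\hat e_{r_\Omega}=0$ is checked.
\end{itemize}
One small correction: inter-element terms do not cancel merely because the test functions are continuous, since $F_\p(\tilde r^e)\,e_\q(\q(\tilde r^e))$ jumps across element faces. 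The right justification, implicit in the paper, is that you discretize the integrated-by-parts global weak form, in which such terms never appear.

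What you call the main obstacle is not part of the paper's proof, and the resolution you propose is incorrect. In the paper, $\hat v_N$ and $\hat\tau_D$ are simply names for the blocks of $\hat G^\top\nabla_{\hat z}\hat H$. The power balance $\dot{\hat H}=\hat y_\p^\top\hat u_\p$ then follows from $\hat J=-\hat J^\top$ and $\hat y_\Omega=0$, with no physical identification needed. If you substitute the first block row as you suggest, you get
\[
\hat\tau_D=\hat M_{pD}^\top\dot{\hat p}+\hat F_{e_\q D}(\hat r)-\hat b_D-\hat M_{pD}^\top\hat M_{pr}^{-\top}\hat B_N\hat\tau_N .
\]
This is the Dirichlet-node force you want, $\hat M_{pD}^\top\dot{\hat p}+\hat F_{e_\q D}-\hat b_D$, plus a nonlocal Neumann cross term.

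The same issue affects the first block. Write $\dot{\hat d}=\dot{\hat r}_r+\hat v_\Omega$. Then $\hat v_N=\hat B_N^\top\hat M_{pr}^{-1}\hat e_p=\hat B_N^\top\big(\dot{\hat d}+\hat M_{pr}^{-1}\hat M_{pD}\hat v_D\big)$, which is not the Neumann nodal velocity $\hat B_N^\top\dot{\hat d}$. The term $\hat\tau_N^\top\hat B_N^\top\hat M_{pr}^{-1}\hat M_{pD}\hat v_D$ cancels in the sum $\hat v_N^\top\hat\tau_N+\hat\tau_D^\top\hat v_D$, so the total supplied power is correct. However, the individual port labels carry this shifted cross term. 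Your identification therefore holds only when $\hat\tau_N=0$ or $\hat M_{pD}^\top\hat M_{pr}^{-\top}\hat B_N=0$ (e.g.\ with a lumped mass). Drop that paragraph rather than try to prove it; the theorem does not need it.
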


\begin{proof}
		The proof of this Theorem is provided in Appendix \ref{app:Theo1}.
\end{proof}

\begin{corollary} \label{cor:FEM_equivalence1}
	Let the interior lifting extensions be strictly zero ($\hat{r}_\Omega(t) = 0$ and $\hat{v}_\Omega(t) = 0$). Furthermore, let the momentum shape functions be defined as $N_p^e(\X) = \mathcal{M}(\X) N_r^e(\X)$. Under these choices, the spatial discretization from Theorem \ref{theo:discrete_lifted_jet} reduces to an equivalent single-field displacement-based finite element formulation with strong imposition of Dirichlet boundary velocities. By defining $\hat{p}_{r}(t) = \hat{M}_{rr} \hat{p}(t)$, the discrete PHS takes the canonical structure:
	\begin{align}
		\underbrace{\begin{bmatrix} \dot{\hat{p}}_{r}(t) \\ \dot{\hat{r}}_r(t) \\ \dot{\hat{r}}_D(t) \end{bmatrix}}_{\dot{\hat{z}}(t)} &= 
		\underbrace{\begin{bmatrix} 0 & -I_{N_F} & 0 \\ I_{N_F} & 0 &  0 \\ 0 & 0 & 0 \end{bmatrix}}_{\hat{J} = -\hat{J}^\top} 
		\underbrace{\begin{bmatrix} \hat{e}_{p_{r}}(\hat{p}_{r}) \\ \hat{e}_{r_r}(\hat{r}) \\ \hat{e}_{r_D}(\hat{r}) \end{bmatrix}}_{\nabla_{\hat{z}} \hat{H}} + 
		\underbrace{\begin{bmatrix} \hat{B}_N & 0 \\ 0 & -\hat{M}_{{rr}}^{-1}\hat{M}_{{r D}} \\ 0 & I_{N_D} \end{bmatrix}}_{\hat{G}} 
		\underbrace{\begin{bmatrix} \hat{\tau}_N(t) \\ \hat{v}_D(t) \end{bmatrix}}_{\hat{u}_\p(t)}, \label{eq:FEM_ODE_PHS} \\[1mm]
		\hat{y}_\p(t) &= \hat{G}^\top \nabla_{\hat{z}} \hat{H}(\hat{z}) =
		\begin{bmatrix} \hat{v}_N(t)^\top & \hat{\tau}_D(t)^\top \end{bmatrix}^\top, \\[1mm]
		\hat{H}(\hat{z}) &= \frac{1}{2} \hat{p}_{r}(t)^\top \hat{M}_{rr}^{-1} \hat{p}_{r}(t) + \hat{\Psi}(\hat{r}) - \hat{r}_r(t)^\top \hat{b}_r - \hat{r}_D(t)^\top \hat{b}_D. \label{eq:discrete_hamiltonianFEM}
	\end{align}
	The additional involved matrices are defined as: $ $\\[-7mm]
	\begin{align}
		\hat{M}_{rr} &= \sum_{e=1}^{n_e} (L_r^e)^\top \!\! \int_{\Omega^e} (N_r^e)^\top \mathcal{M} N_r^e \, d\X \, L_r^e,  \quad & \hat{M}_{rD} &= \sum_{e=1}^{n_e} (L_{r_D}^e)^\top \!\! \int_{\Omega^e} (N_{r_D}^e)^\top \mathcal{M} N_r^e \, d\X \, L_r^e,  \notag \\[-7mm] \notag
	\end{align}
	with the discrete co-energies variables $\hat{e}_{p_{r}}(\hat{p}_{r}) = \hat{M}_{rr}^{-1} \hat{p}_{r}$, $\hat{e}_{r_r}(\hat{r}) = \hat{F}_{e_\q r}(\hat{r}) - \hat{b}_r$, and $\hat{e}_{r_D}(\hat{r}) = \hat{F}_{e_\q D}(\hat{r}) - \hat{b}_D$.
\end{corollary}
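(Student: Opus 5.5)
The plan is to specialize Theorem \ref{theo:discrete_lifted_jet} and then apply a linear change of the momentum coordinates. First, set $\hat{r}_\Omega = 0$ and $\hat{v}_\Omega = 0$. The fourth row of \eqref{eq:theo_discrete_jet_dyn} then reduces to the trivial identity $\dot{\hat{r}}_\Omega = \hat{v}_\Omega = 0$. This lets us delete the $\hat{r}_\Omega$ state, the $\hat{v}_\Omega$ input, and the corresponding zero output, leaving a three-block system in $(\hat{p},\hat{r}_r,\hat{r}_D)$. In the Hamiltonian \eqref{eq:discrete_hamiltonian}, the term $\hat{r}_\Omega^\top \hat{b}_r$ disappears, and $\hat{\Psi}$ is evaluated on $\tilde{r}^e = N_r^e\hat{r}_r^e + N_{r_D}^e \hat{r}_D^e$.

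Next, substitute $N_p^e = \mathcal{M} N_r^e$ into the global matrices. This gives $\hat{M}_{pr} = \hat{M}_{rr}$, which is symmetric positive definite because $\mathcal{M}>0$, so $\hat{M}_{pr}^{-\top} = \hat{M}_{rr}^{-1}$. It also gives $\hat{M}_{pD} = \hat{M}_{rD}^\top$ in the corollary's notation; I will keep track of this transpose convention when writing $\hat{M}_{rr}^{-1}\hat{M}_{rD}$. Likewise $\hat{M}_{pp} = \sum (L_p^e)^\top\int (N_r^e)^\top\mathcal{M}\mathcal{M}^{-1}\mathcal{M}N_r^e\,d\X\,L_p^e = \hat{M}_{rr}$, assuming $L_p^e = L_r^e$, which holds since both fields are interpolated on the free nodes. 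The reduced equations then read
\[
\dot{\hat{p}} = -\hat{M}_{rr}^{-1}\hat{e}_{r_r} + \hat{M}_{rr}^{-1}\hat{B}_N\hat{\tau}_N, \qquad \dot{\hat{r}}_r = \hat{M}_{rr}^{-1}\hat{M}_{rr}\hat{p} - \hat{M}_{rr}^{-1}\hat{M}_{pD}\hat{v}_D, \qquad \dot{\hat{r}}_D = \hat{v}_D.
\]

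Then introduce $\hat{p}_r = \hat{M}_{rr}\hat{p}$. Because $\hat{M}_{rr}$ is constant, $\dot{\hat{p}}_r = \hat{M}_{rr}\dot{\hat{p}} = -\hat{e}_{r_r} + \hat{B}_N\hat{\tau}_N$. The kinetic energy transforms as $\tfrac12\hat{p}^\top\hat{M}_{rr}\hat{p} = \tfrac12\hat{p}_r^\top\hat{M}_{rr}^{-1}\hat{p}_r$, so the new co-energy is $\hat{e}_{p_r} = \hat{M}_{rr}^{-1}\hat{p}_r = \hat{p}$. Consequently $\dot{\hat{r}}_r = \hat{e}_{p_r} - \hat{M}_{rr}^{-1}\hat{M}_{rD}\hat{v}_D$. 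This recovers \eqref{eq:FEM_ODE_PHS}, with $J$ canonical and the input matrix as stated.

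Finally, I will verify the output identity by computing $\hat{G}^\top\nabla\hat{H}$. The first component is $\hat{B}_N^\top\hat{e}_{p_r}$, the discrete Neumann velocity. The second is $\hat{e}_{r_D} - \hat{M}_{rD}^\top\hat{M}_{rr}^{-1}\hat{e}_{r_r}$. I need to show that this second component agrees with the output $\hat{\tau}_D$ of Theorem \ref{theo:discrete_lifted_jet} under the same specialization, i.e. that it is the Dirichlet reaction including the inertial correction. Equivalently, the change of variables is a congruence that transforms $\hat{G}$ and $\nabla\hat{H}$ compatibly, and power conjugacy is invariant under it.

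I expect the main obstacle to be this step: the bookkeeping of the transpose and inertial terms in $\hat{M}_{pD}$ versus $\hat{M}_{rD}$, and confirming that the skew-symmetry and the output map survive the nonsymmetric rescaling of only the momentum block. To handle it, I will write the transformation as $\hat{z}' = T\hat{z}$ with $T = \mathrm{diag}(\hat{M}_{rr}, I, I)$. Then $\hat{J}' = T\hat{J}T^\top$, $\hat{G}' = T\hat{G}$, and $\nabla_{\hat{z}'}\hat{H} = T^{-\top}\nabla_{\hat{z}}\hat{H}$. The output $\hat{G}'^\top\nabla_{\hat{z}'}\hat{H} = \hat{G}^\top\nabla_{\hat{z}}\hat{H}$ is therefore unchanged, and $\hat{J}'$ is canonical.
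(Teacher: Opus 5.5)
Your proposal is correct and follows the route the paper implicitly intends; the paper gives no separate proof and treats the corollary as a direct specialization of Theorem~\ref{theo:discrete_lifted_jet}. You set $\hat r_\Omega=\hat v_\Omega=0$ and $N_p^e=\mathcal{M}N_r^e$ so that $\hat M_{pr}=\hat M_{pp}=\hat M_{rr}$, then apply the constant congruence $T=\mathrm{diag}(\hat M_{rr},I,I)$, which makes $\hat J$ canonical and leaves $\hat G^\top\nabla\hat H$ invariant. You are also right that $\hat M_{pD}=\hat M_{rD}^\top$ under the corollary's definition of $\hat M_{rD}$, so for the dimensions to match the input block must be read as $-\hat M_{rr}^{-1}\hat M_{rD}^\top$.
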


It follows from Corollary \ref{cor:FEM_equivalence1} that the derived matrices $\hat{M}_{rr}$ and $\hat{M}_{rD}$ correspond to the partitions of the consistent mass matrix from a standard displacement-based finite element approach, obtained when the global nodal displacement vector (including all nodes of the mesh) is partitioned as $\hat{r}(t) = [{\hat{r}}_r(t)^\top \; {\hat{r}}_D(t)^\top]^\top$. In this regime, since $\hat{r}_\Omega(t) = 0$, the field $\hat{r}_r(t)$ ceases to represent a relative displacement and recovers its significance as the absolute nodal displacement of the system. The practical implications of this equivalence will be briefly discussed in Section \ref{ssec:FEM_algebra}.

Having established the finite-dimensional jet-bundle representation, we now continue with the formulation of the lifted PHS in Stokes-Dirac structure, presenting first its local stationary condition and subsequently its spatial discretization.


\begin{proposition}[Local stationary condition for the lifted Stokes-Dirac PHS] \label{prop:local_weak_form_StokesDirac}
	The local stationary condition of the lifted Stokes-Dirac PHS, induced by the Hu-Washizu-based virtual power principle, is defined over an element domain $\Omega^e$ and its Neumann boundary $\p\Omega_N^e$ as:
	\begin{align}
		\delta_{\dot{r}_r} \mathcal{P}^e_{\HW} &= \int_{\Omega^e} \delta \dot{r}_r^e\cdot \big[\dot{p}^e + \F_\X(r^e)^*e_\q^e-b\big] d\X - \int_{\p\Omega^e_N} \!\!\!\delta{\dot{r}_r^e}\cdot \big[\tau^e_N - F_\p(r^e) e_\q^e\big] d\SX, \label{eq:weak_r_stokes} \\[1mm]
		\delta_{{e}_\q} \mathcal{P}^e_{\HW} &= \int_{\Omega^e}  \delta e_\q^e\cdot \big[\dot{\q}^e  - \F_\X(r^e)\dot{r}^e \big] 	d\X, \label{eq:weak_e_stokes}  \\[1mm]
		\delta_{\dot{p}} \mathcal{P}^e_{\HW} &= \int_{\Omega^e} \delta \dot{p}^e\cdot \big[\dot{r}_r^e + v^e_L - \mathcal{M}^{-1}p^e\big] d\X, \label{eq:weak_p_stokes} \\[1mm]
		\delta_{\dot{\q}} \mathcal{P}^e_{\HW} &= \int_{\Omega^e}  \delta \dot{\q}^e\cdot \left(e_\q^e -  \frac{\p \Psi}{\p\q}\Big|_{\q=\q^e}\right) d\X, \label{eq:weak_q_stokes}
	\end{align}
	for all kinematically admissible local variations $\delta\dot{r}_r^e$ and arbitrary local variations $\delta\dot{p}^e$, $\delta\dot{\q}^e$ and $\delta{e}_\q^e$.
\end{proposition}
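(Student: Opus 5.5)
The plan is to start from the Hu-Washizu-based functional $\mathcal{P}_{\HW}$ of Proposition \ref{prop:varPPLE2} and restrict it to a single element $\Omega^e$, with its Neumann part $\p\Omega_N^e$. We then insert the kinematic lifting of Definition \ref{def:kinematic_lifting}, namely $r^e = r_r^e + r_L^e$ and $\dot{r}^e = \dot{r}_r^e + v_L^e$. The lifting fields are prescribed data, so their variations vanish: $\delta r_L = \delta v_L = 0$. Hence $\delta\dot{r} = \delta\dot{r}_r$. Since $\dot{r}_r \in \mathcal{V}_0$, the admissible variations $\delta\dot{r}_r$ vanish on $\p\Omega_D$, consistent with the constraint in Proposition \ref{prop:varPPLE2}.

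Next we expand the rate of the Lagrangian density from Definition \ref{def:energies}:
\begin{equation*}
\dot{\mathfrak{L}}_{p\q r} = \dot{p}\cdot\mathcal{M}^{-1}p - \dot{\q}\cdot\frac{\p\Psi}{\p\q} + \dot{r}\cdot b .
\end{equation*}
Substituting into $\mathcal{P}^e_{\HW}$, the element integrand becomes
\begin{equation*}
\dot{r}\cdot\dot{p} - \dot{p}\cdot\mathcal{M}^{-1}p + \dot{\q}\cdot\frac{\p\Psi}{\p\q} - \dot{r}\cdot b - e_\q\cdot(\dot{\q}-\F_\X(r)\dot{r}),
\end{equation*}
and the boundary term is $-\int_{\p\Omega_N^e}\dot{r}\cdot\tau_N\,d\SX$. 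The functional is linear in the rates $(\dot{p},\dot{\q},\dot{r})$ and in $e_\q$; the states $(p,\q,r)$ are held fixed when taking variations. The Gateaux derivatives can therefore be read off term by term:
\begin{itemize}
\item $\delta e_\q$ gives \eqref{eq:weak_e_stokes} (up to the chosen sign convention);
\item $\delta\dot{\q}$ gives \eqref{eq:weak_q_stokes};
\item $\delta\dot{p}$ gives $\int \delta\dot{p}\cdot(\dot{r}-\mathcal{M}^{-1}p)\,d\X$, which is \eqref{eq:weak_p_stokes} once $\dot{r} = \dot{r}_r + v_L$ is substituted.
\end{itemize}

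The $\delta\dot{r}_r$ variation yields
\begin{equation*}
\int_{\Omega^e}\big[\delta\dot{r}_r\cdot(\dot{p}-b) + e_\q\cdot\F_\X(r)\delta\dot{r}_r\big]\,d\X - \int_{\p\Omega_N^e}\delta\dot{r}_r\cdot\tau_N\,d\SX.
\end{equation*}
We apply Lemma \ref{lemma:integration} with $v = e_\q$ and $w = \delta\dot{r}_r$. This transfers the derivative and produces $\delta\dot{r}_r\cdot\F_\X(r)^*e_\q$ in the interior, plus the boundary integral $\int_{\p\Omega^e}\delta\dot{r}_r\cdot F_\p(r)e_\q\,d\SX$. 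That boundary integral vanishes on $\p\Omega_D^e$ because $\delta\dot{r}_r = 0$ there. On the Neumann part it combines with the traction term into $-\int_{\p\Omega_N^e}\delta\dot{r}_r\cdot[\tau_N - F_\p(r)e_\q]\,d\SX$. Together these give \eqref{eq:weak_r_stokes}.

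The main obstacle is the inter-element boundaries $\p\Omega^e\setminus(\p\Omega_N\cup\p\Omega_D)$, where Lemma \ref{lemma:integration} leaves additional boundary terms. We would handle them as in the proof of Proposition \ref{prop:lifted_StokesPHS2} in Appendix \ref{app:proof_Prop4}. The argument is that the global functional is the sum of the element contributions. When the elements are summed, the interior-face terms cancel because $\delta\dot{r}_r$ is continuous (conforming $H^1$ approximation) and the outward normals are opposite on shared faces. For discontinuous $e_\q$ it is cleaner to avoid that cancellation argument altogether: keep the weak form in its non-integrated form $\int e_\q\cdot\F_\X(r)\delta\dot{r}_r\,d\X$, and state \eqref{eq:weak_r_stokes} as the local equivalent that holds after assembly. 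This is the sense in which the local stationary conditions are stated.
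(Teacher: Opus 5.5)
Your proposal is correct and follows the paper's route: the paper proves this by pointing to the proof of Proposition \ref{prop:lifted_StokesPHS2} (Appendix \ref{app:proof_Prop4}) restricted to one element. That proof substitutes $\dot r=\dot r_r+v_L$ with $\delta v_L=0$, reads off the $\delta\dot p$, $\delta\dot\q$ and $\delta e_\q$ variations, and applies Lemma \ref{lemma:integration} to the $\delta\dot r_r$ term; the signs in \eqref{eq:weak_e_stokes} and \eqref{eq:weak_q_stokes} differ from the raw variations by an immaterial factor $-1$, for $\delta\dot\q$ as well as for $\delta e_\q$. Your inter-element discussion covers a point the paper leaves implicit; for discrete fields the face cancellation would also need the traction $F_\p(r)e_\q$, which depends on $\nabla r$, to be single-valued, so your fallback to the non-integrated form (exactly what the paper uses in Step~3 of Appendix \ref{app:Theo2}) is the correct reading of \eqref{eq:weak_r_stokes}.
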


\begin{proof}
	The derivation follows directly from the proof of Proposition \ref{prop:lifted_StokesPHS2} in Appendix \ref{app:proof_Prop4}.
\end{proof}

\begin{theorem}[Discrete lifted Stokes-Dirac PHS] \label{theo:discrete_lifted_stokes}
	Let the continuous fields be discretized via the local Galerkin approximations according with Remark \ref{rem:Galerkin_strains}: $\tilde{p}^e(\X,t) = N_p^e(\X) \hat{p}^e(t)$, $\tilde{\q}^e(\X,t) = N_\q^e(\X) \hat{\q}^e(t)$, $\tilde{e}_\q^e(\X,t) = N_{e_\q}^e(\X) \hat{e}_\q^e(t)$, and $\tilde{r}^e(\X,t) = N_r^e(\X) \hat{r}_r^e(t) + N_{r_\Omega}^e(\X) \hat{r}_\Omega^e(t) + N_{r_D}^e(\X) \hat{r}_D^e(t)$, together with $\tilde{v}_L^e(\X,t) = N_{r_\Omega}^e(\X) \hat{v}_\Omega^e(t) + N_{r_D}^e(\X) \hat{v}_D^e(t)$ and $\tilde{\tau}_N^e(\SX,t) = N_{\tau_N}^e(\SX) \hat{\tau}_N^e(t)$. The spatial discretization following Proposition \ref{prop:local_weak_form_StokesDirac} yields the finite-dimensional PHS with strong imposition of Dirichlet boundary velocities:
	\begin{align}
		\underbrace{\begin{bmatrix} \!\dot{\hat{p}}(t) \! \\ \!\dot{\hat{\q}}(t)\! \\ \!\dot{\hat{r}}_r(t)\! \\ \!\dot{\hat{r}}_D(t)\! \\ \!\dot{\hat{r}}_\Omega(t) \! \end{bmatrix}}_{\dot{\hat{x}}(t)} & \!\!=\!\! 
		\underbrace{\begin{bmatrix} 0 & \!\!\!\!\!\!\!-\!\hat{M}_{pr}^{-\!\top}\!\hat{F}_{r}(\hat{r})^{\!\top}\!\hat{M}_{e\q}^{-\!\top} & \!-\!\hat{M}_{pr}^{-\!\top} & 0 & 0 \\ \!\hat{M}_{e\q}^{-1}\!\hat{F}_{r}(\hat{r})\hat{M}_{pr}^{-1} & \!\!\!\!\!\!\!0 & 0 & 0 & 0 \\ \hat{M}_{pr}^{-1} & \!\!\!\!\!\!\!0 & 0 & 0 & 0 \\ 0 & \!\!\!\!\!\!\!0 & 0 & 0 & 0 \\ 0 & \!\!\!\!\!\!\!0 & 0 & 0 & 0 \end{bmatrix}}_{\hat{J}(\hat{x}) = -\hat{J}(\hat{x})^\top} \!\!
		\underbrace{\begin{bmatrix} \!\hat{e}_p(\hat{p})\! \\ \!\check{e}_\q(\hat{\q})\! \\ \!-\hat{b}_r\! \\ \!-\hat{b}_D\! \\ \!-\hat{b}_r\! \end{bmatrix}}_{\nabla_{\hat{x}}\hat{H}} 
		\!+ \!
		\underbrace{\begin{bmatrix} \!\hat{M}_{pr}^{-\!\top}\!\hat{B}_N & 0 & 0 \\ 0 & \!\!\!\!\hat{M}_{e\q}^{-1}\big(\!\hat{F}_{D}(\hat{r})\! -\! \hat{F}_r(\hat{r})\hat{M}_{pr}^{-1}\!\hat{M}_{pD}\!\big) & 0 \\ 0 & -\hat{M}_{pr}^{-1}\hat{M}_{pD} & \!\!-\!I_{N_F} \! \\ 0 & I_{N_D} & 0 \\ 0 & 0 & \!\!I_{N_F}\! \end{bmatrix}}_{\hat{G}(\hat{x})} \!\!
		\underbrace{\begin{bmatrix} \hat{\tau}_N(t) \\ \hat{v}_D(t) \\ \hat{v}_\Omega(t) \end{bmatrix}}_{\hat{u}(t)}\!, \label{eq:theo_discrete_stokes_dyn} \\[-2mm]
		\hat{y}(t) &= \hat{G}(\hat{x})^\top \nabla_{\hat{x}}\hat{H}(\hat{x}) = 
		\begin{bmatrix} \hat{v}_N(t)^\top &  \hat{\tau}_D(t)^\top & 0 \end{bmatrix}^\top, \label{eq:theo_discrete_stokes_out} \\[2mm]
		\hat{H}(\hat{x}) &= \frac{1}{2} \hat{p}(t)^\top \hat{M}_{pp} \hat{p}(t) + \hat{\Psi}(\hat{\q}) - [\hat{r}_r(t) + \hat{r}_\Omega(t)]^\top \hat{b}_r - \hat{r}_D(t)^\top \hat{b}_D, \label{eq:discrete_hamiltonian_stokes}
	\end{align}
	with the discrete strain energy defined as $\hat{\Psi}(\hat{\q}) = \sum_{e=1}^{n_e} \int_{\Omega^e} \Psi(\tilde{\q}^e) \, d\X$, and the notation $\hat{r} = (\hat{r}_r, \hat{r}_D, \hat{r}_\Omega)$ used for clarity. The global matrices and vectors are assembled from their local contributions as:
	\begin{align}
		\hat{M}_{pr} &= \sum_{e=1}^{n_e} (L_p^e)^\top \!\! \int_{\Omega^e} (N_p^e)^\top N_r^e \, d\X \, L_r^e,  \quad & \hat{M}_{pp} &= \sum_{e=1}^{n_e} (L_p^e)^\top \!\! \int_{\Omega^e} (N_p^e)^\top \mathcal{M}^{-1} N_p^e \, d\X \, L_p^e,  \notag \\
		\hat{M}_{pD} &= \sum_{e=1}^{n_e} (L_p^e)^\top \!\! \int_{\Omega^e} (N_p^e)^\top N_{r_D}^e \, d\X \, L_{r_D}^e, \quad & \hat{B}_N &= \sum_{e=1}^{n_e} (L_r^e)^\top \!\! \int_{\p\Omega_N^e} \!\!\! (N_r^e)^\top N_{\tau_N}^e \, d\SX \, L_{\tau_N}^e, \notag \\
		\hat{M}_{e\q} &= \sum_{e=1}^{n_e} (L_{e_\q}^e)^\top \!\! \int_{\Omega^e} (N_{e_\q}^e)^\top N_\q^e \, d\X \, L_\q^e, \quad & \check{e}_\q(\hat{\q}) &= \sum_{e=1}^{n_e} (L_\q^e)^\top \int_{\Omega^e} (N_\q^e)^\top \frac{\p \Psi}{\p\q}\bigg|_{\q(\tilde{\q}^e)} d\X, \notag \\
		\hat{F}_{r}(\hat{r}) &= \sum_{e=1}^{n_e} (L_{e_\q}^e)^\top \!\! \int_{\Omega^e} (N_{e_\q}^e)^\top \F_\X(\tilde{r}^e)N_r^e \, d\X \, L_r^e, \quad & \hat{b}_r &= \sum_{e=1}^{n_e} (L_r^e)^\top \!\! \int_{\Omega^e} (N_r^e)^\top b \, d\X, \notag \\
		\hat{F}_{D}(\hat{r}) &= \sum_{e=1}^{n_e} (L_{e_\q}^e)^\top \!\! \int_{\Omega^e} (N_{e_\q}^e)^\top \F_\X(\tilde{r}^e)N_{r_D}^e \, d\X \, L_{r_D}^e, \quad & \hat{b}_D &= \sum_{e=1}^{n_e} (L_{r_D}^e)^\top \!\! \int_{\Omega^e} (N_{r_D}^e)^\top b \, d\X, \notag
	\end{align}
	where the discrete co-energy momentum evaluates to $\hat{e}_p(\hat{p}) = \hat{M}_{pp} \hat{p}$.
	\end{theorem}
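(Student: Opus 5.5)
Substituting the local Galerkin approximations into the four local stationary conditions of Proposition \ref{prop:local_weak_form_StokesDirac}, taking the variations in the same discrete spaces as their partners, and assembling with the Boolean matrices $L^e$ gives four global algebraic/differential relations. Each is then solved for one block of $\dot{\hat{x}}$. The argument mirrors the proof of Theorem \ref{theo:discrete_lifted_jet}; the new ingredients are the strain equation and the strain--stress pair.

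\emph{Momentum equation.} From \eqref{eq:weak_p_stokes} with $\delta\dot p^e = N_p^e\delta\hat p^e$ we obtain
\begin{equation*}
\hat M_{pr}\dot{\hat r}_r + \hat M_{pr}\hat v_\Omega + \hat M_{pD}\hat v_D = \hat M_{pp}\hat p = \hat e_p .
\end{equation*}
Here $N_r^e = N_{r_\Omega}^e$ (Remark \ref{rem:Galerkin_strains}), so $\hat v_\Omega$ carries the same matrix $\hat M_{pr}$. Moreover $\hat M_{pr}$ is invertible, because $N_p^e = A_p^e N_r^e$ makes it a weighted Gram matrix. Hence
\begin{equation*}
\dot{\hat r}_r = \hat M_{pr}^{-1}\hat e_p - \hat M_{pr}^{-1}\hat M_{pD}\hat v_D - \hat v_\Omega ,
\end{equation*}
which is the third row. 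The rows $\dot{\hat r}_D = \hat v_D$ and $\dot{\hat r}_\Omega = \hat v_\Omega$ are the definitions of the prescribed liftings.

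\emph{Compatibility equation.} From \eqref{eq:weak_e_stokes} with $\delta e_\q^e = N_{e_\q}^e\delta\hat e_\q^e$ we obtain
\begin{equation*}
\hat M_{e\q}\dot{\hat\q} = \hat F_r(\dot{\hat r}_r + \hat v_\Omega) + \hat F_D\hat v_D .
\end{equation*}
Substituting the expression for $\dot{\hat r}_r$ cancels $\hat v_\Omega$ and gives the second row, including its $\hat v_D$ input column $\hat M_{e\q}^{-1}(\hat F_D - \hat F_r\hat M_{pr}^{-1}\hat M_{pD})$. This uses the linearity of $\F_\X(\tilde r^e)$ in its argument.

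\emph{Constitutive equation.} From \eqref{eq:weak_q_stokes} with $\delta\dot\q^e = N_\q^e\delta\hat\q^e$ we obtain $\hat M_{e\q}^\top\hat e_\q = \check e_\q(\hat\q)$. The relation between $\hat M_{e\q}$ and $\nabla_{\hat\q}\hat\Psi$ needs care: $\nabla_{\hat\q}\hat\Psi = \check e_\q$ holds directly from the definition of $\hat\Psi(\hat\q)$. The stress coefficients are therefore $\hat e_\q = \hat M_{e\q}^{-\top}\check e_\q$, and $\hat M_{e\q}$ is invertible by the $A_\q^e$ condition.

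\emph{Linear momentum.} From \eqref{eq:weak_r_stokes} with $\delta\dot r_r^e = N_r^e\delta\hat r_r^e$, the element integration by parts in Lemma \ref{lemma:integration} turns $\F^*$ back into $\F$. The Neumann terms $F_\p e_\q$ cancel, leaving
\begin{equation*}
\hat M_{pr}^\top\dot{\hat p} + \hat F_r^\top\hat e_\q - \hat b_r - \hat B_N\hat\tau_N = 0 .
\end{equation*}
Inserting $\hat e_\q = \hat M_{e\q}^{-\top}\check e_\q$ gives the first row.

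\emph{Structure and Hamiltonian.} The gradients of $\hat H$ in \eqref{eq:discrete_hamiltonian_stokes} are, block by block,
\begin{equation*}
\nabla_{\hat x}\hat H = \bigl(\hat M_{pp}\hat p,\ \check e_\q,\ -\hat b_r,\ -\hat b_D,\ -\hat b_r\bigr).
\end{equation*}
The remaining structural checks are routine:
\begin{itemize}
\item the $(1,2)$ block of $\hat J$ is minus the transpose of the $(2,1)$ block, and the $(1,3)$ block is minus the transpose of the $(3,1)$ block, so $\hat J = -\hat J^\top$;
\item the body-force terms appear only via the $(1,3)$ entry acting on $-\hat b_r$.
\end{itemize}

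\emph{Outputs.} Computing $\hat G^\top\nabla\hat H$ block by block:
\begin{itemize}
\item The first block is $\hat B_N^\top\hat M_{pr}^{-1}\hat e_p$, which we identify with $\hat v_N$.
\item The third block is $\hat b_r - \hat b_r = 0$. This is the discrete counterpart of $y_\Omega = 0$.
\item The second block is
\begin{equation*}
(\hat F_D - \hat F_r\hat M_{pr}^{-1}\hat M_{pD})^\top\hat M_{e\q}^{-\top}\check e_\q + \hat M_{pD}^\top\hat M_{pr}^{-\top}\hat b_r - \hat b_D .
\end{equation*}
\end{itemize}

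Expressing this second block as a physically meaningful $\hat\tau_D$ is the main obstacle. I would first define $\hat\tau_D$ as this expression, i.e. the reaction obtained by testing the momentum balance with $N_{r_D}^e$. Then I would verify it through the power balance: $\dot{\hat H} = \hat y^\top\hat u$ follows automatically from skew-symmetry. Finally I would check that substituting the first-row expression for $\dot{\hat p}$ recovers the discrete Dirichlet reaction $\hat F_D^\top\hat e_\q - \hat b_D + \hat M_{pD}^\top\dot{\hat p}$-type term, consistent with Corollary \ref{cor:FEM_equivalence1}.
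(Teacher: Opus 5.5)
Your proposal is correct and follows essentially the same route as the paper's proof. You test the four local conditions with their partner shape functions, solve successively for $\dot{\hat r}_r$, $\hat e_\q=\hat M_{e\q}^{-\top}\check e_\q$, $\dot{\hat\q}$ (where $\hat v_\Omega$ cancels) and $\dot{\hat p}$, read off $\nabla_{\hat x}\hat H$, and get $\hat y_\Omega=\hat b_r-\hat b_r=0$ and $\dot{\hat H}=\hat y^\top\hat u$ from skew-symmetry. The paper simply names the first two blocks of $\hat G^\top\nabla_{\hat x}\hat H$ as $\hat v_N$ and $\hat\tau_D$, so your last paragraph is not needed for the theorem. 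If you do pursue it, note that eliminating $\dot{\hat p}$ via the first row gives $\hat F_D^\top\hat e_\q-\hat b_D+\hat M_{pD}^\top\dot{\hat p}-\hat M_{pD}^\top\hat M_{pr}^{-\top}\hat B_N\hat\tau_N$. The pure ``reaction'' reading therefore holds only when $\hat\tau_N=0$; the extra term cancels in the power balance against the $\hat M_{pr}^{-1}\hat M_{pD}\hat v_D$ part contained in $\hat v_N$.
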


\begin{proof}
	The proof of this Theorem is provided in Appendix \ref{app:Theo2}.
\end{proof}

\begin{corollary} \label{cor:HWFEM_equivalence1}
	Let the interior lifting extensions be strictly zero ($\hat{r}_\Omega(t) = 0$ and $\hat{v}_\Omega(t) = 0$). Furthermore, let the momentum shape functions be defined as $N_p^e(\X) = \mathcal{M}(\X) N_r^e(\X)$. Under these choices, the spatial discretization from Theorem \ref{theo:discrete_lifted_stokes} reduces to an equivalent three-field mixed finite element formulation with strong imposition of Dirichlet boundary velocities. By defining $\hat{p}_{r}(t) = \hat{M}_{rr} \hat{p}(t)$, the discrete PHS takes the structure:
		\begin{align}
			\underbrace{\begin{bmatrix} \dot{\hat{p}}_{r}(t) \\ \dot{\hat{\q}}(t) \\ \dot{\hat{r}}_r(t) \\ \dot{\hat{r}}_D(t) \end{bmatrix}}_{\dot{\hat{x}}(t)} &= 
			\underbrace{\begin{bmatrix} 0 & -\hat{F}_{r}(\hat{r})^\top \hat{M}_{e\q}^{-\top} & -I_{N_F} & 0 \\ \hat{M}_{e\q}^{-1}\hat{F}_{r}(\hat{r}) & 0 &  0 & 0 \\ I_{N_F} & 0 & 0 & 0 \\ 0 & 0 & 0 & 0 \end{bmatrix}}_{\hat{J}(\hat{x}) = -\hat{J}(\hat{x})^\top} 
			\underbrace{\begin{bmatrix} \hat{e}_{p_{r}}(\hat{p}_{r}) \\ \check{e}_\q(\hat{\q}) \\ -\hat{b}_r \\ -\hat{b}_D \end{bmatrix}}_{\nabla_{\hat{x}} \hat{H}} + 
			\underbrace{\begin{bmatrix} \hat{B}_N & 0 \\ 0 & \hat{M}_{e\q}^{-1}\big(\hat{F}_{D}(\hat{r}) - \hat{F}_r(\hat{r})\hat{M}_{rr}^{-1}\hat{M}_{rD}\big) \\ 0 & -\hat{M}_{rr}^{-1}\hat{M}_{rD} \\ 0 & I_{N_D} \end{bmatrix}}_{\hat{G}(\hat{x})} 
			\underbrace{\begin{bmatrix} \hat{\tau}_N(t) \\ \hat{v}_D(t) \end{bmatrix}}_{\hat{u}_\p(t)}, \label{eq:FEM_ODE_PHSHW} \\[1mm]
			\hat{y}_\p(t) &= \hat{G}(\hat{x})^\top \nabla_{\hat{x}} \hat{H}(\hat{x}) =
			\begin{bmatrix} \hat{v}_N(t)^\top & \hat{\tau}_D(t)^\top \end{bmatrix}^\top, \\[1mm]
			\hat{H}(\hat{x}) &= \frac{1}{2} \hat{p}_{r}(t)^\top \hat{M}_{rr}^{-1} \hat{p}_{r}(t) + \hat{\Psi}(\hat{\q}) - \hat{r}_r(t)^\top \hat{b}_r - \hat{r}_D(t)^\top \hat{b}_D, \label{eq:discrete_hamiltonianFEMHW}
		\end{align}
		where $\hat{r}_r(t)$ represents the absolute nodal displacement vector, and the discrete interconnection matrix $\hat{J}(\hat{x}) = -\hat{J}(\hat{x})^\top$ recovers the same skew-symmetric structure as in \cite{ponce2025port} via the generalized Hamilton's principle.
\end{corollary}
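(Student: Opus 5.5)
The plan is to specialize Theorem \ref{theo:discrete_lifted_stokes} and then perform a linear change of the momentum coordinate, following the same steps used for Corollary \ref{cor:FEM_equivalence1}. First, setting $\hat{r}_\Omega = 0$ and $\hat{v}_\Omega = 0$ makes the last row of \eqref{eq:theo_discrete_stokes_dyn} trivial. We therefore delete the state $\hat{r}_\Omega$, its co-energy $-\hat{b}_r$, the input $\hat{v}_\Omega$, and the corresponding column of $\hat{G}(\hat{x})$. The Hamiltonian \eqref{eq:discrete_hamiltonian_stokes} then loses the term $\hat{r}_\Omega^\top\hat{b}_r$. In the remaining rows, the third column of $\hat{G}$ multiplies $\hat{v}_\Omega=0$, so nothing else changes.

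Second, we substitute $N_p^e = \mathcal{M}N_r^e$ into the assembled matrices. Since $\mathcal{M}$ is symmetric and the boolean maps satisfy $L_p^e = L_r^e$ (both act on free DOFs), we get
\begin{equation*}
\hat{M}_{pr} = \hat{M}_{rr}, \qquad \hat{M}_{pD} = \hat{M}_{rD}^\top, \qquad \hat{M}_{pp} = \sum_e (L_r^e)^\top \int_{\Omega^e} (N_r^e)^\top \mathcal{M}\mathcal{M}^{-1}\mathcal{M} N_r^e \, d\X \, L_r^e = \hat{M}_{rr}.
\end{equation*}
The transpose in $\hat{M}_{pD}$ is only a matter of how $\hat{M}_{rD}$ is defined in Corollary \ref{cor:FEM_equivalence1}. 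We read $\hat{M}_{rr}^{-1}\hat{M}_{rD}$ in \eqref{eq:FEM_ODE_PHSHW} as the $N_F\times N_D$ block, consistent with the jet-bundle corollary, and we will check that the dimensions agree. We also note that $\hat{M}_{rr}$ is symmetric positive definite, so $\hat{M}_{pr}^{-\top} = \hat{M}_{rr}^{-1}$.

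Third, we introduce $\hat{p}_r = \hat{M}_{rr}\hat{p}$. Then $\dot{\hat{p}}_r = \hat{M}_{rr}\dot{\hat{p}}$, which amounts to left-multiplying the first block row of \eqref{eq:theo_discrete_stokes_dyn} by $\hat{M}_{rr}$. Because $\hat{M}_{pr}^{-\top}=\hat{M}_{rr}^{-1}$, this turns $-\hat{M}_{rr}^{-1}\hat{F}_r^\top\hat{M}_{e\q}^{-\top}$ into $-\hat{F}_r^\top\hat{M}_{e\q}^{-\top}$, turns $-\hat{M}_{rr}^{-1}$ into $-I_{N_F}$, and turns the input entry $\hat{M}_{rr}^{-1}\hat{B}_N$ into $\hat{B}_N$.

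For the co-energy, $\hat{e}_p = \hat{M}_{pp}\hat{p} = \hat{M}_{rr}\hat{p} = \hat{p}_r$. Rewriting the kinetic energy, $\tfrac12\hat{p}^\top\hat{M}_{rr}\hat{p} = \tfrac12\hat{p}_r^\top\hat{M}_{rr}^{-1}\hat{p}_r$, so the new co-energy is $\hat{e}_{p_r} = \hat{M}_{rr}^{-1}\hat{p}_r$. This corresponds to $\hat{M}_{rr}^{-1}\hat{e}_p$. Hence every column of $\hat{J}$ that multiplies $\hat{e}_p$ must absorb a factor $\hat{M}_{rr}$: the $(2,1)$ block becomes $\hat{M}_{e\q}^{-1}\hat{F}_r\hat{M}_{rr}^{-1}\hat{M}_{rr} = \hat{M}_{e\q}^{-1}\hat{F}_r$, and the $(3,1)$ block becomes $I_{N_F}$.

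This congruence (left-multiplication of a row together with the compensating factor on the matching column) preserves skew-symmetry. Skew-symmetry of the reduced $\hat{J}$ can also be checked directly: the $(1,2)$ block $-\hat{F}_r^\top\hat{M}_{e\q}^{-\top}$ is minus the transpose of the $(2,1)$ block, and the $(1,3)$ and $(3,1)$ blocks are $-I$ and $I$. The $\hat{G}$ entries in rows two through four simply inherit $\hat{M}_{pr}^{-1}\hat{M}_{pD}\to\hat{M}_{rr}^{-1}\hat{M}_{rD}$.

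Finally, the output $\hat{y}_\partial = \hat{G}^\top\nabla\hat{H}$ is invariant under the change of variables: the first component is $\hat{B}_N^\top\hat{M}_{rr}^{-1}\hat{p}_r$, the same as before. The power balance therefore carries over, and $\hat{r}_r$ is the absolute nodal displacement because the interior lifting vanishes. For the claim of equivalence with \cite{ponce2025port}, it suffices to compare block by block: the $\hat{J}$ obtained here has exactly the skew structure of the generalized Hamilton's principle model in that work.

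The main obstacle is bookkeeping rather than analysis. It lies in tracking how the transformation acts on both the rows and the co-energy columns, and in fixing the transpose convention for $\hat{M}_{rD}$, so that the $(2,1)$ block and the $\hat{G}$ entries come out exactly as stated.
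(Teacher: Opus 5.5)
Your proposal is correct and follows the route the paper intends; the paper gives no separate proof, and the corollary is meant as a direct specialization of Theorem~\ref{theo:discrete_lifted_stokes}: set $\hat{r}_\Omega=\hat{v}_\Omega=0$, use $N_p^e=\mathcal{M}N_r^e$ (with $L_p^e=L_r^e$) to get $\hat{M}_{pr}=\hat{M}_{pp}=\hat{M}_{rr}$, and apply the constant change of variables $\hat{p}_r=\hat{M}_{rr}\hat{p}$, which acts as $\hat{J}\mapsto T\hat{J}T^\top$, $\hat{G}\mapsto T\hat{G}$ with $T=\mathrm{diag}(\hat{M}_{rr},I,I,I)$ and so preserves skew-symmetry and the output. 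Your transpose remark is also well taken: as defined in Corollary~\ref{cor:FEM_equivalence1}, $\hat{M}_{rD}$ is $N_D\times N_F$, so the block in $\hat{G}$ must be read as $\hat{M}_{pD}=\hat{M}_{rD}^\top\in\mathbb{R}^{N_F\times N_D}$.
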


Similar to the generalized Hamilton's principle approach in \cite{ponce2025port}, a limitation of the discretized models in Theorem \ref{theo:discrete_lifted_stokes} and Corollary \ref{cor:HWFEM_equivalence1} is their inability to directly yield static solutions in their current form. To illustrate this, let $\hat{x}^\star = [\hat{p}^{\star \top} \, \hat{\q}^{\star \top} \,  \hat{r}_r^{\star \top} \, \hat{r}_D^{\star \top} \, \hat{r}_\Omega^{\star \top}  ]^\top$ and $\hat{u}^\star = [\hat{\tau}_N^{\star \top} \, \hat{v}_D^{\star \top} \, \hat{v}_\Omega^{\star \top}  ]^\top$ denote the equilibrium state and input vectors, respectively, assuming $\dot{\hat{x}}=0$. Applying this to the model in Theorem \ref{theo:discrete_lifted_stokes} yields the following equations:
\begin{align}
	0 = &  -\!\hat{M}_{pr}^{-\top}\hat{F}_r(\hat{r}^\star)^\top \hat{M}_{e\q}^{-\top}\check{e}_\q(\hat{\q}^\star) + \hat{M}_{pr}^{-\top}\hat{b}_r +  \hat{M}_{pr}^{-\top}\hat{B}_N \hat{\tau}_N^\star \label{eq:equilbrio_1} \\
	0 = & \; \hat{M}_{e\q}^{-1}\hat{F}_r(\hat{r}^\star) \hat{M}_{pr}^{-1} \hat{M}_{pp} \hat{p}^\star + \hat{M}_{e\q}^{-1}\big(\!\hat{F}_{D}(\hat{r}^\star) \!-\! \hat{F}_r(\hat{r}^\star)\hat{M}_{pr}^{-1}\!\hat{M}_{pD}\!\big) \hat{v}_D^\star  \label{eq:equilbrio_2}\\
	0 = & \; \hat{M}_{pr}^{-1}\hat{M}_{pp} \hat{p}^\star -\hat{M}_{pr}^{-1}\hat{M}_{pD} \hat{v}_D^\star - \hat{v}_\Omega^\star  \label{eq:equilbrio_3} \\[0.25mm]
	0 = & \; \hat{v}_D^\star \label{eq:equilibrio_4} \\[0.5mm]
	0 = & \; \hat{v}_\Omega^\star. \label{eq:equilibrio_5}
\end{align}
As expected, equilibrium is achieved when $\hat{v}_D^\star = \hat{v}_\Omega^\star = 0$, which implies $\hat{p}^\star = 0$ from \eqref{eq:equilbrio_3}. Consequently, equation \eqref{eq:equilbrio_2} reduces to the trivial identity $0=0$. Therefore, solving the static problem requires \eqref{eq:equilbrio_1}, alongside an additional kinematic equation of the form $\hat{\q}^\star = \check{\q}(\hat{r}^\star)$.

\begin{proposition}[Kinematic equation] \label{prop:kinematicEQ}
	The local stationary condition of the kinematic equation, induced by the Hu-Washizu virtual work principle, is given by: \\[-1mm]
	\begin{equation}
		\var_{e_\q} \mathcal{W}_{\HW}^e = \dint_{\Omega^e}   \var e_\q^e \cdot [ {\q}^e - \q(r^e)] \, d\X. \label{eq:weak_Work_GHP}
	\end{equation}
	This represents the virtual work associated with the variation $\var e_\q^e$, where $\q(r^e) \in \R{m}$ denotes the continuous generalized strain expressed in terms of $r^e(\X,t)$. Applying the approximations from Theorem \ref{theo:discrete_lifted_stokes} yields the following relation:
	\begin{align}
		\hat{\q}(t)  = & \; \hat{M}_{e\q}^{-1} \sum_{e=1}^{n_e} (L_{e_\q}^e)^\top \!\! \dint_{\Omega^e} (N_{e_\q}^e)^\top \q(\tilde{r}^e)\, d\X = \check{\q}(\hat{r}), \label{eq:equilbrio_2v2}
	\end{align}  
	which holds at all time instants, including at equilibrium, i.e., $\q^\star = \check{\q}(\hat{r}^\star)$.
\end{proposition}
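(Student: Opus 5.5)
We obtain the kinematic relation from the static (work-form) counterpart of the Hu--Washizu functional, and then substitute the Galerkin approximations of Theorem \ref{theo:discrete_lifted_stokes} into the resulting element-wise stationarity condition.

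\textbf{Step 1: continuous stationarity condition.} We start from the classical Hu--Washizu virtual work functional, in which the strain constraint enters as $-\int_\Omega e_\q\cdot(\q - \q(r))\,d\X$. This is the configuration-level analogue of the rate constraint $-\int_\Omega e_\q\cdot(\dot\q-\F_\X(r)\dot r)\,d\X$ in Proposition \ref{prop:varPPLE2}. Taking the variation with respect to $e_\q$ only, and restricting to an element $\Omega^e$, gives \eqref{eq:weak_Work_GHP}. No boundary terms arise, since $e_\q$ appears undifferentiated, so no integration by parts (Lemma \ref{lemma:integration}) is needed.

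We also check consistency with \eqref{eq:weak_e_stokes}. Differentiating \eqref{eq:weak_Work_GHP} in time for a fixed test function gives $\int_{\Omega^e}\delta e_\q^e\cdot[\dot\q^e-\F_\X(r^e)\dot r^e]\,d\X$, using $\dot\q(r)=\F_\X(r)\dot r$. So the rate condition is the time derivative of the work-form condition. If \eqref{eq:weak_Work_GHP} holds at the initial time, the dynamics of Theorem \ref{theo:discrete_lifted_stokes} preserve it. This justifies the claim that the relation holds at all time instants.

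\textbf{Step 2: discretization.} We insert $\delta e_\q^e=N_{e_\q}^e\,\delta\hat e_\q^e$ and $\q^e\approx N_\q^e\hat\q^e$, and keep $\q(\tilde r^e)$ fully nonlinear with $\tilde r^e$ the lifted displacement, which includes $\hat r_\Omega$ and $\hat r_D$. We localize with $\hat e_\q^e=L_{e_\q}^e\hat e_\q$ and $\hat\q^e=L_\q^e\hat\q$. Since $\delta\hat e_\q$ is arbitrary, summing over elements yields
\[
\hat M_{e\q}\,\hat\q=\sum_{e=1}^{n_e}(L_{e_\q}^e)^\top\int_{\Omega^e}(N_{e_\q}^e)^\top\q(\tilde r^e)\,d\X,
\]
where $\hat M_{e\q}$ is exactly the matrix defined in Theorem \ref{theo:discrete_lifted_stokes}.

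\textbf{Step 3: invertibility of $\hat M_{e\q}$ (the main obstacle).} By Remark \ref{rem:Galerkin_strains}, $N_\q^e=A_\q^eN_{e_\q}^e$ with $A_\q^e$ symmetric positive definite. Each local block is then $\int(N_{e_\q}^e)^\top A_\q^e N_{e_\q}^e\,d\X$, a Gram matrix, which is positive definite provided the stress shape functions are linearly independent on $\Omega^e$.

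For the assembly we use that $L_\q^e=L_{e_\q}^e$: stress and strain share DOFs and evaluation points, with dimension $N_S$. The assembled matrix is then symmetric positive definite whenever every global strain DOF is touched by some element, which holds for both the continuous and the discontinuous interpolations described above.

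With invertibility established, multiplying by $\hat M_{e\q}^{-1}$ gives \eqref{eq:equilbrio_2v2}. Evaluating at $\hat r^\star$ gives the equilibrium relation $\hat\q^\star=\check\q(\hat r^\star)$.
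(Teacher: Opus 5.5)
Your proposal is correct, and its core (Step 2: substituting the Galerkin approximations, localizing with the Boolean matrices, factoring out the arbitrary $\delta\hat e_\q$ and inverting $\hat M_{e\q}$) is exactly the paper's proof. The extra pieces are sound and make explicit what the paper leaves implicit. Your Gram-matrix and assembly argument for invertibility fills in a claim the paper only asserts in the proof of Theorem~\ref{theo:discrete_lifted_stokes}, where it tacitly uses $L_\q^e = L_{e_\q}^e$. Your second observation is that the discrete $\dot{\hat\q}$ equation is precisely the time derivative of this relation; it therefore holds for all $t$ given consistent initial data, a caveat the paper's ``all time instants'' leaves unstated.
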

\begin{proof}
	The proof is detailed in Appendix \ref{app:prop7}.
\end{proof}

\subsection*{Discussion about the computational implementation}

	As proven in Theorems \ref{theo:discrete_lifted_jet} and \ref{theo:discrete_lifted_stokes}, the output port associated with the interior extension evaluates to zero ($\hat{y}_\Omega = 0$). This guarantees that the input $\hat{v}_\Omega(t)$ is neutral with respect to the power balance. 

While the arbitrary definition $\hat{v}_\Omega(t)$ expands the state space, its power neutrality hints at a structural redundancy. From a computational perspective, explicitly integrating this expanded system introduces unnecessary overhead. The following corollary demonstrates that the finite-dimensional PHS models naturally collapse into a reduced state of absolute nodal displacements, decoupling the time integration from $\hat{v}_\Omega(t)$.

\begin{corollary}[Reduction to absolute displacements] \label{cor:reduction_absolute}
	Let $\hat{d}(t) = \hat{r}_r(t) + \hat{r}_\Omega(t) \in \mathbb{R}^{N_F}$ define the absolute displacement vector for the free degrees of freedom. Since the local shape functions satisfy the spatial compatibility condition $N_r^e(\X) = N_{r_\Omega}^e(\X)$ established in Remark \ref{rem:Galerkin_strains}, the continuous-time dynamic evolution of both the discrete lifted jet-bundle PHS (Theorem \ref{theo:discrete_lifted_jet}) and the discrete lifted Stokes-Dirac PHS (Theorem \ref{theo:discrete_lifted_stokes}) is invariant to the choice of $\hat{v}_\Omega(t)$, and can be solved in the reduced states $\hat{z}(t) = [\hat{p}(t)^\top \, \hat{d}(t)^\top \, \hat{r}_D(t)^\top]^\top$ and $\hat{x}(t) = [\hat{p}(t)^\top \, \hat{\q}(t)^\top \, \hat{d}(t)^\top \, \hat{r}_D(t)^\top]^\top$, respectively.
\end{corollary}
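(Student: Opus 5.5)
The plan is to work directly with the state equations of Theorems \ref{theo:discrete_lifted_jet} and \ref{theo:discrete_lifted_stokes}. I would first show that, because $N_r^e = N_{r_\Omega}^e$, every state-dependent quantity depends on $(\hat{r}_r,\hat{r}_\Omega)$ only through the sum $\hat{d}$. After that, I would add the rows of $\dot{\hat{r}}_r$ and $\dot{\hat{r}}_\Omega$ so that $\hat{v}_\Omega$ cancels.

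\emph{Step 1 (dependence through $\hat{d}$ only).} With the same local coefficient matrix and the same boolean assembly structure, so that $L_r^e = L_{r_\Omega}^e$ as both act on the free DOFs, the element displacement becomes
\begin{equation*}
\tilde{r}^e = N_r^e L_r^e (\hat{r}_r + \hat{r}_\Omega) + N_{r_D}^e L_{r_D}^e \hat{r}_D = N_r^e L_r^e \hat{d} + N_{r_D}^e L_{r_D}^e \hat{r}_D .
\end{equation*}
Every state-dependent object in the two theorems is built from $\tilde{r}^e$: $\hat{\Psi}(\hat{r})$, $\hat{F}_{e_\q r}$, $\hat{F}_{e_\q D}$, $\hat{F}_r$, $\hat{F}_D$, $\hat{J}(\hat{x})$ and $\hat{G}(\hat{x})$. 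Hence all of them are functions of $(\hat{d},\hat{r}_D)$, and of $\hat{\q}$ where it appears. The Hamiltonians \eqref{eq:discrete_hamiltonian} and \eqref{eq:discrete_hamiltonian_stokes} are likewise functions of $\hat{d}$, because the gravity term is $[\hat{r}_r+\hat{r}_\Omega]^\top\hat{b}_r = \hat{d}^\top \hat{b}_r$. This is also why $\hat{e}_{r_r} = \hat{e}_{r_\Omega}$.

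\emph{Step 2 (cancellation of $\hat{v}_\Omega$).} Adding the $\dot{\hat{r}}_r$ and $\dot{\hat{r}}_\Omega$ rows of \eqref{eq:theo_discrete_jet_dyn} gives
\begin{equation*}
\dot{\hat{d}} = \hat{M}_{pr}^{-1}\hat{e}_p(\hat{p}) - \hat{M}_{pr}^{-1}\hat{M}_{pD}\hat{v}_D ,
\end{equation*}
since the $-I_{N_F}\hat{v}_\Omega$ and $+I_{N_F}\hat{v}_\Omega$ contributions cancel. The same row addition works for \eqref{eq:theo_discrete_stokes_dyn}. The remaining equations must then be checked for any explicit dependence on $\hat{v}_\Omega$. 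The $\hat{p}$ row involves only $\hat{e}_{r_r}(\hat{d},\hat{r}_D)$ (or $\check{e}_\q$ in the Stokes–Dirac case) and $\hat{\tau}_N$. The $\hat{\q}$ row involves only $\hat{e}_p$ and $\hat{v}_D$, and the $\hat{r}_D$ row is simply $\dot{\hat{r}}_D = \hat{v}_D$. In the $\hat{G}$ blocks, the third input column has nonzero entries only in the $\hat{r}_r$ and $\hat{r}_\Omega$ rows, so the $\hat{v}_\Omega$ terms appear nowhere else.

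\emph{Step 3 (closed reduced system).} The result is a closed ODE in $(\hat{p},\hat{d},\hat{r}_D)$, respectively $(\hat{p},\hat{\q},\hat{d},\hat{r}_D)$, with inputs $(\hat{\tau}_N,\hat{v}_D)$ only. Its structure is inherited from the theorems after the congruence $\hat{d}=[I\;I]\,(\hat{r}_r,\hat{r}_\Omega)$, and the co-energy for $\hat{d}$ equals the common value $\hat{e}_{r_r}=\hat{e}_{r_\Omega}$. The choice of $\hat{v}_\Omega$ therefore affects only how $\hat{d}$ is split into $\hat{r}_r$ and $\hat{r}_\Omega$, never the trajectory of the reduced state. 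This is consistent with $\hat{y}_\Omega = 0$.

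\emph{Main obstacle.} The delicate point is Step 1 at the assembly level. It requires identifying $L_r^e$ with $L_{r_\Omega}^e$, which holds because both index the same free nodes. It also requires verifying that no matrix such as $\hat{M}_{pD}$ or $\hat{F}_D$ secretly depends on $\hat{r}_\Omega$ separately from $\hat{r}_r$. Because the nonlinear operators $\F_\X(\tilde{r}^e)$ enter only through $\tilde{r}^e$, this reduces to the identity for $\tilde{r}^e$ displayed in Step 1.
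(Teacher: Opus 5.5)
Your proposal is correct and follows essentially the same argument as the paper: the paper adds the $\dot{\hat{r}}_r$ and $\dot{\hat{r}}_\Omega$ rows so that $\hat{v}_\Omega$ cancels, then uses $N_r^e = N_{r_\Omega}^e$ to write $\tilde{r}^e = N_r^e \hat{d}^e + N_{r_D}^e \hat{r}_D^e$, so every state-dependent matrix depends only on $(\hat{d},\hat{r}_D)$. Your check that the $\hat{v}_\Omega$ column of $\hat{G}$ touches only the $\hat{r}_r$ and $\hat{r}_\Omega$ rows, and your congruence remark that the reduced system keeps the PHS form, only make explicit what the paper leaves implicit.
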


\begin{proof}
	Corollary \ref{cor:reduction_absolute} establishes that the reduction to an absolute displacement state, previously achieved in Corollaries \ref{cor:FEM_equivalence1} and \ref{cor:HWFEM_equivalence1} by strictly enforcing $\hat{v}_\Omega(t) = 0$, is an intrinsic structural property of the discrete formulation that holds true even when $\hat{v}_\Omega(t) \neq 0$. To demonstrate this, summing $\dot{\hat{r}}_r(t)$ and the interior lifting $\dot{\hat{r}}_\Omega(t)$ from either \eqref{eq:theo_discrete_jet_dyn} or \eqref{eq:theo_discrete_stokes_dyn} yields the absolute velocity:
	\begin{equation}
		\dot{\hat{d}}(t) = \dot{\hat{r}}_r(t) + \dot{\hat{r}}_\Omega(t) = \left[ \hat{M}_{pr}^{-1}\hat{e}_p(\hat{p}) - \hat{M}_{pr}^{-1}\hat{M}_{pD} \hat{v}_D(t) - \hat{v}_\Omega(t) \right] + \hat{v}_\Omega(t) = \hat{M}_{pr}^{-1}\hat{e}_p(\hat{p}) - \hat{M}_{pr}^{-1}\hat{M}_{pD} \hat{v}_D(t),
	\end{equation}
	where $\hat{v}_\Omega(t)$ is canceled. Furthermore, since $N_r^e(\X) = N_{r_\Omega}^e(\X)$, the total element approximation of the displacement field yields $\tilde{r}^e(\X,t) =  N_r^e(\X) \hat{r}_r^e(t) + N_{r_\Omega}^e(\X) \hat{r}_\Omega^e(t) + N_{r_D}^e(\X) \hat{r}_D^e(t) =  N_r^e(\X) \hat{d}^e(t) + N_{r_D}^e(\X) \hat{r}_D^e(t)$. Therefore, all discrete matrices and vectors from Theorems \ref{theo:discrete_lifted_jet} and \ref{theo:discrete_lifted_stokes} that are evaluated in  $\hat{r} = (\hat{r}_r, \hat{r}_\Omega, \hat{r}_D)$, actually evaluate in $\hat{r} = (\hat{r}_r + \hat{r}_\Omega, \hat{r}_D) = (\hat{d},\hat{r}_D)$. Consequently, the continuous-time discretized PHS models can be integrated in time with absolute independence from $\hat{v}_\Omega(t)$. If required for analysis or control purposes, the relative dynamic state can be recovered a posteriori by direct temporal integration of the interior lifting velocity, i.e.,  $\hat{r}_r(t) = \hat{d}(t) - \int_0^t \hat{v}_\Omega(\tau) \, d\tau - \hat{r}_\Omega(0)$.
\end{proof}


\subsection{Connections with other finite element approaches} \label{ssec:FEM_algebra}



As established in Corollaries \ref{cor:FEM_equivalence1}, \ref{cor:HWFEM_equivalence1}, and \ref{cor:reduction_absolute}, the proposed lifting framework inherently reduces to finite-dimensional models governed by absolute nodal displacements, irrespective of the interior velocity field $\hat{v}_\Omega(t)$. Interestingly, this structural reduction can be algebraically reconstructed from classical spatial discretizations. This section demonstrates this equivalence, establishing a practical methodology to transition directly from standard or mixed FEM models into structure-preserving PHS formulations with strongly imposed Dirichlet boundary velocities, bypassing the need for penalty factors, Lagrange multipliers, or non-trivial kinematic liftings.

\subsubsection{Standard displacement-based finite element formulation}

Consider the spatial discretization of an elastodynamic system using a displacement-based finite element method, prior to the imposition of Dirichlet boundary conditions. This leads to the second-order unconstrained system:
\begin{equation}
	\underbrace{\begin{bmatrix} M_{11} & M_{12} \\ M_{21} & M_{22} \end{bmatrix}}_{M} 
	\underbrace{\begin{bmatrix} \ddot{q}_1(t) \\ \ddot{q}_2(t) \end{bmatrix}}_{\ddot{q}(t)} + 
	\underbrace{\begin{bmatrix} \nabla_{q_1} V(q)\\ \nabla_{q_2} V(q) \end{bmatrix}}_{\nabla_q V(q)} = 
	\underbrace{\begin{bmatrix} B_N \\ 0 \end{bmatrix}}_{G_N} \tau_N(t),
	\label{eq:standard_FEM_remark}
\end{equation}
where $M = M^\top>0$ is the mass matrix, $V(q)$ is the potential energy, and $\tau_N(t)$ are the imposed boundary tractions. The nodal coordinates $q(t)$ are partitioned such that $q_2(t) \in \mathbb{R}^{N_D}$ corresponds to the degrees of freedom on the Dirichlet boundary $\p\Omega_D$ (where $\dot{q}_2(t) = v_D(t)$ is strongly imposed), and $q_1(t) \in \mathbb{R}^{N_F}$ represents the nodes in the domain interior and Neumann boundary (the \textit{free} nodes).

The discrete kinetic energy of the unconstrained system is $T(\dot{q}) = \frac{1}{2}\dot{q}(t)^\top M \dot{q}(t)$. In the PHS framework, the state variable is the momentum $p_1(t)$ energy-conjugated to the velocity $\dot{q}_1(t)$. Taking the partial derivative with respect to $\dot{q}_1(t)$ yields:
\begin{equation}
	p_1(t) = \frac{\p T}{\p \dot{q}_1} = M_{11}\dot{q}_1(t) + M_{12}\dot{q}_2(t) = M_{11}\dot{q}_1(t) + M_{12}v_D(t). \notag
\end{equation}
Since $M$ is symmetric and positive definite, the principal submatrix $M_{11}$ is invertible. Thus, the interior velocity is isolated:
\begin{equation}
	\dot{q}_1(t) = M_{11}^{-1}p_1(t) - M_{11}^{-1}M_{12}v_D(t). \label{eq:FEM_q1_dot}
\end{equation}

From the first block-row of the dynamic equation \eqref{eq:standard_FEM_remark}, the balance equation for the free degrees of freedom is $M_{11} \ddot{q}_1(t) + M_{12}\ddot{q}_2(t) + \nabla_{q_1} V(q) = B_N \tau_N(t)$. Recognizing that the strong imposition implies $\ddot{q}_2(t) = \dot{v}_D(t)$, the time derivative of the momentum is $\dot{p}_1(t) = M_{11}\ddot{q}_1(t) + M_{12}\dot{v}_D(t)$. Substituting this equivalence simplifies the momentum balance to:
\begin{equation}
	\dot{p}_1(t) = -\nabla_{q_1} V(q) + B_N \tau_N(t). \label{eq:FEM_p1_dot}
\end{equation}

Gathering the imposed boundary velocities $\dot{q}_2(t) = v_D(t)$, the kinematic relation \eqref{eq:FEM_q1_dot}, and the momentum balance \eqref{eq:FEM_p1_dot}, the FEM-based model is transformed into the PHS form:
\begin{align}
	\underbrace{\begin{bmatrix} \dot{p}_1(t) \\ \dot{q}_1(t) \\ \dot{q}_2(t) \end{bmatrix}}_{\dot{z}(t)} &= 
	\underbrace{\begin{bmatrix} 0 & -I_{N_F} & 0 \\ I_{N_F} & 0 & 0 \\ 0 & 0 & 0 \end{bmatrix}}_{J = -J^\top} 
	\underbrace{\begin{bmatrix} M_{11}^{-1}p_1(t) \\ \nabla_{q_1} V(q) \\ \nabla_{q_2} V(q) \end{bmatrix}}_{\nabla_z H(z)} + 
	\underbrace{\begin{bmatrix} B_N & 0 \\ 0 & -M_{11}^{-1}M_{12} \\ 0 & I_{N_D} \end{bmatrix}}_{G} 
	\underbrace{\begin{bmatrix} \tau_N(t) \\ v_D(t) \end{bmatrix}}_{u_\p(t)}, \notag \\[2mm]
	y_\p(t) &= G^\top \nabla_z H(z), \notag \\[2mm]
	H(z) &= \frac{1}{2} p_1(t)^\top M_{11}^{-1} p_1(t) + V(q), \notag
\end{align}
which mirrors the structure derived in Corollary \ref{cor:FEM_equivalence1}.

\subsubsection{Mixed finite element formulations}

The methodology described above extends to mixed finite element models derived from alternative variational approaches, such as the Hellinger-Reissner principle \cite{thoma2022explicit,thoma2024velocity,brugnoli2022explicit}, the Hu-Washizu-based methods \cite{kinon2026mixed,hille2026port}, the generalized Hamilton's principle \cite{ponce2025port}, the linked Lagrange multiplier method \cite{ponce2024structure}, among others. 

Simplifying the presentation, applying these variational approaches while initially discarding the terms associated with the imposition of Dirichlet boundary conditions yield models as:
\begin{equation}
	\begin{bmatrix} M & 0 & 0 \\ 0 & I & 0 \\ 0 & 0 & I \end{bmatrix} 
	\begin{bmatrix} \dot{v}(t) \\ \dot{\varepsilon}(t) \\ \dot{r}(t) \end{bmatrix} = 
	\begin{bmatrix} 0 & -D(r)^\top & -I \\ D(r) & 0 & 0 \\ I & 0 & 0 \end{bmatrix} 
	\begin{bmatrix} v(t) \\ \sigma(\varepsilon) \\ e_r \end{bmatrix} + 
	\begin{bmatrix} G_N \\ 0 \\ 0 \end{bmatrix} \tau_N(t),
	\label{eq:mixed_FEM_unconstrained}
\end{equation}
where $M = M^\top>0$ is the mass matrix, $\varepsilon(t)$ is the discrete strain vector, $\sigma(\varepsilon)$ is the discrete stress, and $e_r$ is the gradient of the potential with respect to the displacement vector $r(t)$. The identity matrices $I$ are of appropriate dimensions. Partitioning the velocity and displacement vectors into components associated with the interior domain and Neumann boundary (subscript $1$) and the Dirichlet boundary (subscript $2$) induces the block structures:
\begin{equation}
 \quad v(t) = \begin{bmatrix} v_{1}(t) \\ v_{2}(t) \end{bmatrix} , \quad  r(t) = \begin{bmatrix} r_1(t) \\ r_2(t) \end{bmatrix}, \quad	M = \begin{bmatrix} M_{11} & M_{12} \\ M_{21} & M_{22} \end{bmatrix}, \quad G_N = \begin{bmatrix} B_N \\ 0 \end{bmatrix}, \quad e_r = \begin{bmatrix} e_{r_1} \\ e_{r_2} \end{bmatrix}. \notag
\end{equation}
The discrete version of the differential operator $D(r)$ is partitioned such that $D(r)v(t) = D_\Omega(r)v_1(t) + D_D(r)v_2(t)$. By enforcing $v_2(t) = v_D(t)$, the momentum for the free velocities is extracted as $p_1(t) = M_{11}v_1(t) + M_{12}v_D(t)$. Isolating the interior velocity yields:
\begin{equation}
	v_1(t) = M_{11}^{-1}p_1(t) - M_{11}^{-1}M_{12}v_D(t). \label{eq:mixed_FEM_v1}
\end{equation}

Substituting \eqref{eq:mixed_FEM_v1} into the second block-row of \eqref{eq:mixed_FEM_unconstrained} defines the strain dynamics:
\begin{equation}
	\dot{\varepsilon}(t) = D_\Omega(r)v_1(t) + D_D(r)v_D(t) = D_\Omega(r)M_{11}^{-1}p_1(t) + [D_D(r) - D_\Omega(r)M_{11}^{-1}M_{12}] v_D(t).
\end{equation}
From the first block-row of \eqref{eq:mixed_FEM_unconstrained}, the momentum balance is $M_{11}\dot{v}_1(t) + M_{12}\dot{v}_D(t) = -D_\Omega(r)^\top \sigma(\varepsilon) - e_{r_1} + B_N \tau_N(t)$. Recognizing the left-hand side as $\dot{p}_1(t)$, the dynamic equation reduces to:
\begin{equation}
	\dot{p}_1(t) = -D_\Omega(r)^\top \sigma(\varepsilon) - e_{r_1} + B_N \tau_N(t).
\end{equation}
Gathering the momentum balance, the strain dynamics, and the kinematics, the mixed FEM model maps into an ODE-PHS structure:
\begin{align}
	\underbrace{\begin{bmatrix} \dot{p}_1(t) \\ \dot{\varepsilon}(t) \\ \dot{r}_1(t) \\ \dot{r}_2(t) \end{bmatrix}}_{\dot{x}(t)} &= 
	\underbrace{\begin{bmatrix} 0 & -D_\Omega(r)^\top & -I_{N_F} & 0 \\ D_\Omega(r) & 0 & 0 & 0 \\ I_{N_F} & 0 & 0 & 0 \\ 0 & 0 & 0 & 0 \end{bmatrix}}_{J(x) = -J(x)^\top} 
	\underbrace{\begin{bmatrix} M_{11}^{-1}p_1(t) \\ \sigma(\varepsilon) \\ e_{r_1} \\ e_{r_2} \end{bmatrix}}_{\nabla_x H(x)} + 
	\underbrace{\begin{bmatrix} B_N & 0 \\ 0 & (D_D(r) - D_\Omega(r)M_{11}^{-1}M_{12}) \\ 0 & -M_{11}^{-1}M_{12} \\ 0 & I_{N_D} \end{bmatrix}}_{G(x)} 
	\underbrace{\begin{bmatrix} \tau_N(t) \\ v_D(t) \end{bmatrix}}_{u_\p(t)}, \notag \\[2mm]
	y_\p(t) &= G(x)^\top \nabla_x H(x), \notag \\[2mm]
	H(x) &= \frac{1}{2} p_1(t)^\top M_{11}^{-1} p_1(t) + U(\varepsilon) + r(t)^\top e_{r}, \notag
\end{align}
which mirrors the structure derived in Corollary \ref{cor:HWFEM_equivalence1}.

The methodology presented above demonstrates that the lifted PHS framework is consistent with standard computational mechanics practices, namely, the algebraic partitioning of matrices and vectors to strongly enforce Dirichlet boundary conditions. The advantage of the proposed framework is that it provides a systematic approach to achieve this while preserving the underlying geometric structure of the continuous models. Furthermore, it enables the definition of a relative displacement state and a distributed input port, which could be useful for system analysis and control design.

\section{Numerical experiments} \label{sec:Simulations}

This section validates the proposed structure-preserving lifting framework through numerical benchmarks. The simulations evaluate the finite-dimensional PHS models derived in Theorems \ref{theo:discrete_lifted_jet} and \ref{theo:discrete_lifted_stokes} by exploiting the algebraic reduction to the absolute displacements established in Corollary \ref{cor:reduction_absolute}, thereby avoiding the integration of relative fields.

\subsection{Spatial discretization accuracy: Shear locking and static equilibrium}

The first benchmark evaluates the spatial accuracy of the finite element schemes and their susceptibility to shear locking. A one-dimensional linear Timoshenko beam is considered, due to its well known shear locking issues. The generalized displacement vector contains the cross-section rotation and vertical deflection, $r(X,t) = [\psi(X,t) \; w(X,t)]^\top$, with material coordinate $X \in (0, L_0)$. 

The linear strain vector $\q(r) \in \mathbb{R}^2$ and the differential operator $\F_\X$ governing the strain rate $\dot{\q} = \F_\X \,\dot{r}$ are given by:
\begin{equation}
	\q(r) = \begin{bmatrix}
		\p_X \psi \\
		\p_X w - \psi
	\end{bmatrix}, \qquad
	\F_\X = \begin{bmatrix}
		\,\p_X & \;0 \\
		-1 & \; \p_X
	\end{bmatrix}.
	\notag
\end{equation}
The strain energy density is $\Psi(\q) = \frac{1}{2} \, \q^\top \mathcal{K}_\q \q$, with the constitutive stiffness matrix $\mathcal{K}_\q = \text{diag}(E I_0, \kappa G A_0)$, where $E$ is the Young's modulus, $G$ the shear modulus, $A_0$ the cross-sectional area, $I_0$ the second moment of inertia, and $\kappa$ the shear correction factor.

The geometric and physical parameters are: initial beam length $L_0 = 0.5$ [m], width $b = 0.03$ [m], thickness $h = 0.001$ [m] (yielding $A_0 = bh$ [m$^2$] and $I_0 = bh^3/12$ [m$^4$]), Young's modulus $E = 210$ [GPa], Poisson's ratio $\nu = 0.3$ [--], shear modulus $G = E/(2(1+\nu))$, and shear correction factor $\kappa = 5/6$ [--]. 

The beam is clamped at $X = 0$, imposed via the strong Dirichlet boundary condition $v_D(0,t) = 0$. A transverse point load is applied at the free end $X = L_0$ through the non-homogeneous Neumann boundary condition $\tau_N(L_0,t) = [0 \; P^\star]^\top$. Simulations are performed for five load levels: $P^\star \in \{1, 2, 3, 4, 5\}$ [N]. The analytical solution for the vertical deflection $w(X)$ under these conditions is:
\begin{equation}
	w(X) = \frac{P^\star}{E I_0} \left( \frac{L_0 X^2}{2} - \frac{X^3}{6} \right) + \frac{P^\star X}{\kappa G A_0}.
	\notag
\end{equation}
To isolate the spatial discretization error, the static equilibrium equations of the finite-dimensional models are solved. The discrete lifted jet-bundle formulation (Theorem \ref{theo:discrete_lifted_jet}) yields the static solution directly from its equilibrium conditions. In contrast, evaluating the static equilibrium for the discrete lifted Stokes-Dirac formulation (Theorem \ref{theo:discrete_lifted_stokes}) requires the supplementary kinematic equation established in Proposition \ref{prop:kinematicEQ}. Both formulations are evaluated over the domain using uniform meshes of $n_e = 3$ and $n_e = 10$ elements.

For the formulation in Theorem \ref{theo:discrete_lifted_jet}, continuous Lagrangian shape functions of degree 1 (P1) and 2 (P2) are implemented. The computed beam configurations are presented in Fig. \ref{fig:static_method1}. Under full integration, the P1 discretization exhibits severe shear locking, yielding an artificially rigid response that drastically underpredicts the deflection. However, this phenomenon is effectively alleviated by applying classical reduced integration, as demonstrated in the case with $n_e = 10$. Alternatively, increasing the interpolation order to P2 polynomials also mitigates the locking effect, even when using full integration; yet, the coarse 3-element mesh still displays a visible deviation from the exact equilibrium path, requiring further spatial refinement to accurately capture the analytical solution.

\begin{figure}[b]
	\centering
	\begin{tabular}{cc}
		\includegraphics[width=0.5\textwidth]{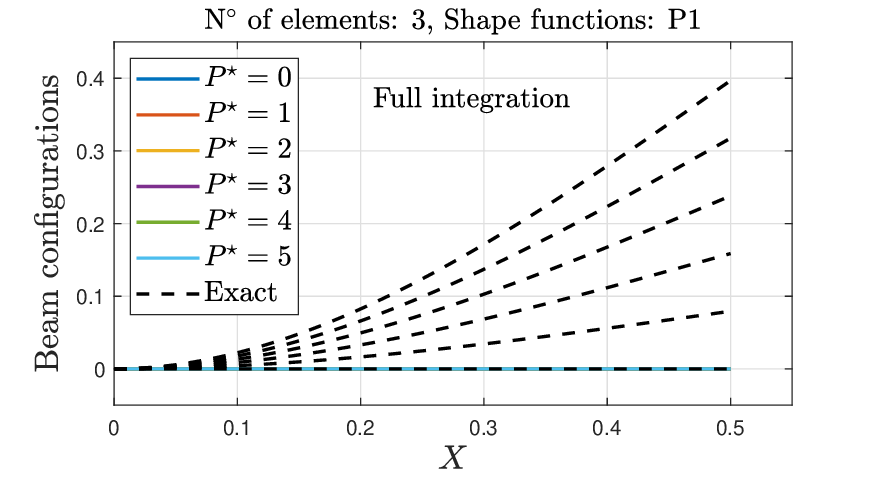} & \includegraphics[width=0.5\textwidth]{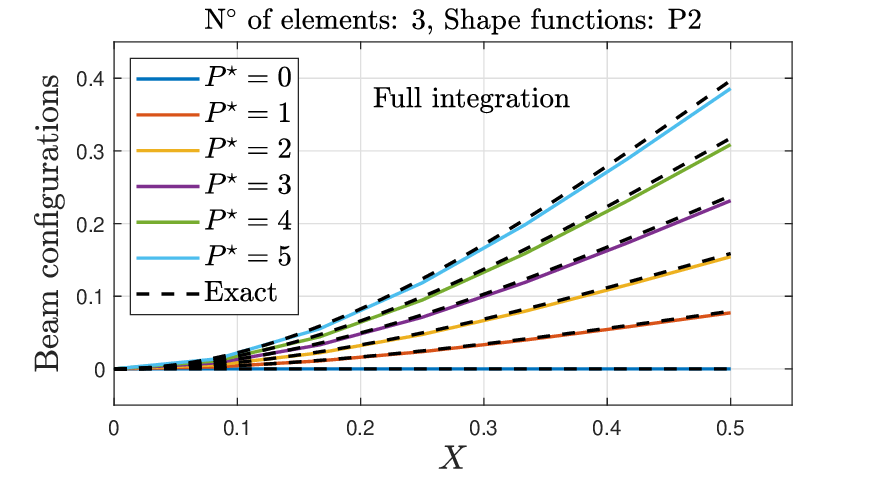} \\
		\includegraphics[width=0.5\textwidth]{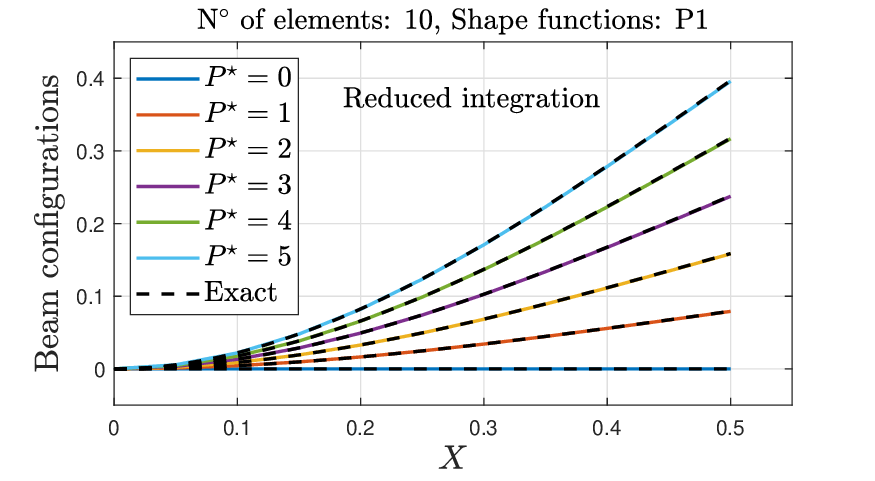} & \includegraphics[width=0.5\textwidth]{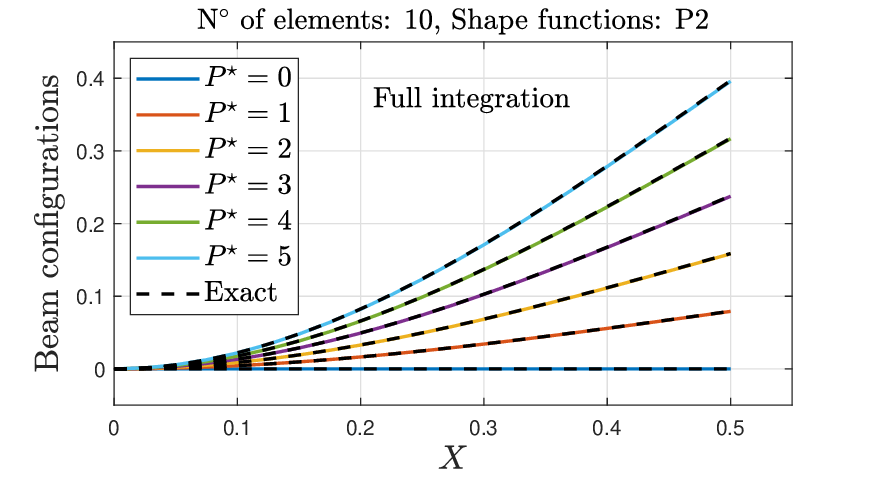}
	\end{tabular}\\[-3mm]
	\caption{Static deflections computed via the discrete lifted jet-bundle formulation (Theorem \ref{theo:discrete_lifted_jet}). Dashed lines represent the analytical solutions.}
	\label{fig:static_method1}
\end{figure}

%

For the formulation in Theorem \ref{theo:discrete_lifted_stokes}, the displacements and strain/stress interpolations are paired as P1-P0 (continuous linear displacements, discontinuous piecewise constant strains/stresses) and P2-P2 (continuous quadratic fields for all variables). The beam configurations are shown in Fig. \ref{fig:static_method2}. As expected from Hu-Washizu-type formulations, the model is inherently locking-free. The P1-P0 discretization successfully avoids artificial stiffness and closely approximates the analytical reference while employing full integration. Furthermore, the P2-P2 discretization yields a visibly superior approximation compared to the P2 scheme from Theorem \ref{theo:discrete_lifted_jet} on the coarse $n_e = 3$ mesh, as the introduction of additional degrees of freedom for the strain fields provides enhanced kinematic richness. As a preliminary conclusion, the discrete lifted jet-bundle formulation is susceptible to shear locking but can be alleviated with reduced integration, whereas the lifted Stokes-Dirac formulation preserves the locking-free properties of mixed schemes.

\begin{figure}[t]
	\centering
	\begin{tabular}{cc}
	\includegraphics[width=0.5\textwidth]{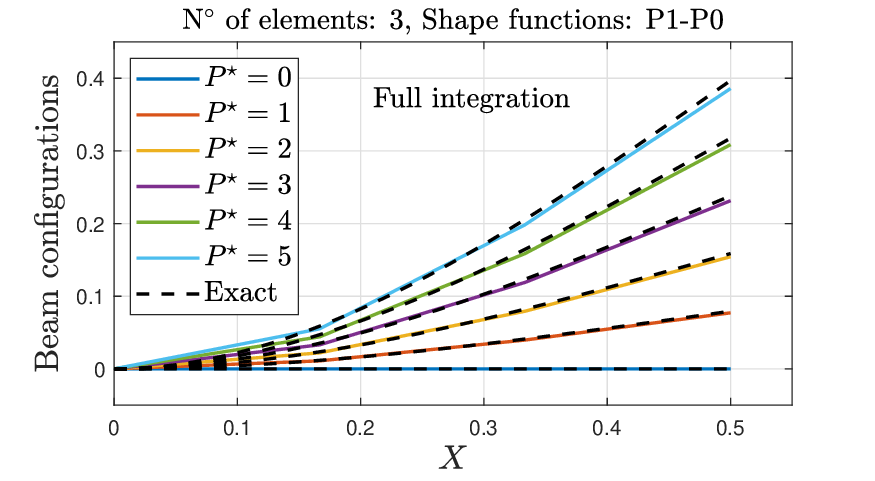} & \includegraphics[width=0.5\textwidth]{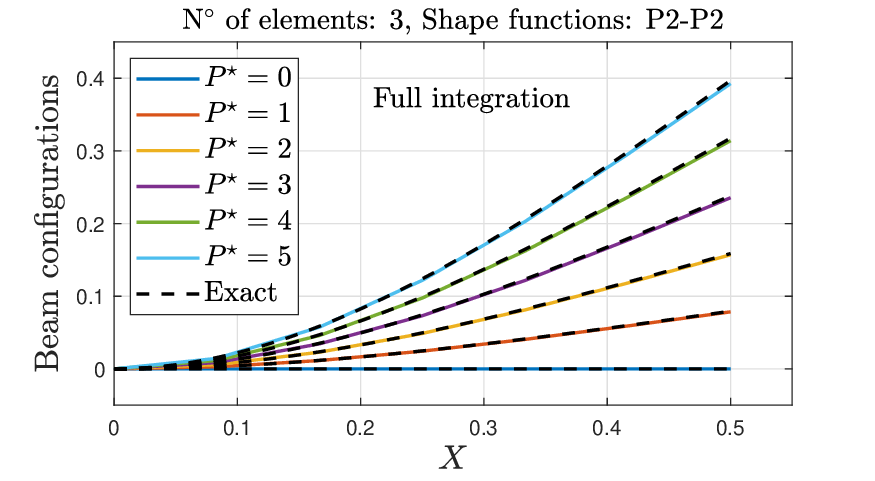} \\
	\includegraphics[width=0.5\textwidth]{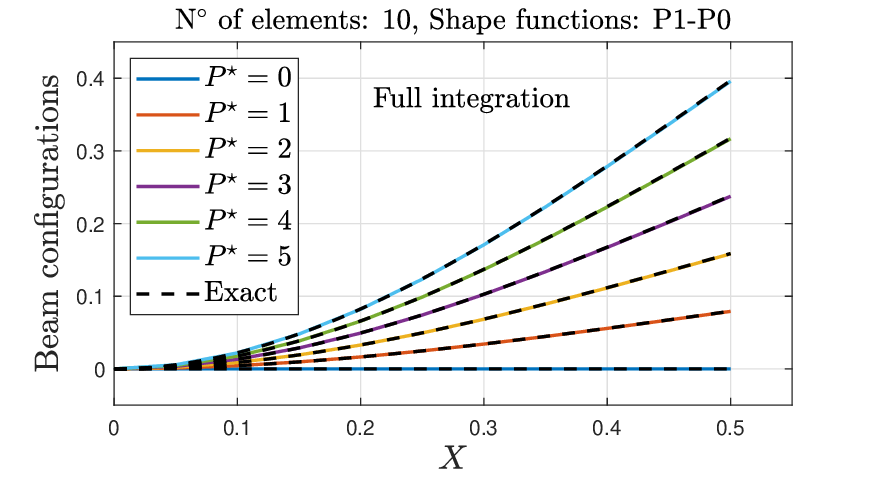} & \includegraphics[width=0.5\textwidth]{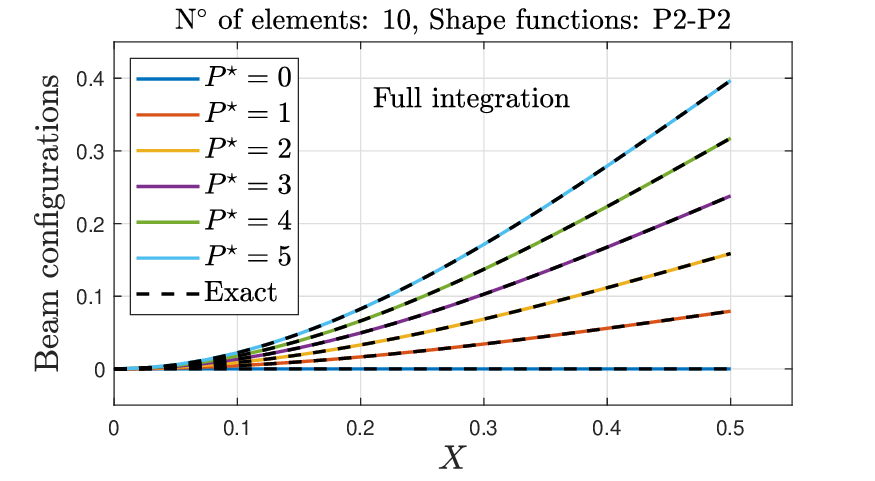}
\end{tabular}\\[-4mm]
\caption{Static deflections computed via the discrete lifted Stokes-Dirac formulation (Theorem \ref{theo:discrete_lifted_stokes}). Dashed lines represent the analytical solutions.}
	\label{fig:static_method2}
\end{figure}

\subsection{Energy conservation: homogeneous boundary conditions}

The second benchmark validates the energy-preserving properties of the discretized formulation. A one-dimensional spatial geometrically exact string is used considering a hyperelastic Saint Venant-Kirchhoff material under the action of a constant gravitational field.

Let $s \in [0, L_0]$ denote the arc-length coordinate in the undeformed state. The reference configuration is defined by the position vector $d_0(s) = [x_0(s) \; y_0(s) \; z_0(s)]^\top$. To satisfy geometric compatibility, the initial tangent vector $\p_s d_0(s)$ is constrained to be unitary throughout the domain, ensuring $\|\p_s d_0(s)\| = 1$.

The current spatial position is given by $d(s,t) = d_0(s) + r(s,t)$, where $r(s,t) = [r_x(s,t) \; r_y(s,t) \; r_z(s,t)]^\top$ is the generalized displacement vector. By quantifying the metric difference between the current and reference configurations, the generalized Green-Lagrange scalar strain $\q(s,t)$ is defined as:
\begin{equation}
	\q(r) = \frac{1}{2} \big( \p_s d \cdot \p_s d - \p_s d_0 \cdot \p_s d_0 \big) = \p_s d_0 \cdot \p_s r + \frac{1}{2} \p_s r \cdot \p_s r.
	\notag
\end{equation}
Following the Stokes-Dirac geometric structure established in Theorem \ref{theo:discrete_lifted_stokes}, the generalized state vector is assembled as $x = [p_x \; p_y \; p_z \; \q \; r_x \; r_y \; r_z]^\top \in \mathbb{R}^7$, where $p = [p_x \;\; p_y \;\; p_z]^\top = \rho_0 A_0 \dot{r}$ is the generalized momentum, with $\rho_0$ the material density and $A_0$ the cross-sectional area. The underlying formally skew-adjoint differential operator $\mathcal{J}(x) = -\mathcal{J}^*(x)$ coupling the momentum balance and the kinematic compatibility is given by:
\begin{equation}
	\mathcal{J}(x) = \begin{bmatrix}
		0 & 0 & 0 & \p_s \big( (\p_s x_0 + \p_s r_x) \, \cdot \big) & \;\;-1 & \;\;0 & \;\;0 \\[-0.75mm]
		0 & 0 & 0 & \p_s \big( (\p_s y_0 + \p_s r_y) \, \cdot \big) & \;\;0 & \;\;-1 & \;\;0 \\[-0.75mm]
		0 & 0 & 0 & \p_s \big( (\p_s z_0 + \p_s r_z) \, \cdot \big) & \;\;0 & \;\;0 & \;\;-1 \\[-0.75mm]
		(\p_s x_0 + \p_s r_x) \p_s & (\p_s y_0 + \p_s r_y) \p_s & (\p_s z_0 + \p_s r_z) \p_s & 0 & \;\;0 & \;\;0 & \;\;0 \\[-0.75mm]
		1 & 0 & 0 & 0 & \;\;0 & \;\;0 & \;\;0 \\[-0.75mm]
		0 & 1 & 0 & 0 & \;\;0 & \;\;0 & \;\;0 \\[-0.75mm]
		0 & 0 & 1 & 0 & \;\;0 & \;\;0 & \;\;0
	\end{bmatrix},
	\notag
\end{equation}
where the placeholder $(\cdot)$ denotes the argument upon which the differential operator acts. 
The Hamiltonian $H(x)$ encapsulates the kinetic energy ($E_{kin}$), the elastic strain energy ($E_{elas}$), and the total gravitational potential energy ($E_g$). The strain energy density function is given by $\Psi(\q) = \frac{1}{2} E A_0 \q^2$, where $E$ denotes the Young's modulus. The gravitational body force vector per unit undeformed length is $b = [0 \;\; -\rho_0 A_0 g \;\; 0]^\top$, where $g$ is the gravitational acceleration. The Hamiltonian is thus expressed as: 
\begin{equation}
	H(x) = \int_{0}^{L_0} \left[ \frac{1}{2\rho_0 A_0} p^\top p + \frac{1}{2} E A_0 \q^2 - (d_0 + r)^\top b \right] ds.
	\notag
\end{equation}
To evaluate the energy conservation properties, the string is fixed at $s = 0$ via the Dirichlet boundary condition $v_D(0,t) = 0$, and left unconstrained at the free end $s = L_0$ ($\tau_N(L_0,t) = 0$). The initial displacements and strains are set to zero ($r(s,0) = 0$, $\q(s,0) = 0$). To induce the dynamic response, a linearly increasing initial transverse velocity is prescribed along the string, reaching $2$ [m/s] at the free end: $\dot{r}_x(s,0) = 2s/L_0$. This velocity condition is imposed through the initial generalized momentum density as $p_x(s,0) = \rho_0 A_0 \dot{r}_x(s,0)$, with all other initial momentum components ($p_y, p_z$) set to zero. Under these conditions, the Hamiltonian must remain constant at the initial value: 
\begin{equation}
	H(x)\big|_{t=0} = \int_{0}^{L_0} \left[ \frac{1}{2 \rho_0 A_0} p_x(s,0)^2 -   d_0(s)^\top b \, \right] ds .
	\notag
\end{equation}
The geometric and physical parameters of the string are: length $L_0 = 1.5$ [m], cross-sectional area $A_0 = 1 \times 10^{-6}$ [m$^2$], density $\rho_0 = 2800$ [kg/m$^3$], Young's modulus $E = 2$ [MPa], and gravitational acceleration $g = 9.806$ [m/s$^2$]. The reference configuration is defined as a straight line oriented in space, given by $d_0(s) = s \frac{\sqrt{2}}{2} [1 \;\; -1 \;\; 0]^\top$. The time integration is performed using the implicit midpoint rule with a constant time step $\Delta t = 5 \times 10^{-3}$ [s].

The discrete lifted jet-bundle formulation (Theorem \ref{theo:discrete_lifted_jet}) is evaluated using continuous linear shape functions (P1) on uniform meshes of $n_e = 3$ and $n_e = 10$ elements, performing full integration. Figure \ref{fig:string_snapshots_M1} presents the spatial configuration of the string at five distinct time instances for both mesh densities. 

\begin{figure}[b!]
	\centering
	\hspace*{-1.5mm}\begin{tabular}{ccccc}
		\includegraphics[width=0.18\textwidth]{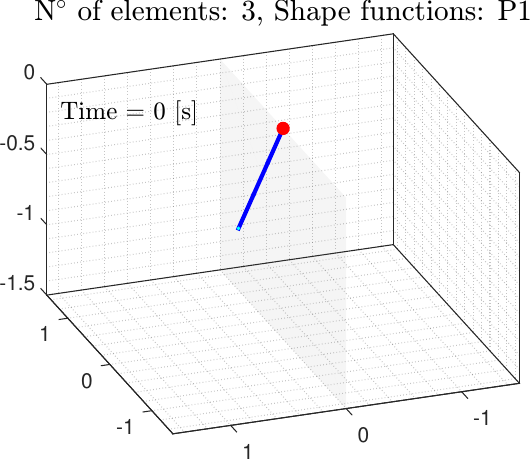} & \includegraphics[width=0.18\textwidth]{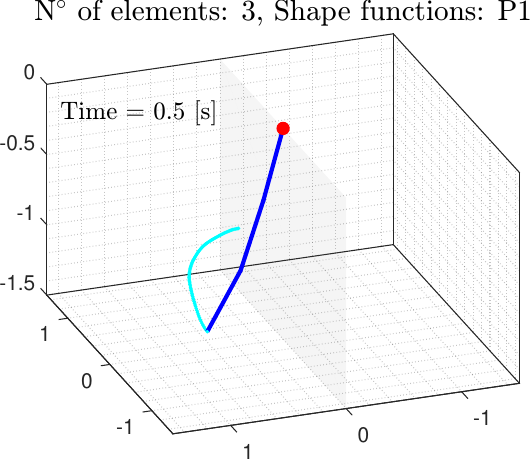} & \includegraphics[width=0.18\textwidth]{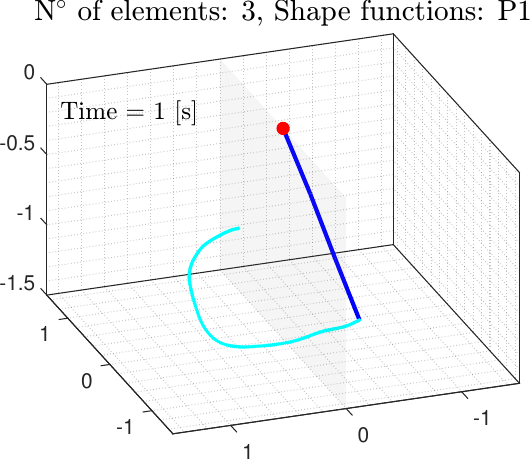} &
		\includegraphics[width=0.18\textwidth]{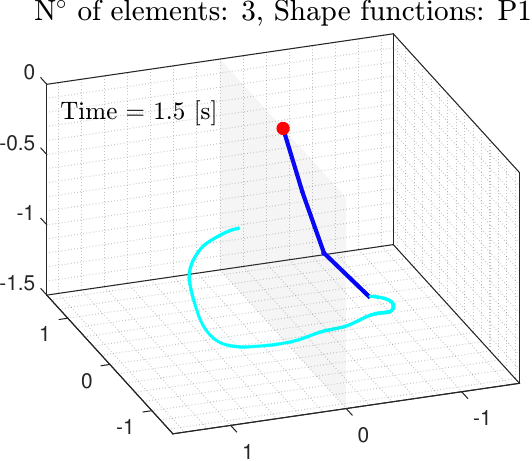} & \includegraphics[width=0.18\textwidth]{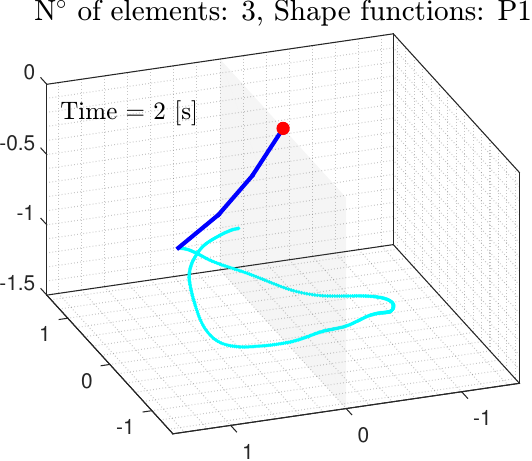} \\
		\includegraphics[width=0.18\textwidth]{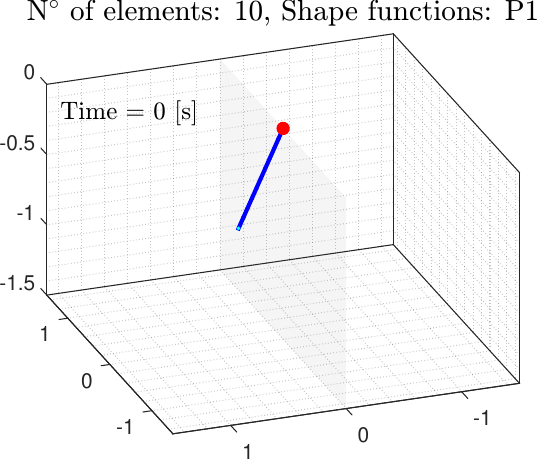} & \includegraphics[width=0.18\textwidth]{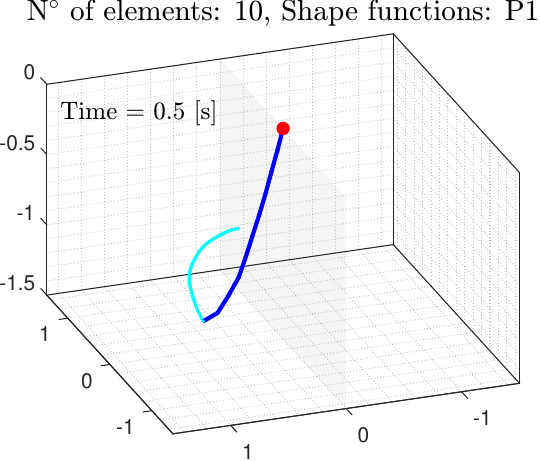} & \includegraphics[width=0.18\textwidth]{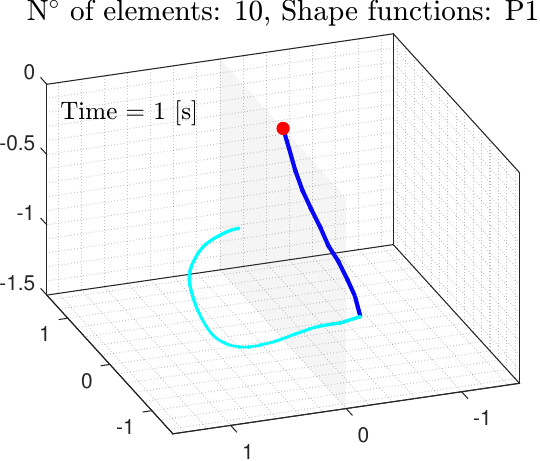} &
		\includegraphics[width=0.18\textwidth]{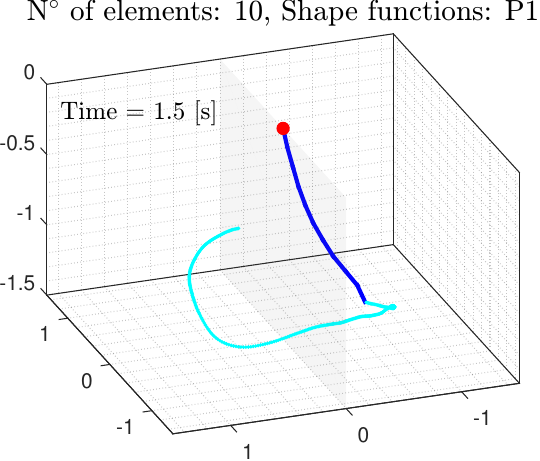} & \includegraphics[width=0.18\textwidth]{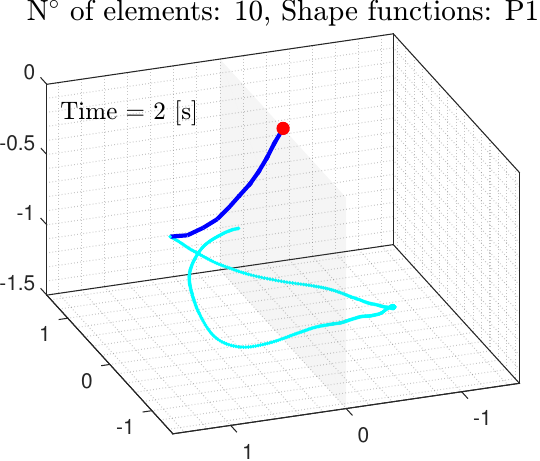}\\[-3mm]
	\end{tabular}
	\caption{Spatial configuration of the string computed via the discrete lifted jet-bundle formulation (Theorem \ref{theo:discrete_lifted_jet}).}
	\label{fig:string_snapshots_M1}
\end{figure}

The time evolution of the individual system energies, i.e., elastic ($\hat{E}_{elas}$), kinetic ($\hat{E}_{kin}$), and gravitational ($\hat{E}_g$), is presented in Fig. \ref{fig:string_energy_M1}. The latter two start with a non-zero value corresponding to the prescribed initial velocity and initial configuration. For all time, the sum of these energies remains constant, conserving the total initial energy independent of the number of elements used for the discretization.

\begin{figure}[t]
	\centering
	\begin{tabular}{cc}
		\includegraphics[width=0.45\textwidth]{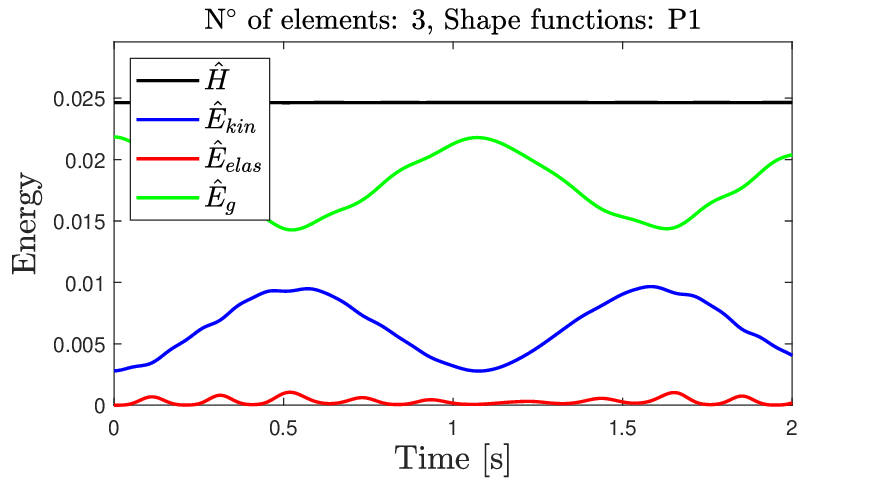} & \includegraphics[width=0.45\textwidth]{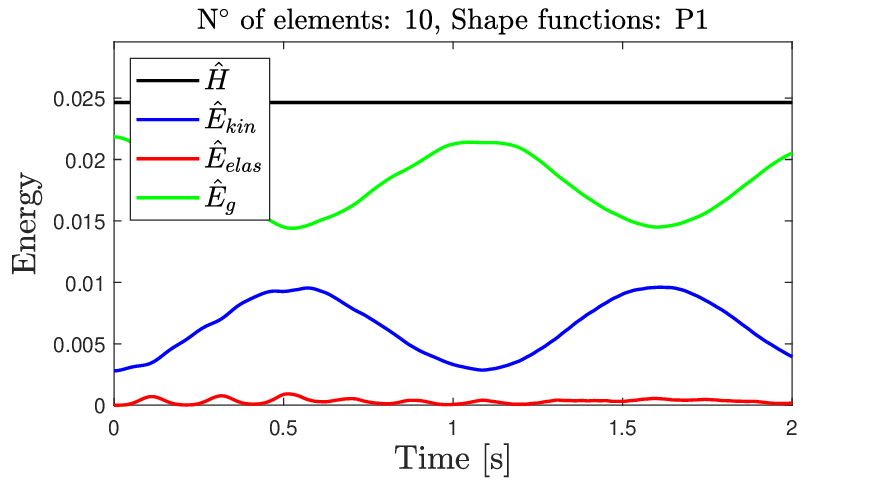}\\[-3mm]
	\end{tabular}
	\caption{Energy evolution of the string computed via the discrete lifted jet-bundle formulation (Theorem \ref{theo:discrete_lifted_jet}).}
	\label{fig:string_energy_M1}
\end{figure}

To validate the mixed formulation, the identical dynamic test is performed using the discrete lifted Stokes-Dirac structure (Theorem \ref{theo:discrete_lifted_stokes}). The string is discretized using P1-P0 elements (continuous linear displacements and discontinuous piecewise constant strains/stresses) on uniform meshes of $n_e = 3$ and $n_e = 10$ elements. Figure \ref{fig:string_snapshots_M2} displays the spatial configuration of the string at the same five time instances, while Fig. \ref{fig:string_energy_M2} presents the corresponding time evolution of the individual and total system energies.

\begin{figure}[b]
	\centering
		\hspace*{-1.5mm}\begin{tabular}{ccccc}
		\includegraphics[width=0.18\textwidth]{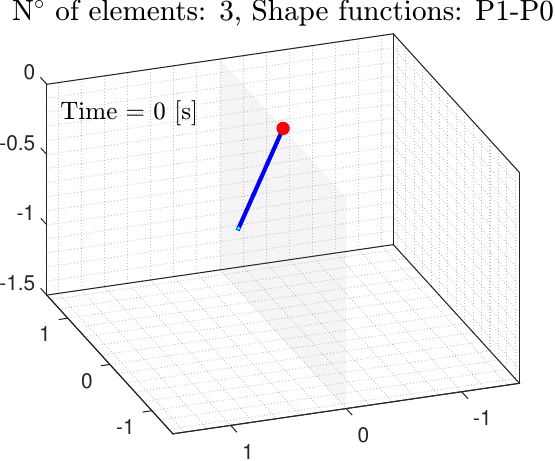} & \includegraphics[width=0.18\textwidth]{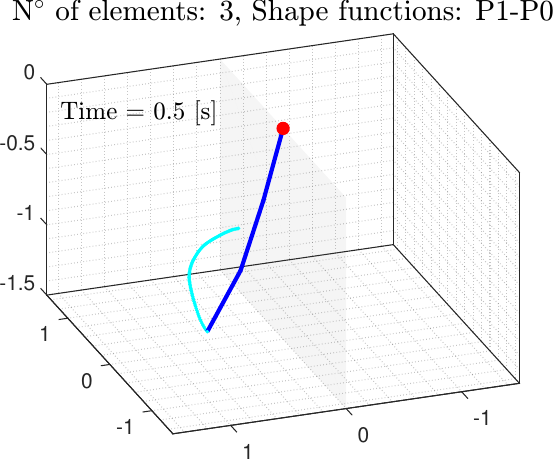} & \includegraphics[width=0.18\textwidth]{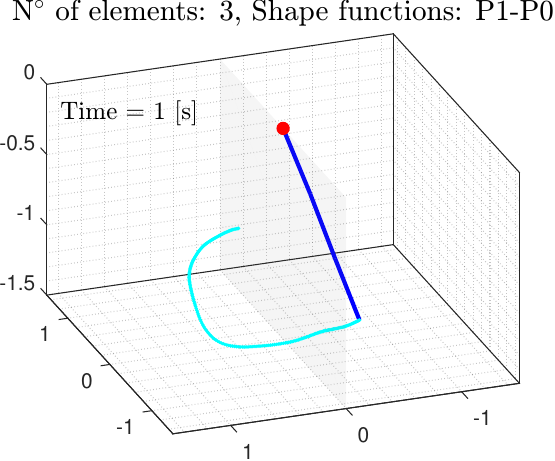} &
		\includegraphics[width=0.18\textwidth]{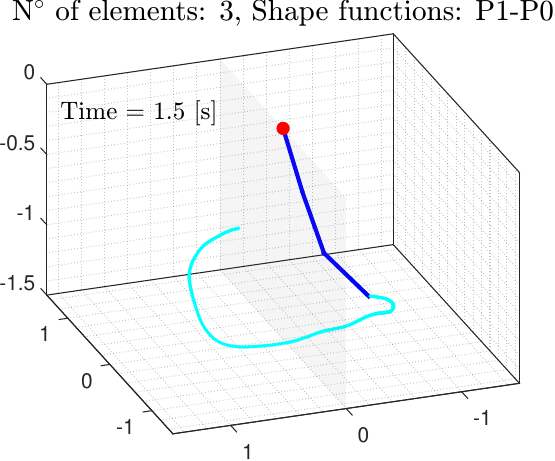} & \includegraphics[width=0.18\textwidth]{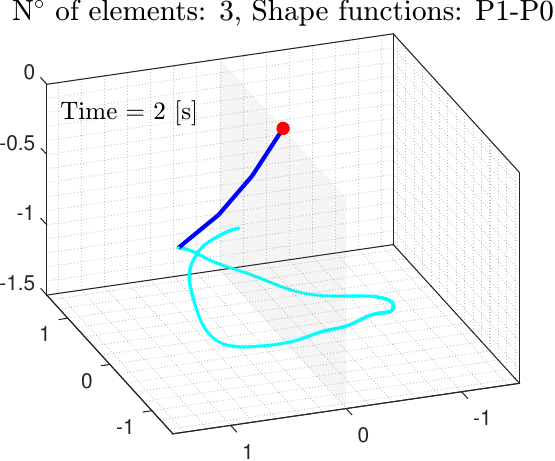} \\
		\includegraphics[width=0.18\textwidth]{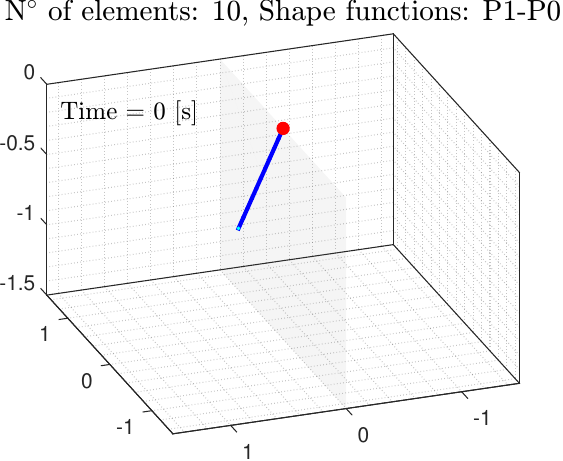} & \includegraphics[width=0.18\textwidth]{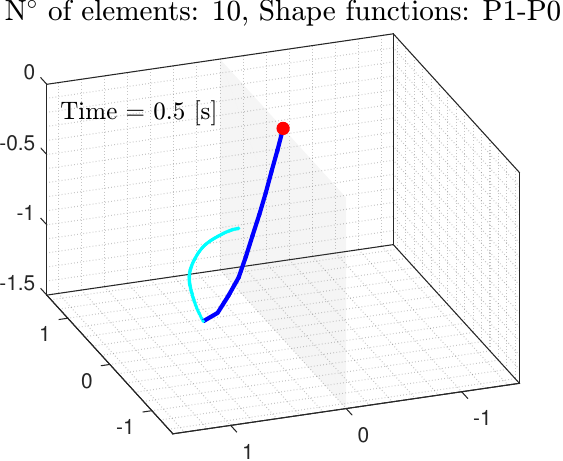} & \includegraphics[width=0.18\textwidth]{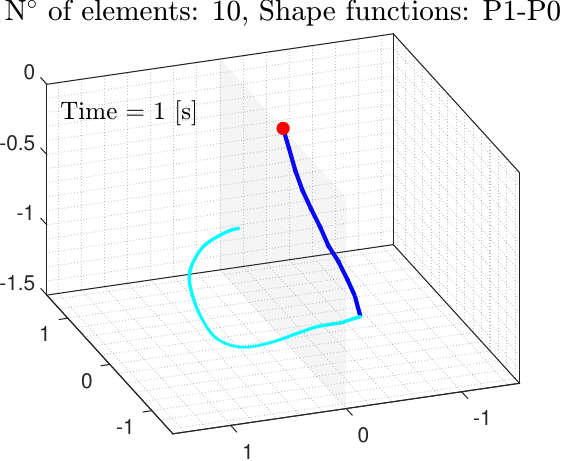} &
		\includegraphics[width=0.18\textwidth]{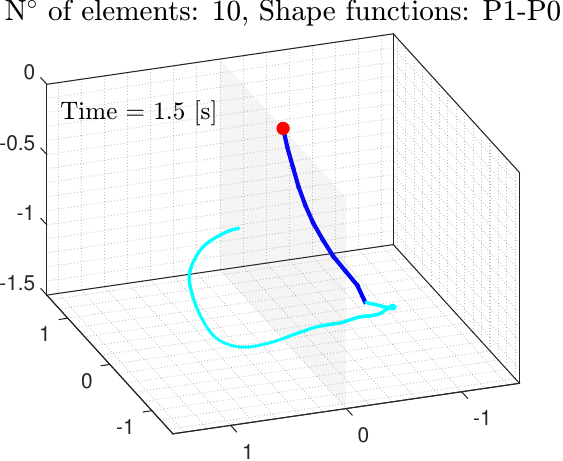} & \includegraphics[width=0.18\textwidth]{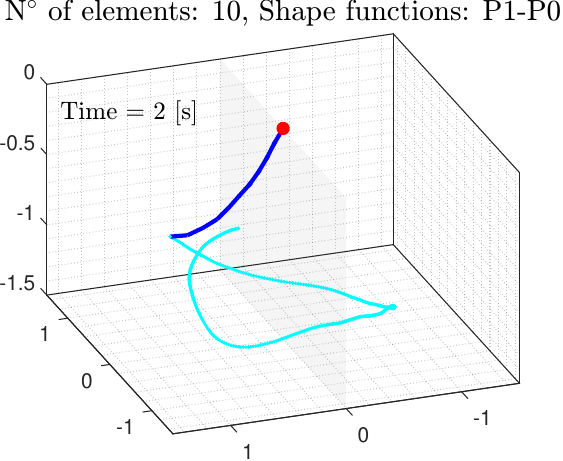}\\[-2mm]
	\end{tabular}
		\caption{Spatial configuration of the string computed via the discrete lifted Stokes-Dirac formulation (Theorem \ref{theo:discrete_lifted_stokes}).}
	\label{fig:string_snapshots_M2}
\end{figure}


\begin{figure}[b]
	\centering
		\begin{tabular}{cc}
		\includegraphics[width=0.45\textwidth]{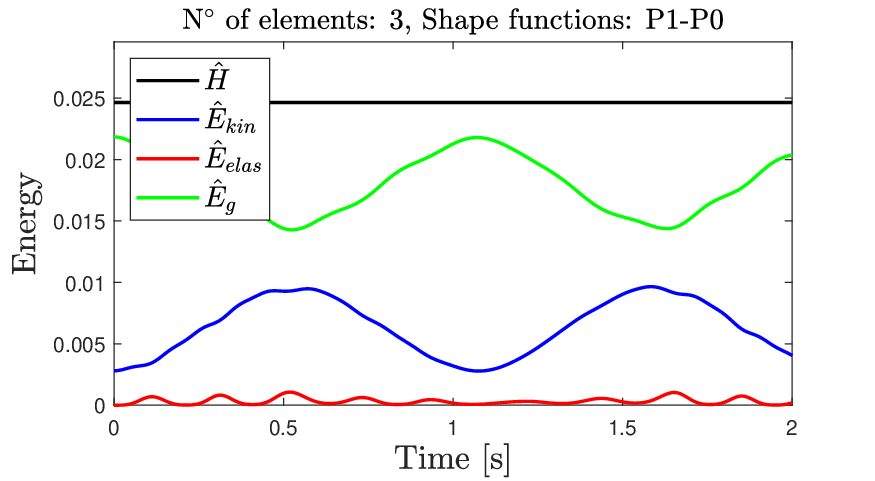} & \includegraphics[width=0.45\textwidth]{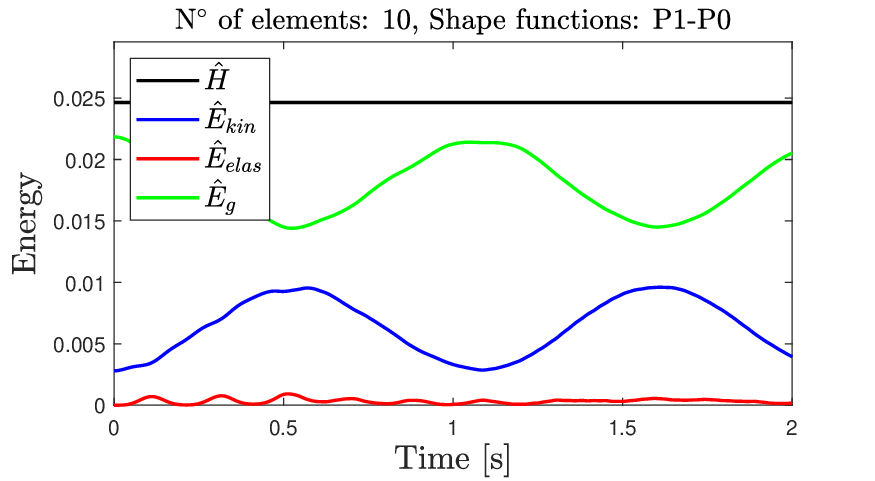}\\[-3mm]
	\end{tabular}
	\caption{Energy evolution of the string computed via the discrete lifted Stokes-Dirac formulation (Theorem \ref{theo:discrete_lifted_stokes}).}
	\label{fig:string_energy_M2}
\end{figure}

As observed in Fig. \ref{fig:string_energy_M2}, the sum of the elastic, kinetic, and gravitational potential energies remains constant over time, conserving the total initial energy of the system independent of the spatial discretization density. Comparing the results from both approaches, the discrete models derived from Theorem \ref{theo:discrete_lifted_jet} and Theorem \ref{theo:discrete_lifted_stokes} behave consistently, capture the same dynamics, and yield equivalent energy-preserving responses.

\subsection{Dynamic response under Dirichlet boundary velocities}

\begin{figure*}[t!]
	\begin{center}
		\includegraphics[scale=0.78]{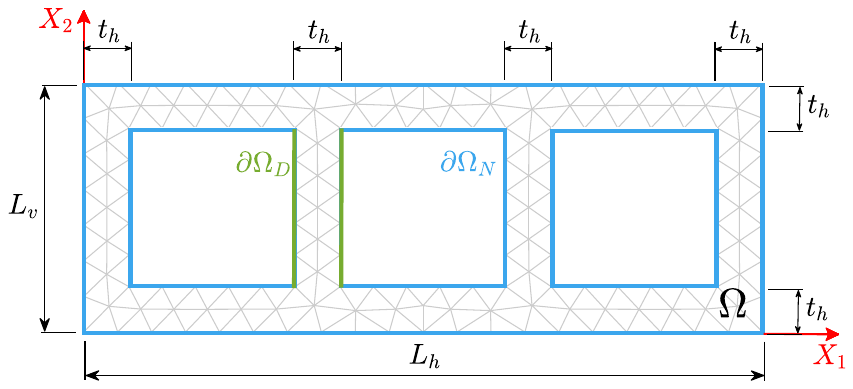}
	\end{center}
	\vspace{-3mm}\caption{Two dimensional frame.}
	\label{fig:Frame2D}
\end{figure*}

The third numerical example evaluates the proposed formulations under imposed Dirichlet boundary velocities. For this purpose, a geometrically nonlinear two-dimensional frame with compressible Neo-Hookean material is used. Gravitational forces are omitted to isolate the dynamic effects induced by the prescribed boundary motion.

Let $\Omega \subset \mathbb{R}^2$ denote the reference domain of the frame with constant thickness $h$, as shown in Fig. \ref{fig:Frame2D}. The spatial material coordinates are denoted by $\X = [X_1 \;\; X_2]^\top$ and the displacement field by $r(\X,t) = [u_1(\X,t) \;\; u_2(\X,t)]^\top$. To capture large deformation kinematics, the non-zero components of the Green-Lagrange strain tensor are mapped into the generalized strain vector $\q(\X,t) = [\q_1 \;\; \q_2 \;\; \q_3]^\top$. The strain vector defined in terms of the spatial derivatives of $r(\X,t)$ is given by:
\begin{equation}
	\q(r) = \begin{bmatrix}
		\p_1 u_1 + \frac{1}{2}(\p_1 u_1)^2 + \frac{1}{2}(\p_1 u_2)^2 \\
		\p_2 u_2 + \frac{1}{2}(\p_2 u_1)^2 + \frac{1}{2}(\p_2 u_2)^2 \\
		\p_2 u_1 + \p_1 u_2 + (\p_1 u_1 \p_2 u_1) + (\p_1 u_2 \p_2 u_2)
	\end{bmatrix}.
	  \notag
\end{equation}

The state-dependent differential operator $\mathcal{F}_\X(r)$, such that $\dot{\q} = \mathcal{F}_\X(r)\dot{r}$, is given by:
\begin{equation}
	\mathcal{F}_\X(r) = \begin{bmatrix}
		\p_1 + \p_1 u_1 \p_1 & \;\; \p_1 u_2 \p_1 \\
		\p_2 u_1 \p_2 & \;\;  \p_2 + \p_2 u_2 \p_2 \\
		\p_2 + \p_2 u_1 \p_1 + \p_1 u_1 \p_2 & \;\; \p_1 + \p_2 u_2 \p_1 + \p_1 u_2 \p_2
	\end{bmatrix}.
	\notag
\end{equation}
The compressible Neo-Hookean material is characterized by the first and third invariants of the right Cauchy-Green deformation tensor, which are expressed in terms of the generalized strains as $I_C(\q) = 2\q_1 + 2\q_2 + 3$ and $III_C(\q) = (2\q_1+1)(2\q_2+1) - \q_3^2$. Then, the generalized strain energy density function $\Psi(\q)$ is given by:
\begin{equation}
	\Psi(\q) = h \left( \frac{\mu_L}{2} \left[ I_C(\q) - 3 - \ln(III_C(\q)) \right] + \frac{\lambda_L}{2} \left(\sqrt{III_C(\q)} - 1 \right)^2 \right),
\end{equation}
where $\mu_L$ and $\lambda_L$ are the Lamé constants of the material, defined in terms of Young's modulus $E$ and Poisson's ratio $\nu$ as:
\begin{equation}
	\mu_L = \frac{E}{2(1+\nu)}, \qquad \lambda_L = \frac{E\nu}{(1+\nu)(1-2\nu)}.
	\notag
\end{equation}

The Hamiltonian $H(x)$ encapsulates the kinetic and elastic strain energies of the system as:
\begin{equation}
	H(x) = \int_{\Omega} \left[ \frac{1}{2\rho_0 h} p^\top p + \Psi(\q) \right] d\X,
	\notag
\end{equation}
where $p = [p_1 \;\; p_2]^\top = \rho_0 h \dot{r}$ is the generalized momentum density, with $\rho_0$ the material density.

The system is initialized from a state of rest, with zero initial displacements, strains, and momenta ($x|_{t=0} = 0$). The Neumann boundaries are left free of loads ($\tau_N(\SX,t) = 0$). To induce motion, a time-dependent velocity profile $v_D(\SX,t) = v_0 \sin(2\pi f t)$, with amplitude $v_0 = 0.2$ [m/s] and frequency $f = 10$ [Hz], is imposed at the Dirichlet boundary for the interval $t \le 0.5$ [s]. For $t > 0.5$ [s], this boundary is fixed, setting $v_D(\SX,t) = 0$ for the remainder of the simulation. To address the higher computational demands of the two-dimensional nonlinear model, two time integration schemes are evaluated and compared: the explicit Störmer-Verlet method and the implicit midpoint rule. Both schemes are implemented with a constant time step $\Delta t = 2 \times 10^{-4}$ [s]. For details on the implementation of these integrators in the context of port-Hamiltonian elastodynamics, the reader is referred to \cite[Appendix B]{ponce2025port}.

The physical and geometric parameters of the frame are: overall length $L_h = 30$ [cm], overall height $L_v = 11$ [cm], strut width $t_h = 2$ [cm], out-of-plane thickness $h = 10$ [mm], density $\rho_0 = 1000$ [kg/m$^3$], Young's modulus $E = 50$ [kPa], and Poisson's ratio $\nu = 0.4$. The spatial domain is discretized using a single unstructured triangular mesh (Fig. \ref{fig:Frame2D}). The discrete lifted jet-bundle formulation (Theorem \ref{theo:discrete_lifted_jet}) is evaluated using continuous linear elements (P1), while the discrete lifted Stokes-Dirac formulation (Theorem \ref{theo:discrete_lifted_stokes}) utilizes the mixed P1-P0 pairing. 

Figure \ref{fig:frame_snapshots} presents the spatial configuration of the frame computed via the formulation from Theorem \ref{theo:discrete_lifted_jet} using the implicit midpoint rule. The dynamics produced by the mixed formulation from Theorem \ref{theo:discrete_lifted_stokes} yield equivalent responses and are therefore omitted. The time evolution of the system energies is presented in Fig. \ref{fig:frame_energies}, comparing the performance of the explicit Störmer-Verlet and the implicit midpoint rule.

\begin{figure}[b]
	\centering
	\hspace*{-1.5mm}\begin{tabular}{ccccc}
		\includegraphics[width=0.17\textwidth]{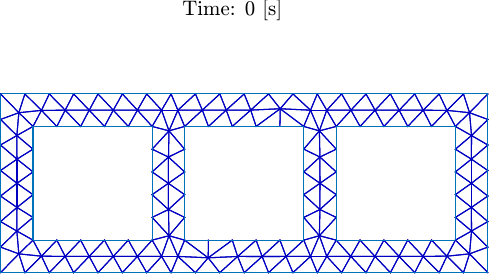} & \includegraphics[width=0.17\textwidth]{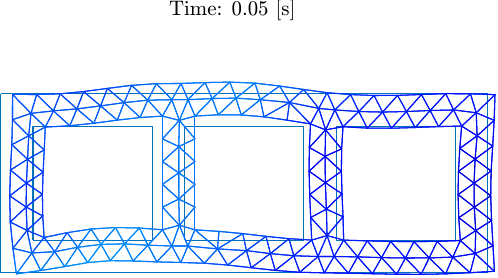} & \includegraphics[width=0.17\textwidth]{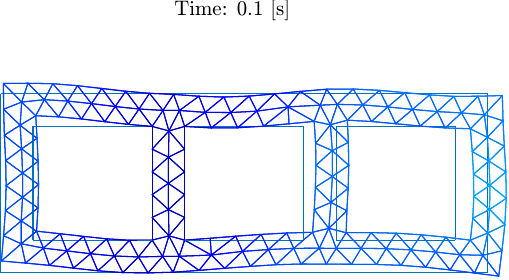} &
		\includegraphics[width=0.17\textwidth]{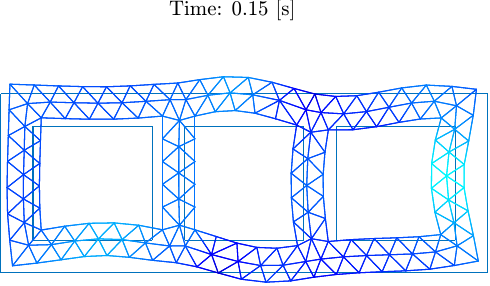} & \includegraphics[width=0.17\textwidth]{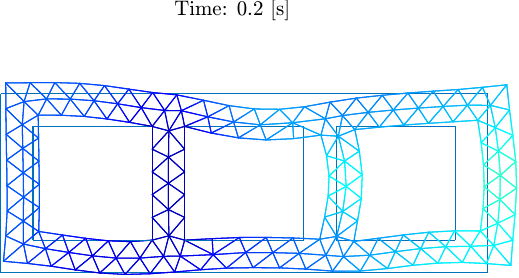} \\
		\includegraphics[width=0.17\textwidth]{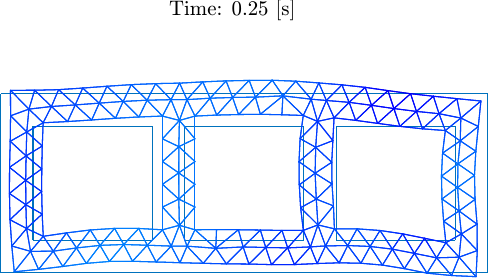} & \includegraphics[width=0.17\textwidth]{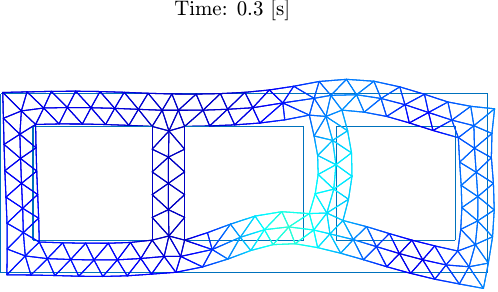} & \includegraphics[width=0.17\textwidth]{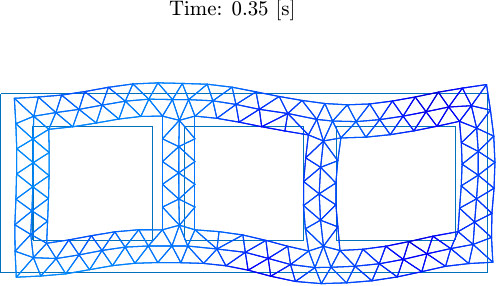} &
		\includegraphics[width=0.17\textwidth]{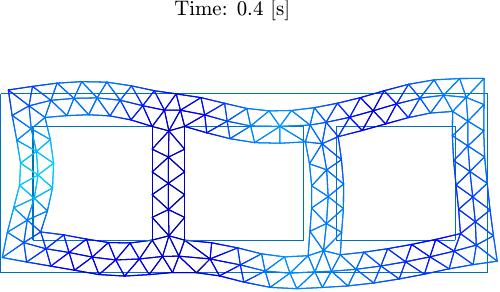} & \includegraphics[width=0.17\textwidth]{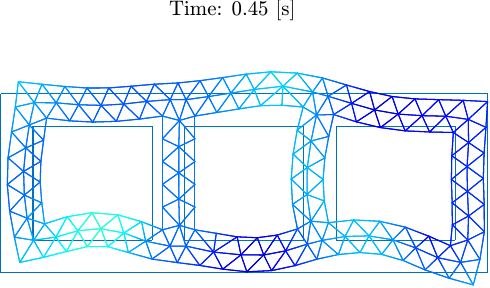}\\
		\includegraphics[width=0.17\textwidth]{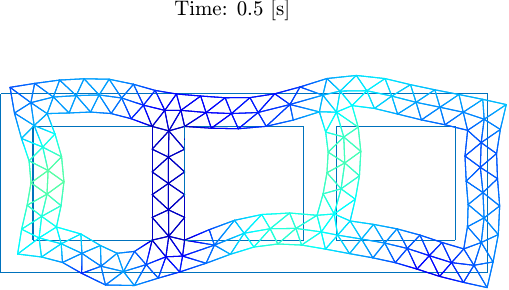} & \includegraphics[width=0.17\textwidth]{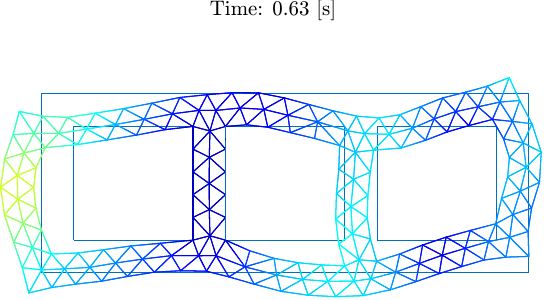} & \includegraphics[width=0.17\textwidth]{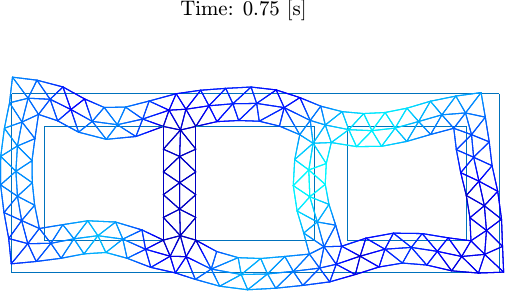} &
		\includegraphics[width=0.17\textwidth]{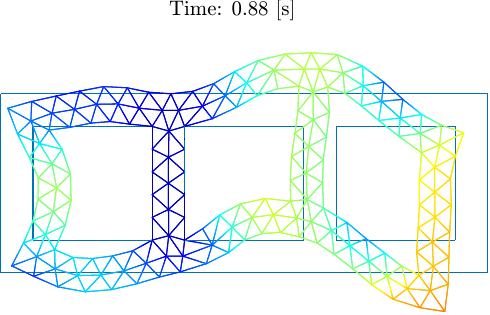} & \includegraphics[width=0.17\textwidth]{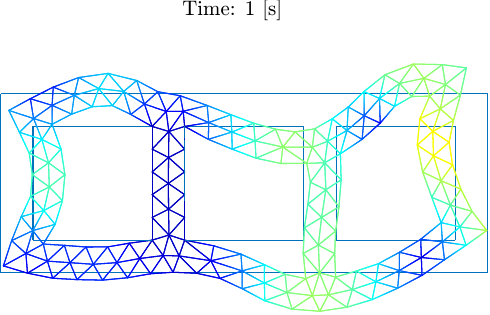}\\
	\end{tabular}
	\includegraphics[width=0.3\textwidth]{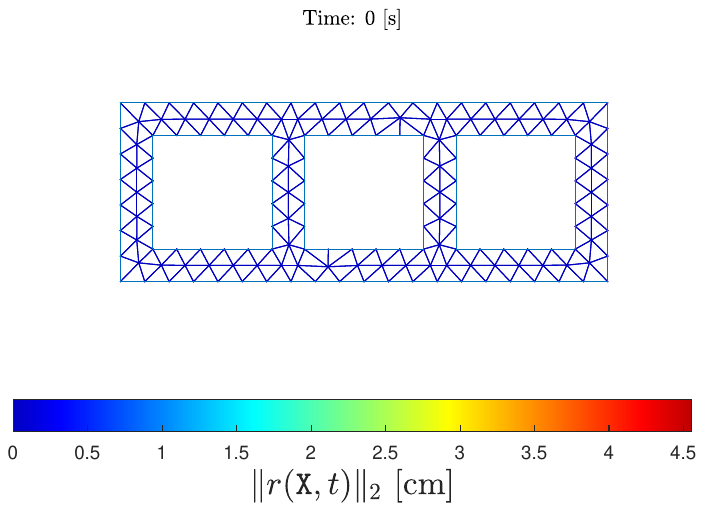}\\[-2mm]
	\caption{Dynamic configuration of the 2D Neo-Hookean frame.}
	\label{fig:frame_snapshots}
\end{figure}

\begin{figure}[b]
	\centering
	\begin{subfigure}[b]{0.48\textwidth}
		\centering
		\includegraphics[width=0.95\textwidth]{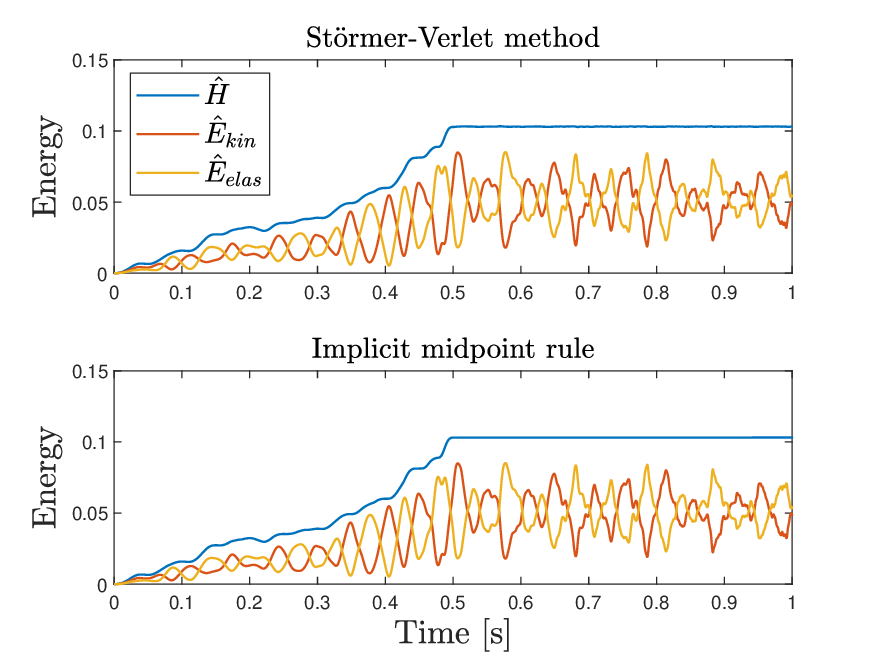}\\[-2mm]
		\caption{Model from Theorem \ref{theo:discrete_lifted_jet} (P1)}
		\label{fig:energy_T1}
	\end{subfigure}
	\hfill
	\begin{subfigure}[b]{0.48\textwidth}
		\centering
		\includegraphics[width=0.95\textwidth]{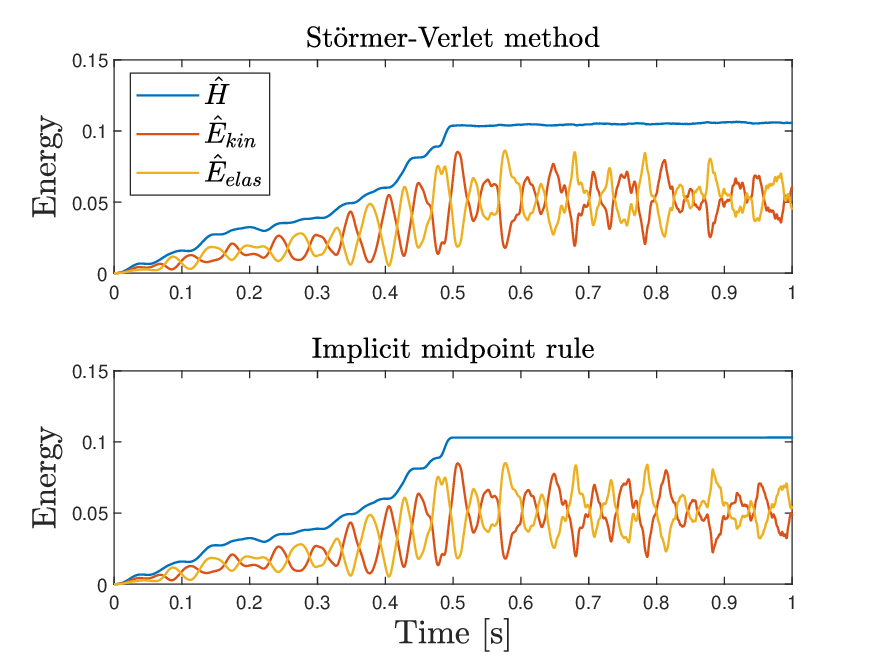}\\[-2mm]
		\caption{Model from Theorem \ref{theo:discrete_lifted_stokes} (P1-P0)}
		\label{fig:energy_T2}
	\end{subfigure}
	\caption{Energy evolution of the 2D frame comparing the explicit Störmer-Verlet and implicit midpoint rule integrators.}
	\label{fig:frame_energies}
\end{figure}

As expected from the imposed velocity condition, it is observed from Fig. \ref{fig:frame_snapshots} that the entire structure displaces diagonally, a motion that initiates at the Dirichlet boundary $\partial\Omega_D$ and subsequently propagates throughout the rest of the domain.

Regarding the energy response, as observed in Fig. \ref{fig:frame_energies}, external power is supplied to the system through the Dirichlet boundary during the first 0.5 [s]. For $t > 0.5$ [s], the boundary is fixed and the system transitions to a conservative regime. In this phase, the implicit midpoint rule maintains an apparently constant discrete Hamiltonian $\hat{H}$, whereas the explicit Störmer-Verlet method exhibits more appreciable oscillations. However, this improved energy conservation comes with a higher computational cost. Since the Störmer-Verlet method is an explicit scheme, it evaluates the nonlinear terms only once per time step, whereas the implicit midpoint rule requires solving a nonlinear system of equations at every increment. For the model based on Theorem \ref{theo:discrete_lifted_jet} (P1), the simulation required 10 [s] using the explicit scheme and 30 [s] using the implicit scheme. For the mixed model based on Theorem \ref{theo:discrete_lifted_stokes} (P1-P0), the explicit scheme took 110 [s], while the implicit scheme required 620 [s]. It is worth mentioning that these computational times were obtained using a sub-optimal custom MATLAB code executed on a standard laptop, as the implementation was intended primarily to illustrate the methodology.

\section{Conclusion and future work}  \label{sec:Conclusion}

This paper presented a continuous-to-discrete kinematic lifting framework to strongly impose Dirichlet boundary velocities in port-Hamiltonian models of elastodynamics. Through an additive decomposition of the displacement and velocity fields at the continuous level, the framework maps boundary actuation effects into the interior domain via a distributed port. This methodology yields finite-dimensional port-Hamiltonian systems with an ordinary differential equation (ODE) topology, bypassing differential-algebraic equations (DAEs) and penalty parameters. Numerical simulations verify the structure-preserving properties of the discrete models, demonstrating exact energy conservation, locking-free spatial accuracy in mixed formulations, and accurate dynamic responses under time-varying Dirichlet boundary velocities.

A practical contribution of this work is the identification of the structural connection between continuous kinematic lifting and usual computational mechanics practices. Treating the velocities of the degrees of freedom at the Dirichlet boundaries as prescribed inputs induces an algebraic partitioning of the consistent mass matrix. This algebraic reduction of classical or mixed finite element schemes is mathematically equivalent to the structure-preserving discretization of a lifted PHS models with specific interpolation shape functions. This establishes a systematic pathway to transition finite element formulations into port-Hamiltonian structures with strongly imposed Dirichlet boundary velocities, eliminating the requirement to construct explicit lifting functions.

Future work will investigate the control-theoretic and analytical properties of the distributed interior lifting velocity $v_\Omega$. Exploring how this energy-neutral distributed port can be used for control design, internal state observers, or trajectory optimization constitutes the next step in extending this port-Hamiltonian elastodynamics framework.

\begin{ack}                               
\noindent The authors acknowledge support from Chilean ANID projects: Fondecyt Postdoctorado 3260380, Fondecyt Regular 1231896, ECOS 220040, CIA250006; the French EIPHI Graduate School ANR-17-EURE-0002, the CNRS funded IRP SMARTER project; and the European MSCA Project MODCONFLEX 101073558. 
\end{ack}

\section*{Code and data availability}
The MATLAB source code and the datasets generated during the current study, which are required to reproduce all numerical examples, are publicly available in the GitHub repository at \url{https://cponces.github.io/Strong_imposition_of_Dirichlet_BC/}.

\section*{Declaration of generative AI and AI-assisted technologies in the manuscript preparation process}

During the preparation of this work the authors used Google's Gemini large language model (LLM) in order to improve wording, punctuation, and grammar. After using this tool, the authors reviewed and edited the content as needed and take full responsibility for the content of the published article.

\appendix

\section{Proofs of Section \ref{sec:Continuos}}

\subsection{Proof of Proposition \ref{prop:varPPLE1}}  \label{app:proof_Prop1}

\begin{proof}
	Demonstrating that the proposed scheme is variationally consistent for jet-bundle port-Hamiltonian elastodynamics is equivalent to proving that the stationarity condition $\delta \mathcal{P}_{\HP} = 0$ with respect to the independent variations $(\delta\dot{p}, \delta\dot{r})$  reproduces the system's governing dynamics.
	
	\textit{Step 1.} The variation of the functional requires evaluating the Lagrangian rate density $\dot{\mathfrak{L}}_{pr}$. Expanding the energy terms from Definition \ref{def:energies} and applying the chain rule to the strain energy rate $\dot{\Psi} = e_\q(\q(r))^\top \dot{\q}$ where $\dot{\q} = \mathcal{F}_\X(r)\dot{r}$, we obtain:
	\begin{equation}
		\dot{\mathfrak{L}}_{pr}(p,r,\dot{p},\dot{r}) = (\mathcal{M}^{-1}p)^\top \dot{p} - e_\q(\q(r))^\top \mathcal{F}_\X(r)\dot{r} + b^\top \dot{r}. \notag
	\end{equation}
	
	\textit{Step 2.} Substituting this rate into the Hamilton-Pontryagin-based functional and setting the total variation $\delta \mathcal{P}_{\HP} = \delta_{\dot{p}} \mathcal{P}_{\HP} + \delta_{\dot{r}} \mathcal{P}_{\HP} = 0$ yields two independent stationary conditions. The variation with respect to the momentum rate $\delta\dot{p}$ gives:
	\begin{equation}
		\delta_{\dot{p}} \mathcal{P}_{\HP} = \int_\Omega \delta\dot{p}^\top \left( \dot{r} - \mathcal{M}^{-1}p \right) d\X = 0, \notag
	\end{equation}
	which enforces the relation $\dot{r} = \mathcal{M}^{-1}p$ in $\Omega$.
	
	\textit{Step 3.} Simultaneously, evaluating the variation with respect to the velocity $\delta\dot{r}$ yields:
	\begin{equation}
		\delta_{\dot{r}} \mathcal{P}_{\HP} = \int_\Omega \left[ \delta\dot{r}^\top \dot{p} + e_\q(\q(r))^\top \mathcal{F}_\X(r)\delta\dot{r} - \delta\dot{r}^\top b \right] d\X - \int_{\p\Omega_N} \!\!\! \delta\dot{r}^\top \tau_N \, d\SX = 0. \notag
	\end{equation}
	To factor out the arbitrary variation $\delta\dot{r}$, Lemma \ref{lemma:integration} is applied to the term involving the differential operator $\mathcal{F}_\X(r)$:
	\begin{equation}
		\int_\Omega e_\q(\q(r))^\top \mathcal{F}_\X(r)\delta\dot{r} \, d\X = \int_\Omega \big(\mathcal{F}_\X(r)^*e_\q(\q(r))\big)^\top \delta\dot{r} \, d\X + \int_{\p\Omega} \delta\dot{r}^\top F_\p(r) e_\q(\q(r)) \, d\SX. \notag
	\end{equation}
	Substituting this identity back and grouping terms yields:
	\begin{equation}
		\int_\Omega \left[ \dot{p} + \mathcal{F}_\X(r)^*e_\q(\q(r)) - b \right]^\top \delta\dot{r} \, d\X - \int_{\p\Omega_N} \!\!\! \left[ \tau_N - F_\p(r) e_\q(\q(r)) \right]^\top \delta\dot{r} \, d\SX + \int_{\p\Omega_D} \!\!\! \big(F_\p(r) e_\q(\q(r))\big)^\top \delta\dot{r} \, d\SX = 0. \notag
	\end{equation}
	By hypothesis,  $\delta\dot{r} = 0$ on $\p\Omega_D$, causing the last boundary integral above to vanish. Since $\delta\dot{r}$ is arbitrary both within $\Omega$ and on $\p\Omega_N$, we deduce the dynamic momentum balance $\dot{p} = -\mathcal{F}_\X(r)^*e_\q(\q(r)) + b$ in $\Omega$ and the corresponding Neumann boundary condition $\tau_N = F_\p(r) e_\q(\q(r))$ on $\p\Omega_N$.
	
	\textit{Step 4.} The state vector is defined as $z = [p^\top \; r^\top]^\top$. The co-energy variables $\var_z H = [e_p^\top \; e_r^\top]^\top$ are defined through the first variation of the Hamiltonian functional $H(z)$ in the direction of the arbitrary variations $\delta z = [\delta p^\top \; \delta r^\top]^\top$:
	\begin{equation}
		\delta H = \int_\Omega \left( \delta p^\top \delta_p H + \delta r^\top \delta_r H \right) d\X. \notag
	\end{equation}
	Computing the first variation of $H$ yields:
	$$
		\delta H = \int_{\Omega} \left[ \delta p^\top (\mathcal{M}^{-1}p) + \big(\mathcal{F}_\X(r)\delta r\big)^\top e_\q(\q(r)) - \delta r^\top b \right] d\X. 
	$$
	Applying Lemma \ref{lemma:integration} to isolate $\delta r$ gives:
	\begin{equation}
		\delta H = \int_{\Omega} \left[ \delta p^\top (\mathcal{M}^{-1}p) + \delta r^\top \big( \mathcal{F}_\X(r)^*e_\q(\q(r)) - b \big) \right] d\X + \int_{\p\Omega} \delta r^\top F_\p(r) e_\q(\q(r)) \, d\SX. \notag
	\end{equation}
	By matching the terms in the integral over $\Omega$ with the definition of co-energy variables, they are  identified as:
	\begin{align}
		e_p(p) = \delta_p H = \mathcal{M}^{-1}p, \notag  & &
		e_r(r) = \delta_r H = \mathcal{F}_\X(r)^*e_\q(\q(r)) - b. \notag
	\end{align}
	Substituting these co-energies into the dynamic and kinematic equations derived in Steps 2 and 3  recovers the  structure presented in Definition \ref{def:jet_bundle_phs}. Lastly, through the specification of the power-conjugate boundary ports in accordance with Definition \ref{def:boundaryPorts}, this dynamic system defines a jet-bundle PHS.
\end{proof}

\subsection{Proof of Proposition \ref{prop:lifted_jet_bundlePHS}}  \label{app:proof_Prop2}

\begin{proof}
	\textit{Step 1.} The functional $\mathcal{P}_{\HP}$ is evaluated by substituting the velocity decomposition $\dot{r} = \dot{r}_r + v_L$. Since the lifting velocity $v_L$ is  prescribed, its independent variation is zero ($\delta v_L = 0$). The stationary condition $\delta \mathcal{P}_{\HP} = 0$ is:
	\begin{equation}
		\delta \mathcal{P}_{\HP} = \int_\Omega \left[ \delta\dot{p} \cdot (\dot{r}_r + v_L) + \dot{p} \cdot \delta\dot{r}_r - \delta\dot{\mathfrak{L}}_{pr} \right] d\X - \int_{\p\Omega_N} \!\!\! \tau_N \cdot \delta\dot{r}_r \, d\SX = 0. \notag
	\end{equation}
	Taking the variation with respect to the momentum rate $\delta\dot{p}$ yields:
	\begin{equation}
		\delta_{\dot{p}} \mathcal{P}_{\HP} = \int_\Omega \delta\dot{p} \cdot \left( \dot{r}_r + v_L - \mathcal{M}^{-1}p \right) d\X = 0, \notag
	\end{equation}
	which enforces the  relation $\dot{r}_r = \mathcal{M}^{-1}p - v_L$ in $\Omega$. The lifting kinematics directly impose $\dot{r}_L = v_L$.

	\textit{Step 2.} The variation of the total potential energy rate density $\dot{\mathfrak{U}}_{rr}$ with respect to  $\dot{r}_r$ in the direction $\delta\dot{r}_r$ is computed via the G\^ateaux derivative. Recalling the strain energy rate $\dot{\Psi} = e_\q(\q(r))^\top \dot{\q}$ where $\dot{\q} = \mathcal{F}_\X(r)\dot{r}$, and applying the velocity decomposition $\dot{r} = \dot{r}_r + v_L$, we introduce the perturbation $\dot{r}_r \to \dot{r}_r + \varepsilon \delta\dot{r}_r$:
	\begin{align}
		\delta_{\dot{r}_r} \dot{\mathfrak{U}}_{rr} &= \lim_{\varepsilon \to 0} \frac{d}{d\varepsilon} \bigg[ e_\q(\q(r))^\top \mathcal{F}_\X(r)(\dot{r}_r + \varepsilon \delta\dot{r}_r + v_L) -  (\dot{r}_r + \varepsilon \delta\dot{r}_r + v_L)^\top b \bigg] \notag \\[1mm]
		&= \lim_{\varepsilon \to 0} \left[ e_\q(\q(r))^\top \mathcal{F}_\X(r)\delta\dot{r}_r - \delta\dot{r}_r ^\top b \right] \notag \\[1mm]
		&= e_\q(\q(r))^\top \mathcal{F}_\X(r)\delta\dot{r}_r - \delta\dot{r}_r ^\top b. \notag
	\end{align}
	Noting that the kinetic energy rate $\dot{\mathfrak{T}}_p$ is independent of $\dot{r}_r$ in this formulation, the variation of the Lagrangian rate density becomes $\delta_{\dot{r}_r} \dot{\mathfrak{L}}_{pr} = - \delta_{\dot{r}_r} \dot{\mathfrak{U}}_{rr}$. The first variation of the functional with respect to $\var \dot{r}_r$ is then given by:
	\begin{equation}
		\delta_{\dot{r}_r} \mathcal{P}_{\HP} = \int_\Omega \left[ \delta\dot{r}_r ^\top \dot{p} + e_\q(\q(r))^\top \mathcal{F}_\X(r)\delta\dot{r}_r - \delta\dot{r}_r ^\top b \right] d\X - \int_{\p\Omega_N} \!\!\! \delta\dot{r}_r ^\top \tau_N \, d\SX = 0. \notag
	\end{equation}
	To factor out the arbitrary variation $\delta\dot{r}_r$, Lemma \ref{lemma:integration} is applied to the term involving the differential operator $\mathcal{F}_\X(r)$:
	\begin{equation}
		\int_\Omega e_\q(\q(r))^\top \mathcal{F}_\X(r)\delta\dot{r}_r \, d\X = \int_\Omega \delta\dot{r}_r ^\top \mathcal{F}_\X(r)^*e_\q(\q(r)) \, d\X + \int_{\p\Omega} \delta\dot{r}_r^\top F_\p(r) e_\q(\q(r)) \, d\SX. \notag
	\end{equation}
	Substituting this identity back and grouping terms yields:
	\begin{equation}
		\int_\Omega  \delta\dot{r}_r ^\top \left[ \dot{p} + \mathcal{F}_\X(r)^*e_\q(\q(r)) - b \right]  d\X - \int_{\p\Omega_N} \!\!\! \delta\dot{r}_r ^\top \left[ \tau_N - F_\p(r) e_\q(\q(r)) \right]  d\SX + \int_{\p\Omega_D} \!\!\! \delta\dot{r}_r ^\top F_\p(r) e_\q(\q(r)) \, d\SX = 0. \notag
	\end{equation}
	Since the admissible variations satisfy $\delta\dot{r}_r \in \mathcal{V}_0$, we have $\delta\dot{r}_r|_{\p\Omega_D} = 0$, causing the integral over $\p\Omega_D$ to vanish. For arbitrary $\delta\dot{r}_r$ in $\Omega$ and on $\p\Omega_N$, the fundamental lemma of calculus of variations enforces $\dot{p} = -\mathcal{F}_\X(r)^*e_\q(\q(r)) + b = -e_r(r_r+r_L)$ in $\Omega$, and the Neumann boundary condition $\tau_N = F_\p(r) e_\q(\q(r))$ on $\p\Omega_N$.

	\textit{Step 3.} The augmented state vector is defined as $z = [p^\top \; r_r^\top \; r_L^\top]^\top$. The co-energy variables $\var_z H = [e_p^\top \; e_{r_r}^\top \; e_{r_L}^\top]^\top$ are  defined through the first variation of the Hamiltonian functional $\delta H$ in the direction of the arbitrary variations $\delta z = [\delta p^\top \; \delta r_r^\top \; \delta r_L^\top]^\top$. This variation is expressed as:
	\begin{equation}
		\delta H = \int_\Omega \left( \delta p^\top \delta_p H + \delta r_r^\top \delta_{r_r} H + \delta r_L^\top \delta_{r_L} H \right) d\X. \notag
	\end{equation}
	Computing the first variation of $H$ yields:
	\begin{align}
		\delta H &= \int_{\Omega} \left[ \delta p^\top (\mathcal{M}^{-1}p) + \delta\q^\top e_\q(\q(r)) - (\delta r_r + \delta r_L)^\top b \right] d\X \notag \\
		&= \int_{\Omega} \left[ \delta p^\top (\mathcal{M}^{-1}p) + \big(\mathcal{F}_\X(r)(\delta r_r + \delta r_L)\big)^\top e_\q(\q(r)) - (\delta r_r + \delta r_L)^\top b \right] d\X. \notag
	\end{align}
	Applying Lemma \ref{lemma:integration} to isolate the arbitrary variations $\delta r_r$ and $\delta r_L$ we obtain:
	\begin{align}
		\delta H =& \int_{\Omega} \left[ \delta p^\top (\mathcal{M}^{-1}p) + \delta r_r^\top \big( \mathcal{F}_\X(r)^*e_\q(\q(r)) - b \big) + \delta r_L^\top \big( \mathcal{F}_\X(r)^*e_\q(\q(r)) - b \big) \right] d\X + \int_{\p\Omega} (\delta r_r + \delta r_L)^\top F_\p(r) e_\q(\q(r)) \, d\SX. \notag
	\end{align}
	By matching the terms in the integral over $\Omega$ with the definition of co-energy variables, they are  identified as:
	\begin{align}
		e_p(p) = \delta_p H = \mathcal{M}^{-1}p, \notag & &
		e_{r_r}(r) = \delta_{r_r} H = \mathcal{F}_\X(r)^*e_\q(\q(r)) - b, \notag  & &
		e_{r_L}(r) = \delta_{r_L} H = \mathcal{F}_\X(r)^*e_\q(\q(r)) - b. \notag
	\end{align}
	This proves the equality $e_{r_r}(r) = e_{r_L}(r) = e_r(r_r + r_L)$. Substituting these co-energies into the dynamic and kinematic equations derived in Steps 1 and 2 recovers the  structure presented in Proposition \ref{prop:lifted_jet_bundlePHS}.  
	
	\textit{Step 4.} The boundary ports are defined by computing the time derivative of $H(z)$:
	\begin{equation}
		\dot{{H}} = \int_\Omega \left( e_p(p)^\top \dot{p} + e_\q(\q(r))^\top \dot{\q} - b^\top \dot{r} \right) d\X, \notag
	\end{equation}
	since $\dot{r} = \dot{r}_r + v_L = e_p(p)$, $\dot{\q} = \mathcal{F}_\X(r)\dot{r}$, and $\dot{p} = -\mathcal{F}_\X(r)^*e_\q(\q(r)) + b$, substituting:
	\begin{align*}
		\dot{{H}} =&  \int_\Omega \left[ e_p(p)^\top \big(-\mathcal{F}_\X(r)^*e_\q(\q(r)) + b\big) + e_\q(\q(r))^\top \mathcal{F}_\X(r)e_p(p) - b^\top e_p(p) \right] d\X  \notag \\
		= & \int_\Omega \left[ e_\q(\q(r))^\top \mathcal{F}_\X(r)e_p(p) - e_p(p)^\top \mathcal{F}_\X(r)^*e_\q(\q(r)) \right]  d\X \notag \\
		= & \int_{\p\Omega} e_p(p)^\top F_\p(r) e_\q(\q(r)) \, d\SX, \notag
	\end{align*}
	where Lemma \ref{lemma:integration} was applied. Recognizing that $e_p(p)|_{\p\Omega_D} = (\dot{r}_r + v_L)|_{\p\Omega_D} = v_D(\SX,t)$ since  $\dot{r}_r|_{\p\Omega_D} = 0$ and $v_L|_{\p\Omega_D} = v_D(\SX,t)$, the power balance is equivalently written as:
	\begin{equation}
		\dot{{H}} = \int_{\p\Omega} e_p(p)^\top F_\p(r) e_\q(\q(r)) \, d\SX = \int_{\p\Omega_N} v_N^\top \tau_N \, d\SX + \int_{\p\Omega_D} v_D^\top \tau_D \, d\SX \notag
	\end{equation}
	with the given definitions of boundary inputs and outputs ports.
\end{proof}

\subsection{Proof of Proposition \ref{prop:varPPLE2}}  \label{app:proof_Prop3}

\begin{proof}
	Demonstrating that the proposed scheme is variationally consistent for the Stokes-Dirac port-Hamiltonian representation is equivalent to proving that the stationarity condition $\delta \mathcal{P}_{\HW} = 0$ with respect to the independent variations $(\delta\dot{p}, \delta\dot{r}, \delta\dot{\q}, \delta e_\q)$  reproduces the system's governing dynamics.
	
	\textit{Step 1.} The variation of the functional requires evaluating the time derivative of the Lagrangian density $\dot{\mathfrak{L}}_{p\q r}$. Expanding the energy terms from Definition \ref{def:energies}, we obtain:
	\begin{equation}
		\dot{\mathfrak{L}}_{p\q r}(p,\q, \dot{p},\dot{\q}, \dot{r}) = (\mathcal{M}^{-1}p)^\top \dot{p} - \left(\frac{\partial \Psi(\q)}{\partial \q}\right)^\top \dot{\q} + b^\top \dot{r}. \notag
	\end{equation}
	\textit{Step 2.} Substituting this rate into the Hu-Washizu-based functional and setting the total variation $\delta \mathcal{P}_{\HW} = \delta_{\dot{p}} \mathcal{P}_{\HW} + \delta_{\dot{\q}} \mathcal{P}_{\HW} + \delta_{e_\q} \mathcal{P}_{\HW} + \delta_{\dot{r}} \mathcal{P}_{\HW} = 0$ yields four independent stationary conditions. Grouping the variations with respect to $\delta\dot{p}$, $\delta\dot{\q}$, and $\delta e_\q$ yields:
	\begin{align}
		\delta_{\dot{p}} \mathcal{P}_{\HW} &= \int_\Omega \delta\dot{p}^\top \left( \dot{r} - \mathcal{M}^{-1}p \right) d\X = 0, \notag \\
		\delta_{\dot{\q}} \mathcal{P}_{\HW} &= \int_\Omega \delta\dot{\q}^\top \left( \frac{\partial \Psi(\q)}{\partial \q} - e_\q \right) d\X = 0, \notag \\
		\delta_{e_\q} \mathcal{P}_{\HW} &= \int_\Omega -\delta e_\q^\top \left( \dot{\q} - \mathcal{F}_\X(r)\dot{r} \right) d\X = 0. \notag
	\end{align}
	Since these variations are arbitrary, they enforce in $\Omega$ the relations $\dot{r} = \mathcal{M}^{-1}p$,  $e_\q = \frac{\partial \Psi(\q)}{\partial \q}$, and $\dot{\q} = \mathcal{F}_\X(r)\dot{r}$.
	
	\textit{Step 3.} Simultaneously, evaluating the variation with respect to the velocity $\delta\dot{r}$ yields:
	\begin{equation}
		\delta_{\dot{r}} \mathcal{P}_{\HW} = \int_\Omega \left[ \delta\dot{r}^\top \dot{p} - \delta\dot{r}^\top b + e_\q(\q)^\top \mathcal{F}_\X(r)\delta\dot{r} \right] d\X - \int_{\p\Omega_N} \!\!\! \delta\dot{r}^\top \tau_N \, d\SX = 0. \notag
	\end{equation}
	To factor out $\delta\dot{r}$ from the differential operator $\mathcal{F}_\X(r)$, the integration Lemma \ref{lemma:integration} is applied:
	\begin{equation}
		\int_\Omega e_\q(\q)^\top \mathcal{F}_\X(r)\delta\dot{r} \, d\X = \int_\Omega \big(\mathcal{F}_\X(r)^*e_\q(\q)\big)^\top \delta\dot{r} \, d\X + \int_{\p\Omega} \delta\dot{r}^\top F_\p(r) e_\q(\q) \, d\SX. \notag
	\end{equation}
	Substituting this identity back into the velocity variation gives:
	\begin{equation}
		\delta_{\dot{r}} \mathcal{P}_{\HW} = \int_\Omega \left[ \dot{p} + \mathcal{F}_\X(r)^*e_\q(\q) - b \right]^\top \delta\dot{r} \, d\X - \int_{\p\Omega_N} \!\!\! \left[ \tau_N - F_\p(r)  e_\q(\q) \right]^\top \delta\dot{r} \, d\SX + \int_{\p\Omega_D} \!\!\! \big( F_\p(r) e_\q(\q) \big)^\top \delta\dot{r} \, d\SX = 0. \notag
	\end{equation}
	By hypothesis, $\delta\dot{r} = 0$ on $\p\Omega_D$, causing the final boundary integral to vanish. Since $\delta\dot{r}$ is arbitrary both within $\Omega$ and on $\p\Omega_N$, we deduce the dynamic momentum balance $\dot{p} = -\mathcal{F}_\X(r)^*e_\q(\q) + b$ in $\Omega$ and the corresponding Neumann boundary condition $\tau_N = F_\p(r) e_\q(\q)$ on $\p\Omega_N$. 
	
	\textit{Step 4.} The state vector is defined as $x = [p^\top \; \q^\top \; r^\top]^\top$. The co-energy variables $\var_x H = [e_p^\top \; e_\q^\top \; e_r^\top]^\top$ are defined through the first variation of the Hamiltonian functional $H(x)$ in the direction of the arbitrary variations $\delta x = [\delta p^\top \; \delta\q^\top \; \delta r^\top]^\top$:
	\begin{equation}
		\delta H = \int_\Omega \left( \delta p^\top \delta_p H + \delta\q^\top \delta_\q H + \delta r^\top \delta_r H \right) d\X. \notag
	\end{equation}
	Computing the first variation of the strictly algebraic Stokes-Dirac Hamiltonian directly yields:
	\begin{equation}
		\delta H = \int_{\Omega} \left[ \delta p^\top (\mathcal{M}^{-1}p) + \delta\q^\top \left(\frac{\partial \Psi(\q)}{\partial \q}\right) - \delta r^\top b \right] d\X. \notag
	\end{equation}
	By matching the terms in the integral over $\Omega$ with the definition of co-energy variables, they are  identified as:
	\begin{align}
		e_p(p) = \delta_p H = \mathcal{M}^{-1}p, \notag & &
		e_\q(\q) = \delta_\q H = \frac{\partial \Psi(\q)}{\partial \q}, \notag & &
		e_r = \delta_r H = -b. \notag 
	\end{align}
	Substituting these co-energies into the dynamic and kinematic relations derived in Steps 2 and 3 yields the governing equations $\dot{p} = -\mathcal{F}_\X(r)^*e_\q(\q) +b$, $\dot{\q} = \mathcal{F}_\X(r)e_p(p)$, and $\dot{r} = e_p(p)$. Consequently, the system recovers the geometric structure presented in Definition \ref{def:PHS_stokesDirac}. Lastly, through the specification of the power-conjugate boundary ports in accordance with Definition \ref{def:boundaryPorts}, this dynamic system  defines a PHS in Stokes-Dirac structure.
\end{proof}

\subsection{Proof of Proposition \ref{prop:lifted_StokesPHS2}}  \label{app:proof_Prop4}

\begin{proof}
	\textit{Step 1.} The functional $\mathcal{P}_{\HW}$ is evaluated by substituting the velocity decomposition $\dot{r} = \dot{r}_r + v_L$ and the displacement $r = r_r + r_L$. Since the lifting velocity $v_L$ is kinematically prescribed, its independent variation is zero ($\delta v_L = 0$). Expanding the Lagrangian rate density yields:
	\begin{equation}
		\dot{\mathfrak{L}}_{p\q r}(p,\q, \dot{p},\dot{\q}, \dot{r} = \dot{r}_r + v_L) = (\mathcal{M}^{-1}p)^\top \dot{p} - \left(\frac{\partial \Psi(\q)}{\partial \q}\right)^\top \dot{\q} + b^\top (\dot{r}_r + v_L). \notag
	\end{equation}
	\textit{Step 2.} Substituting this rate into the functional and setting the stationary condition $\delta \mathcal{P}_{\HW} = \delta_{\dot{p}} \mathcal{P}_{\HW} + \delta_{\dot{\q}} \mathcal{P}_{\HW} + \delta_{e_\q} \mathcal{P}_{\HW} + \delta_{\dot{r}_r} \mathcal{P}_{\HW} = 0$ provides four independent conditions. Grouping the variations with respect to $\delta\dot{p}$, $\delta\dot{\q}$, and $\delta e_\q$ yields:
	\begin{align}
		\delta_{\dot{p}} \mathcal{P}_{\HW} &= \int_\Omega \delta\dot{p}^\top \left( \dot{r}_r + v_L - \mathcal{M}^{-1}p \right) d\X = 0, \notag \\
		\delta_{\dot{\q}} \mathcal{P}_{\HW} &= \int_\Omega \delta\dot{\q}^\top \left( \frac{\partial \Psi(\q)}{\partial \q} - e_\q \right) d\X = 0, \notag \\
		\delta_{e_\q} \mathcal{P}_{\HW} &= \int_\Omega -\delta e_\q^\top \left( \dot{\q} - \mathcal{F}_\X(r_r + r_L)(\dot{r}_r + v_L) \right) d\X = 0. \notag
	\end{align}
	Since these variations are arbitrary, they globally enforce the kinetic relation $\dot{r}_r = \mathcal{M}^{-1}p - v_L$, the constitutive elastic effort $e_\q = \frac{\partial \Psi(\q)}{\partial \q}$, and the kinematic strain rate $\dot{\q} = \mathcal{F}_\X(r_r + r_L)(\dot{r}_r + v_L)$ in $\Omega$.

	\textit{Step 3.} Evaluating the variation with respect to the relative velocity $\delta\dot{r}_r$ yields:
	\begin{equation}
		\delta_{\dot{r}_r} \mathcal{P}_{\HW} = \int_\Omega \left[ \dot{p}^\top \delta\dot{r}_r - b^\top \delta\dot{r}_r + e_\q(\q)^\top \mathcal{F}_\X(r_r + r_L)\delta\dot{r}_r \right] d\X - \int_{\p\Omega_N} \!\!\! \tau_N^\top \delta\dot{r}_r \, d\SX = 0. \notag
	\end{equation}
	To factor out $\delta\dot{r}_r$, Lemma \ref{lemma:integration} is applied to the term involving the differential operator $\F_\X(r_r + r_L)$:
	\begin{equation}
		\int_\Omega e_\q(\q)^\top \mathcal{F}_\X(r_r + r_L)\delta\dot{r}_r \, d\X = \int_\Omega \big(\mathcal{F}_\X(r_r + r_L)^*e_\q(\q)\big)^\top \delta\dot{r}_r \, d\X + \int_{\p\Omega} \delta\dot{r}_r^\top F_\p(r_r + r_L) e_\q(\q) \, d\SX. \notag
	\end{equation}
	Substituting this back and grouping terms gives:
	\begin{equation}
		\int_\Omega \left[ \dot{p} + \mathcal{F}_\X(r_r + r_L)^*e_\q(\q) - b \right]^\top \delta\dot{r}_r \, d\X - \int_{\p\Omega_N} \!\!\! \left[ \tau_N - F_\p(r_r + r_L) e_\q(\q) \right]^\top \delta\dot{r}_r \, d\SX + \int_{\p\Omega_D} \!\!\! \big(F_\p(r_r + r_L) e_\q(\q)\big)^\top \delta\dot{r}_r \, d\SX = 0. \notag
	\end{equation}
	Since $\delta\dot{r}_r \in \mathcal{V}_0 \implies \delta\dot{r}_r|_{\p\Omega_D} = 0$, causing the final boundary integral to vanish. For arbitrary $\delta\dot{r}_r$ in $\Omega$ and on $\p\Omega_N$, we deduce the dynamic balance $\dot{p} = -\mathcal{F}_\X(r_r + r_L)^*e_\q(\q) + b$ in $\Omega$ and the Neumann boundary condition $\tau_N = F_\p(r_r + r_L) e_\q(\q)$ on $\p\Omega_N$.

	\textit{Step 4.} The augmented state vector is defined as $x = [p^\top \; \q^\top \; r_r^\top \; r_L^\top]^\top$. The co-energy variables $\var_x H = [e_p^\top \; e_\q^\top \; e_{r_r}^\top \; e_{r_L}^\top]^\top$ are formally defined through the first variation of the Hamiltonian functional \eqref{eq:HAMILTONIAN_PHS2_Lifted} in the direction of the arbitrary variations $\delta x = [\delta p^\top \; \delta\q^\top \; \delta r_r^\top \; \delta r_L^\top]^\top$:
	\begin{equation}
		\delta H = \int_\Omega \left( \delta p^\top \delta_p H + \delta\q^\top \delta_\q H + \delta r_r^\top \delta_{r_r} H + \delta r_L^\top \delta_{r_L} H \right) d\X. \notag
	\end{equation}
	Unlike the jet-bundle formulation, the Stokes-Dirac Hamiltonian is purely algebraic. Computing its first variation directly yields:
	\begin{equation}
		\delta H = \int_{\Omega} \left[ \delta p^\top (\mathcal{M}^{-1}p) + \delta\q^\top \left(\frac{\partial \Psi(\q)}{\partial \q}\right) - (\delta r_r + \delta r_L)^\top b \right] d\X. \notag
	\end{equation}
	Matching the terms with the formal definition, the co-energies are  identified as:
	\begin{align}
		e_p(p) = \delta_p H = \mathcal{M}^{-1}p, \notag , & &
		e_\q(\q) = \delta_\q H = \frac{\partial \Psi(\q)}{\partial \q}, \notag & &
		e_{r_r}(r) = \delta_{r_r} H = -b, \notag  & &
		e_{r_L}(r) = \delta_{r_L} H = -b. \notag
	\end{align}
	Substituting these co-energies into the dynamic and kinematic relations derived in Steps 2 and 3 yields the governing equations $\dot{p} = -\mathcal{F}_\X(r_r + r_L)^*e_\q(\q) +b$, $\dot{\q} = \mathcal{F}_\X(r_r+ r_L)e_p(p)$, and $\dot{r}_r = e_p(p) - v_L$. In addition to $\dot{r}_L = v_L$, the system recovers the geometric structure presented in Proposition \ref{prop:lifted_StokesPHS2}.
	
	\textit{Step 5.} The boundary ports are defined by computing the time derivative of $H(x)$:
	\begin{equation}
		\dot{{H}} = \int_\Omega \left( e_p(p)^\top \dot{p} + e_\q(\q)^\top \dot{\q} - b^\top \dot{r}_r - b^\top \dot{r}_L \right) d\X. \notag
	\end{equation}
	Since $\dot{r}_r + \dot{r}_L = \dot{r}_r + v_L = e_p(p)$, the gravity terms simplify to $-b^\top e_p(p)$. Substituting $\dot{\q} = \mathcal{F}_\X(r_r+r_L)e_p(p)$ and $\dot{p} = -\mathcal{F}_\X(r_r+r_L)^*e_\q(\q) + b$ yields:
	\begin{align*}
		\dot{{H}} =&  \int_\Omega \left[ e_p(p)^\top \big(-\mathcal{F}_\X(r_r+r_L)^*e_\q(\q) + b\big) + e_\q(\q)^\top \mathcal{F}_\X(r_r+r_L)e_p(p) - b^\top e_p(p) \right] d\X  \notag \\
		= & \int_\Omega \left[ e_\q(\q)^\top \mathcal{F}_\X(r_r+r_L)e_p(p) - e_p(p)^\top \mathcal{F}_\X(r_r+r_L)^*e_\q(\q) \right]  d\X \notag \\
		= & \int_{\p\Omega} e_p(p)^\top F_\p(r_r+r_L) e_\q(\q) \, d\SX, \notag
	\end{align*}
	where Lemma \ref{lemma:integration} was applied. Recognizing that $e_p(p)|_{\p\Omega_D} = (\dot{r}_r + v_L)|_{\p\Omega_D} = v_D(\SX,t)$ since  $\dot{r}_r|_{\p\Omega_D} = 0$ and $v_L|_{\p\Omega_D} = v_D(\SX,t)$, the power balance is equivalently written as:
	\begin{equation}
		\dot{{H}} = \int_{\p\Omega_N} v_N^\top \tau_N \, d\SX + \int_{\p\Omega_D} v_D^\top \tau_D \, d\SX, \notag
	\end{equation}
	with the given boundary inputs and outputs ports.
\end{proof}

\section{Proofs of Section \ref{sec:FEM}}

\subsection{Proof of Theorem \ref{theo:discrete_lifted_jet}} \label{app:Theo1}

\begin{proof}
	\textit{Step 1.} We derive the total discrete Hamiltonian and its gradients. Substituting the spatial Galerkin approximations into the continuous functional \eqref{eq:HAMILTONIAN_PHS1_Lifted}, and using $N_{r_\Omega}^e = N_r^e$, the local displacement simplifies to $\tilde{r}^e = N_r^e (\hat{r}_r^e + \hat{r}_\Omega^e) + N_{r_D}^e \hat{r}_D^e$. Evaluating the integrals yields the element-level discrete energy:
	\begin{equation}
		\hat{H}^e(\hat{z}^e) = \frac{1}{2} (\hat{p}^e)^\top \!\! \left[ \int_{\Omega^e} (N_p^e)^\top \mathcal{M}^{-1} N_p^e \, d\X \right] \! \hat{p}^e + \int_{\Omega^e} \Psi(\q(\tilde{r}^e)) \, d\X - (\hat{r}_r^e + \hat{r}_\Omega^e)^\top \!\! \int_{\Omega^e} (N_r^e)^\top b \, d\X - (\hat{r}_D^e)^\top \!\! \int_{\Omega^e} (N_{r_D}^e)^\top b \, d\X. \notag
	\end{equation}
	Summing over all elements $\hat{H}(\hat{z}) = \sum_{e=1}^{n_e} \hat{H}^e(\hat{z}^e)$ via the boolean location matrices defines the global matrices $\hat{M}_{pp}$, $\hat{b}_r$, and $\hat{b}_D$, thereby recovering \eqref{eq:discrete_hamiltonian}. 
	The co-energy variables are obtained through the gradient $\nabla_{\hat{z}} \hat{H}(\hat{z})$. For the momentum, this yields $\nabla_{\hat{p}} \hat{H}(\hat{z}) = \hat{M}_{pp} \hat{p} = \hat{e}_p(\hat{p})$. For the relative displacement $\hat{r}_r$, applying the chain rule gives:
	\begin{equation}
		\nabla_{\hat{r}_r} \hat{H}(\hat{z}) = \sum_{e=1}^{n_e} (L_r^e)^\top \left[ \int_{\Omega^e} \left( \frac{\p \q(\tilde{r}^e)}{\p \hat{r}_r^e} \right)^\top \frac{\p \Psi}{\p \q}\bigg|_{\q(\tilde{r}^e)} \, d\X - \int_{\Omega^e} (N_r^e)^\top b \, d\X \right]. \notag
	\end{equation}
	Recognizing the kinematic differential $\frac{\p \q}{\p \hat{r}_r^e} = {\F}_\X(\tilde{r}^e) N_r^e$ and the constitutive law $e_\q(\tilde{r}^e) = \frac{\p \Psi}{\p \q}$, the gradient evaluates to $\nabla_{\hat{r}_r} \hat{H}(\hat{z}) = \hat{F}_{e_\q r}(\hat{r}) - \hat{b}_r = \hat{e}_{r_r}(\hat{r})$. Because $\hat{r}_r$ and $\hat{r}_\Omega$ multiply the same shape functions $N_r^e$ within $\tilde{r}^e$, their partial derivatives are identical, proving $\hat{e}_{r_r}(\hat{r}) = \hat{e}_{r_\Omega}(\hat{r})$. The same procedure for $\hat{r}_D$ yields $\hat{e}_{r_D}(\hat{r})$.
	
	\textit{Step 2.} We substitute the mappings $\delta\dot{\hat{p}}^e = L_p^e \delta\dot{\hat{p}}$, $\dot{\hat{r}}_r^e = L_r^e \dot{\hat{r}}_r$, $\hat{v}_D^e = L_{r_D}^e \hat{v}_D$, $\hat{v}_\Omega^e = L_{r_\Omega}^e \hat{v}_\Omega$, and $\hat{p}^e = L_p^e \hat{p}$ into \eqref{eq:weak_p_jet}. Factoring out the common arbitrary global variation $\delta\dot{\hat{p}}^\top$ gives:
	\begin{equation}
		\delta\dot{\hat{p}}^\top \left[ \sum_{e=1}^{n_e} (L_p^e)^\top \!\! \int_{\Omega^e} \!\! (N_p^e)^\top \left( N_r^e \dot{\hat{r}}_r + N_{r_D}^e \hat{v}_D + N_{r_\Omega}^e \hat{v}_\Omega - \mathcal{M}^{-1} N_p^e \hat{p} \right) d\X \right] = 0. \notag
	\end{equation}
	Since $N_{r_\Omega}^e = N_r^e$ and $L_{r_\Omega}^e = L_r^e$, the integral associated with $\hat{v}_\Omega$ forms the global matrix $\hat{M}_{pr}$. Using the definitions from Theorem \ref{theo:discrete_lifted_jet}, the discrete equation becomes:
	\begin{equation}
		\delta\dot{\hat{p}}^\top \left[\hat{M}_{pr} \dot{\hat{r}}_r + \hat{M}_{pD} \hat{v}_D + \hat{M}_{pr} \hat{v}_\Omega - \hat{e}_p(\hat{p}) \right] = 0 \quad \implies \quad  \dot{\hat{r}}_r = \hat{M}_{pr}^{-1}\hat{e}_p(\hat{p}) - \hat{M}_{pr}^{-1}\hat{M}_{pD} \hat{v}_D - \hat{v}_\Omega. \label{eq:proof_r_dot_jet}
	\end{equation}
	Because $N_p^e(\X) = A_p^e N_r^e(\X)$ with $A_p^e = (A_p^e)^\top > 0$, the matrix $\hat{M}_{pr}$ is symmetric and positive definite, thus invertible.
	
	\textit{Step 3.} To resolve the dynamic momentum balance, Lemma \ref{lemma:integration} is applied to the local stationary condition \eqref{eq:weak_r_jet} to transfer the differential operator to the kinematically admissible test function, yielding the element-level weak form:
	\begin{equation}
		\int_{\Omega^e} \left[ \delta\dot{r}_r^{e\top} \dot{p}^e + \big(\mathcal{F}_\X(r^e)\delta\dot{r}_r^e\big)^\top e_\q(\q(r^e)) - \delta\dot{r}_r^{e\top} b \right] d\X - \int_{\p\Omega^e_N} \delta\dot{r}_r^{e\top} \tau_N^e \, d\SX = 0. \notag
	\end{equation}
	Substituting the mappings $\delta\dot{\hat{r}}_r^e = L_r^e \delta\dot{\hat{r}}_r$, $\dot{\hat{p}}^e = L_p^e \dot{\hat{p}}$, $\hat{\tau}_N^e = L_{\tau_N}^e \hat{\tau}_N$, assembling over $n_e$, and factoring out $\delta\dot{\hat{r}}_r^\top$ yields:
	\begin{equation}
		\delta\dot{\hat{r}}_r^\top \left[ \hat{M}_{pr}^\top \dot{\hat{p}} + \hat{F}_{e_\q r}(\hat{r}) - \hat{b}_r - \hat{B}_N \hat{\tau}_N \right] = 0 \quad\implies\quad  \dot{\hat{p}} = -\hat{M}_{pr}^{-\top}\hat{e}_{r_r}(\hat{r}) + \hat{M}_{pr}^{-\top}\hat{B}_N \hat{\tau}_N. \label{eq:proof_p_dot_jet}
	\end{equation}
	\textit{Step 4.} The lifting kinematics impose strongly $\dot{\hat{r}}_D = \hat{v}_D$ and $\dot{\hat{r}}_\Omega = \hat{v}_\Omega$. Grouping these expressions with equations \eqref{eq:proof_r_dot_jet} and \eqref{eq:proof_p_dot_jet} recovers the PHS structure defined in \eqref{eq:theo_discrete_jet_dyn}. The output ports are computed by evaluating $\hat{y} = \hat{G}^\top \nabla_{\hat{z}} \hat{H} = [\hat{y}_\p^\top \;\; \hat{y}_\Omega^\top]^\top$. The third block evaluates the distributed output:
	\begin{equation}
		\hat{y}_\Omega = -I_{N_F}^\top \hat{e}_{r_r}(\hat{r}) + I_{N_F}^\top \hat{e}_{r_\Omega}(\hat{r}). \notag
	\end{equation}
	Since $\hat{e}_{r_r}(\hat{r}) = \hat{e}_{r_\Omega}(\hat{r})$ as proved in Step 1, this expression  reduces to $\hat{y}_\Omega = 0$, validating the exact conservation of energy neutrality at the discrete level. Finally, computing the time derivative of the Hamiltonian yields $\dot{\hat{H}} = \nabla_{\hat{z}} \hat{H}^\top \dot{\hat{z}} = \nabla_{\hat{z}} \hat{H}^\top [ \hat{J} \nabla_{\hat{z}} \hat{H} + \hat{G} \hat{u} ]$. Since $\hat{J} = -\hat{J}^\top$, the internal power dissipation term $\nabla_{\hat{z}} \hat{H}^\top \hat{J} \nabla_{\hat{z}} \hat{H} = 0$. Recognizing that $\hat{y}^\top = \nabla_{\hat{z}} \hat{H}^\top \hat{G}$, the global power balance evaluates to $\dot{\hat{H}} = \hat{y}^\top \hat{u} = \hat{y}_\p^\top \hat{u}_\p$. This concludes the proof, demonstrating that the spatial discretization preserves the underlying PHS structure and its exact boundary power flow, while achieving the strong imposition of Dirichlet boundary velocities $\dot{\hat{r}}_D = \hat{v}_D$.
	\end{proof}

\subsection{Proof of Theorem \ref{theo:discrete_lifted_stokes}} \label{app:Theo2}

\begin{proof}
	\textit{Step 1.} We derive the discrete Hamiltonian and its gradients. Substituting the Galerkin approximations into \eqref{eq:HAMILTONIAN_PHS2_Lifted} and using $N_{r_\Omega}^e = N_r^e$, the element displacement is $\tilde{r}^e = N_r^e (\hat{r}_r^e + \hat{r}_\Omega^e) + N_{r_D}^e \hat{r}_D^e$. The potential energy depends only on $\tilde{\q}^e$. The element-level discrete energy is:
	\begin{equation}
		\hat{H}^e(\hat{x}^e) = \frac{1}{2} (\hat{p}^e)^\top \!\! \left[ \int_{\Omega^e} (N_p^e)^\top \mathcal{M}^{-1} N_p^e \, d\X \right] \! \hat{p}^e + \int_{\Omega^e} \Psi(\tilde{\q}^e) \, d\X - (\hat{r}_r^e + \hat{r}_\Omega^e)^\top \!\! \int_{\Omega^e} (N_r^e)^\top b \, d\X - (\hat{r}_D^e)^\top \!\! \int_{\Omega^e} (N_{r_D}^e)^\top b \, d\X. \notag
	\end{equation}
	Summing over all elements, $\hat{H}(\hat{x}) = \sum_{e=1}^{n_e} \hat{H}^e(\hat{x}^e)$, and applying the boolean mappings $\hat{p}^e = L_p^e \hat{p}$, $\hat{r}_r^e = L_r^e \hat{r}_r$, $\hat{r}_\Omega^e = L_{r_\Omega}^e \hat{r}_\Omega = L_r^e \hat{r}_\Omega$, and $\hat{r}_D^e = L_{r_D}^e \hat{r}_D$ yields the discrete Hamiltonian in \eqref{eq:discrete_hamiltonian_stokes} with  $\hat{M}_{pp}$, $\hat{b}_r$, and $\hat{b}_D$ as defined. Taking partial derivatives yields the co-energy variables. For momentum, $\nabla_{\hat{p}} \hat{H}(\hat{x}) = \hat{M}_{pp} \hat{p} = \hat{e}_p(\hat{p})$. For the strain field, applying the chain rule:
	\begin{equation}
		\nabla_{\hat{\q}} \hat{H}(\hat{x}) = \sum_{e=1}^{n_e} \left( \frac{\p \hat{\q}^e}{\p \hat{\q}} \right)^\top \int_{\Omega^e} \left( \frac{\p \tilde{\q}^e}{\p \hat{\q}^e} \right)^\top \frac{\p \Psi}{\p \q}\bigg|_{\q(\tilde{\q}^e)} d\X. \notag
	\end{equation}
	Using $\frac{\p \hat{\q}^e}{\p \hat{\q}} = L_\q^e$ and $\frac{\p \tilde{\q}^e}{\p \hat{\q}^e} = N_\q^e$, this gives $\nabla_{\hat{\q}} \hat{H}(\hat{x}) = \check{e}_\q(\hat{\q})$. For the displacements, deriving the linear potential terms yields $\nabla_{\hat{r}_r} \hat{H}(\hat{x}) = -\hat{b}_r$, $\nabla_{\hat{r}_\Omega} \hat{H}(\hat{x}) = -\hat{b}_r$, and $\nabla_{\hat{r}_D} \hat{H}(\hat{x}) = -\hat{b}_D$.
	
	\textit{Step 2.} Substituting the local approximations, together with $\delta\dot{\hat{\q}}^e = L_\q^e \delta\dot{\hat{\q}}$ and $\hat{e}_\q^e = L_{e_\q}^e \hat{e}_\q$ into \eqref{eq:weak_q_stokes} yields:
	\begin{equation}
		\delta\dot{\hat{\q}}^\top \left[ \sum_{e=1}^{n_e} (L_\q^e)^\top \int_{\Omega^e} (N_\q^e)^\top N_{e_\q}^e \, d\X \, L_{e_\q}^e \, \hat{e}_\q - \sum_{e=1}^{n_e} (L_\q^e)^\top \int_{\Omega^e} (N_\q^e)^\top \frac{\p \Psi}{\p \q}\bigg|_{\q(\tilde{\q}^e)} d\X \right] = 0. \notag
	\end{equation}
	Recognizing the global definitions, this evaluates to $\delta\dot{\hat{\q}}^\top [ \hat{M}_{e\q}^\top \hat{e}_\q - \check{e}_\q(\hat{\q}) ] = 0$, implying $\hat{e}_\q = \hat{M}_{e\q}^{-\top} \check{e}_\q(\hat{\q})$. The condition $N_\q^e(\X) = A_\q^e N_{e_\q}^e(\X)$ with $A_\q^e = (A_\q^e)^\top>0$ guarantees $\hat{M}_{e\q}$ symmetric and invertible. Substituting $\delta\dot{\hat{p}}^e = L_p^e \delta\dot{\hat{p}}$ into \eqref{eq:weak_p_stokes} recovers the velocity kinematics from Theorem \ref{theo:discrete_lifted_jet}:
	\begin{equation}
		\dot{\hat{r}}_r = \hat{M}_{pr}^{-1}\hat{e}_p(\hat{p}) - \hat{M}_{pr}^{-1}\hat{M}_{pD} \hat{v}_D - \hat{v}_\Omega. \label{eq:proof_r_dot_stokes}
	\end{equation}
	Substituting $\delta\hat{e}_\q^e = L_{e_\q}^e \delta\hat{e}_\q$ into \eqref{eq:weak_e_stokes} and using $N_r^e = N_{r_\Omega}^e$ gives:
	\begin{equation}
		\delta\hat{e}_\q^\top \left[ \hat{M}_{e\q} \dot{\hat{\q}} - \hat{F}_r(\hat{r}) \dot{\hat{r}}_r - \hat{F}_D(\hat{r}) \hat{v}_D - \hat{F}_r(\hat{r}) \hat{v}_\Omega \right] = 0 \quad \implies \quad \dot{\hat{\q}} = \hat{M}_{e\q}^{-1} \big[ \hat{F}_r(\hat{r}) \dot{\hat{r}}_r + \hat{F}_D(\hat{r}) \hat{v}_D + \hat{F}_r(\hat{r}) \hat{v}_\Omega \big]. \notag
	\end{equation}
	Substituting \eqref{eq:proof_r_dot_stokes} into this expression yields:
	\begin{align}
		\dot{\hat{\q}} = & \; \hat{M}_{e\q}^{-1} \hat{F}_r(\hat{r}) \hat{M}_{pr}^{-1}\hat{e}_p(\hat{p}) + \hat{M}_{e\q}^{-1} \big( \hat{F}_D(\hat{r}) - \hat{F}_r(\hat{r}) \hat{M}_{pr}^{-1} \hat{M}_{pD} \big) \hat{v}_D - \hat{M}_{e\q}^{-1} \hat{F}_r(\hat{r}) \hat{v}_\Omega + \hat{M}_{e\q}^{-1} \hat{F}_r(\hat{r}) \hat{v}_\Omega  \notag \\[2mm]
		= & \; \hat{M}_{e\q}^{-1} \hat{F}_r(\hat{r}) \hat{M}_{pr}^{-1}\hat{e}_p(\hat{p}) + \hat{M}_{e\q}^{-1} \big( \hat{F}_D(\hat{r}) - \hat{F}_r(\hat{r}) \hat{M}_{pr}^{-1} \hat{M}_{pD} \big) \hat{v}_D. \label{eq:proof_q_dot_stokes}
	\end{align}
	\textit{Step 3.} To resolve the momentum balance, Lemma \ref{lemma:integration} is applied to \eqref{eq:weak_r_stokes} to transfer the differential operator, yielding the element-level weak form:
	\begin{equation}
		\int_{\Omega^e} \left[ \delta\dot{r}_r^{e\top} \dot{p}^e + \big(\mathcal{F}_\X(r^e)\delta\dot{r}_r^e\big)^\top e_\q^e - \delta\dot{r}_r^{e\top} b \right] d\X - \int_{\p\Omega^e_N} \delta\dot{r}_r^{e\top} \tau_N^e \, d\SX = 0. \notag
	\end{equation}
	Substituting the local to global mappings, assembling, and factoring out $\delta\dot{\hat{r}}_r^\top$ gives:
	$ \delta\dot{\hat{r}}_r^\top [ \hat{M}_{pr}^\top \dot{\hat{p}} + \hat{F}_{r}(\hat{r})^\top \hat{e}_\q - \hat{b}_r - \hat{B}_N \hat{\tau}_N ] = 0 $. Replacing $\hat{e}_\q = \hat{M}_{e\q}^{-\top} \check{e}_\q(\hat{\q})$ and isolating the momentum rate:
	\begin{equation}
		\dot{\hat{p}} = -\hat{M}_{pr}^{-\top}\hat{F}_{r}(\hat{r})^\top \hat{M}_{e\q}^{-\top} \check{e}_\q(\hat{\q}) + \hat{M}_{pr}^{-\top}\hat{b}_r + \hat{M}_{pr}^{-\top}\hat{B}_N \hat{\tau}_N. \label{eq:proof_p_dot_stokes}
	\end{equation}
	\textit{Step 4.} The lifting kinematics impose strongly $\dot{\hat{r}}_D = \hat{v}_D$ and $\dot{\hat{r}}_\Omega = \hat{v}_\Omega$. Grouping these expressions with equations \eqref{eq:proof_r_dot_stokes}, \eqref{eq:proof_q_dot_stokes}, and \eqref{eq:proof_p_dot_stokes} gives the PHS structure in \eqref{eq:theo_discrete_stokes_dyn}. The output ports are $\hat{y} = \hat{G}(\hat{x})^\top \nabla_{\hat{x}} \hat{H}(\hat{x}) = [\hat{y}_\p^\top \;\; \hat{y}_\Omega^\top]^\top$. The distributed interior output evaluates to: \\[-1mm]
	\begin{equation}
		\hat{y}_\Omega = -I_{N_F}^\top \nabla_{\hat{r}_r} \hat{H}(\hat{x}) + I_{N_F}^\top \nabla_{\hat{r}_\Omega} \hat{H}(\hat{x}) = -(-\hat{b}_r) + (-\hat{b}_r) = 0. \notag
	\end{equation}
	Then, the global power balance evaluates to $\dot{\hat{H}} = \hat{y}^\top \hat{u} = \hat{y}_\p^\top \hat{u}_\p$. This concludes the proof, showing the spatial discretization preserves the Stokes-Dirac PHS structure and its boundary power flow, while strongly imposing Dirichlet boundary velocities $\dot{\hat{r}}_D = \hat{v}_D$.
\end{proof}

\subsection{Proof of Proposition \ref{prop:kinematicEQ}} \label{app:prop7}

\begin{proof}
	Substituting the local approximations, together with $\delta\dot{\hat{\q}}^e = L_\q^e \delta\dot{\hat{\q}}$ and $\hat{e}_\q^e = L_{e_\q}^e \hat{e}_\q$ into \eqref{eq:weak_Work_GHP} yields: 
	\begin{equation}
		\delta \hat{e}_\q^\top \bigg[ \ub{\sum_{e=1}^{n_e}  (L_{e_\q}^e)^\top \!\!\! \int_{\Omega^e} (N_{e_\q}^e)^\top N_{\q}^e \, d\X \, L_{\q}^e }{\hat{M}_{e\q}} \hat{\q}(t) - \sum_{e=1}^{n_e} (L_{e_\q}^e)^\top \!\!\! \int_{\Omega^e} (N_{e_\q}^e)^\top \q(\tilde{r}^e) \, d\X \bigg] = 0. \notag
	\end{equation}
	Since $\delta \hat{e}_\q^\top $ is arbitrary, the above expression implies:
	$
	\hat{\q}(t) = \hat{M}_{e\q}^{-1} \sum_{e=1}^{n_e} (L_{e_\q}^e)^\top \! \int_{\Omega^e} (N_{e_\q}^e)^\top \q(\tilde{r}^e) \,d\X = \check{\q}(\hat{r}).
	$
\end{proof}



\bibliographystyle{ieeetr}

\bibliography{autosam}               

\end{document}